\providecommand{\U}[1]{\protect\rule{.1in}{.1in}}
\newtheorem{theorem}{Theorem}
\newtheorem{conjecture}[theorem]{Conjecture}
\newtheorem{corollary}[theorem]{Corollary}
\newtheorem{definition}[theorem]{Definition}
\newtheorem{lemma}[theorem]{Lemma}
\newtheorem{problem}[theorem]{Problem}
\newtheorem{proposition}[theorem]{Proposition}
\newenvironment{proof}[1][Proof]{\noindent\textbf{#1.} }{\ \rule{0.5em}{0.5em}}
\begin{document}

\title{The Need for Structure in Quantum Speedups}
\author{Scott Aaronson\thanks{MIT. \ Email: aaronson@csail.mit.edu. \ This material is
based upon work supported by the National Science Foundation under Grant No.
0844626,\ by a TIBCO Career Development Chair, and by an Alan T. Waterman
award.}\\MIT
\and Andris Ambainis\thanks{Email: ambainis@lu.lv. \ Supported by University of
Latvia Research Grant ZP01-100, FP7 Marie Curie International Reintegration
Grant (PIRG02-GA-2007-224886), FP7 FET-Open project QCS and ERC Advanced Grant
MQC (at the University of Latvia) and the National Science Foundation under
agreement No. DMS-1128155 (at IAS, Princeton). Any opinions, findings and
conclusions or recommendations expressed in this material are those of the
author(s) and do not necessarily reflect the views of the National Science
Foundation.}\\University of Latvia and IAS, Princeton}
\date{}
\maketitle

\begin{abstract}
Is there a general theorem that tells us when we can hope for exponential
speedups from quantum algorithms, and when we cannot? \ In this paper, we make
two advances toward such a theorem, in the black-box model where most quantum
algorithms operate.

First, we show that for any problem that is invariant under permuting inputs
and outputs and that has sufficiently many outputs (like the collision and
element distinctness problems), the quantum query complexity is at least the
$7^{th}$ root of the classical randomized query complexity. \ (An earlier
version of this paper gave the $9^{th}$\ root.) \ This resolves a conjecture
of Watrous from 2002.

Second, inspired by work of O'Donnell et al.\ (2005) and Dinur et al.\ (2006),
we conjecture that every bounded low-degree polynomial has a \textquotedblleft
highly influential\textquotedblright\ variable. \ Assuming this conjecture, we
show that every $T$-query quantum algorithm can be simulated on \textit{most}
inputs by a $T^{O(1)}$-query\ classical algorithm,\ and that one essentially
cannot hope to prove $\mathsf{P}\neq\mathsf{BQP}$ relative to a random oracle.

\end{abstract}

\section{Introduction\label{INTRO}}

Perhaps the central lesson gleaned from fifteen years of quantum algorithms
research is this:

\begin{quotation}
\noindent\textit{Quantum computers can offer superpolynomial speedups over
classical computers, but only for certain \textquotedblleft
structured\textquotedblright\ problems}.
\end{quotation}

The key question, of course, is what we mean by \textquotedblleft
structured.\textquotedblright\ \ In the context of most existing quantum
algorithms, \textquotedblleft structured\textquotedblright\ basically means
that we are trying to determine some global property of an extremely long
sequence of numbers, \textit{assuming} that the sequence satisfies some global
regularity. \ As a canonical example, consider \textsc{Period-Finding}, the
core of Shor's algorithms for factoring and computing discrete logarithms
\cite{shor}. \ Here we are given black-box access to an exponentially-long
sequence of integers $X=(x_{1},\ldots,x_{N})$; that is, we can compute $x_{i}%
$\ for a given $i$. \ We\ are asked to find the \textit{period} of $X$---that
is, the smallest $k>0$ such that $x_{i}=x_{i-k}$\ for all $i>k$---promised
that $X$ is indeed periodic, with period $k\ll N$ (and also that the $x_{i}%
$\ values are approximately distinct within each period). \ The requirement of
periodicity is crucial here: it is what lets us use the Quantum Fourier
Transform to extract the information we want from a superposition of the form%
\[
\frac{1}{\sqrt{N}}\sum_{i=1}^{N}\left\vert i\right\rangle \left\vert
x_{i}\right\rangle .
\]
For other known quantum algorithms, $X$ needs to be (for example) a cyclic
shift of quadratic residues \cite{dhi}, or constant on the cosets of a hidden subgroup.

By contrast, the canonical example of an \textquotedblleft
unstructured\textquotedblright\ problem is the Grover search problem. \ Here
we are given black-box access to an $N$-bit string $(x_{1},\ldots,x_{N}%
)\in\left\{  0,1\right\}  ^{N}$, and are asked whether there exists an $i$
such that $x_{i}=1$.\footnote{A variant asks us to \textit{find} an $i$ such
that\ $x_{i}=1$, under the mild promise that such an $i$ exists.} \ Grover
\cite{grover}\ gave a quantum algorithm to solve this problem using
$O(\sqrt{N})$\ queries \cite{grover}, as compared to the $\Omega\left(
N\right)  $\ needed classically. \ However, this quadratic speedup is optimal,
as shown by Bennett, Bernstein, Brassard, and Vazirani \cite{bbbv}. \ For
other \textquotedblleft unstructured\textquotedblright\ problems---such as
computing the \textsc{Parity} or \textsc{Majority} of an $N$-bit
string---quantum computers offer no asymptotic speedup at all over classical
computers (see Beals et al.\ \cite{bbcmw}).

Unfortunately, this \textquotedblleft need for structure\textquotedblright%
\ has essentially limited the prospects for superpolynomial quantum speedups
to those areas of mathematics that are liable to produce things like periodic
sequences or sequences of quadratic residues.\footnote{Here we exclude
$\mathsf{BQP}$-complete problems, such as simulating quantum physics (the
\textquotedblleft original\textquotedblright\ application of quantum
computers), approximating the Jones polynomial \cite{ajl}, and estimating a
linear functional of the solution of a well-conditioned linear system
\cite{hhl}.} \ \textit{This is the fundamental reason why the greatest
successes of quantum algorithms research have been in cryptography, and
specifically in number-theoretic cryptography.} \ It helps to explain why we
do not have a fast quantum algorithm to solve $\mathsf{NP}$-complete problems
(for example), or to break arbitrary one-way functions.

Given this history, the following problem takes on considerable importance:

\begin{problem}
[Informal]\label{theprob}For every \textquotedblleft
unstructured\textquotedblright\ problem $f$, are the quantum query complexity
$\operatorname*{Q}(f)$\ and the classical randomized query complexity
$\operatorname*{R}(f)$ polynomially related\textbf{?}
\end{problem}

Despite its apparent vagueness, Problem \ref{theprob}\ can be formalized in
several natural and convincing ways---and under these formalizations, the
problem has remained open for about a decade.

\subsection{Formalizing the Problem\label{FORMAL}}

Let $S\subseteq\left[  M\right]  ^{N}$\ be a collection of inputs, and let
$f:S\rightarrow\left\{  0,1\right\}  $\ be a function that we are trying to
compute. \ In this paper, we assume for simplicity that the range of $f$\ is
$\left\{  0,1\right\}  $; in other words, that we are trying to solve a
decision problem. \ It will also be convenient to think of $f$\ as a function
from $\left[  M\right]  ^{N}$\ to $\left\{  0,1,\ast\right\}  $, where $\ast
$\ means `undefined' (that is, that a given input $X\in\left[  M\right]  ^{N}%
$\ is not in $f$'s domain $S$).

We will work in the well-studied \textit{decision-tree model}. \ In this
model, given an input $X=(x_{1},\ldots,x_{N})$, an algorithm can at any time
choose an $i$ and receive $x_{i}$.\ \ We count only the number of queries the
algorithm makes to the $x_{i}$'s, ignoring other computational steps. \ Then
the \textit{deterministic query complexity} of $f$, or $\operatorname*{D}(f)$,
is the number of queries made by an optimal deterministic algorithm on a
worst-case input $X\in S$. \ The (bounded-error)\textit{ randomized query
complexity} $\operatorname*{R}(f)$\ is the expected number of queries made by
an optimal randomized algorithm that, for every $X\in S$, computes
$f(X)$\ with probability at least $2/3$. \ The (bounded-error)\textit{ quantum
query complexity} $\operatorname*{Q}(f)$\ is the same as $\operatorname*{R}%
(f)$, except that we allow quantum algorithms. \ Clearly $\operatorname*{Q}%
(f)\leq\operatorname*{R}(f)\leq\operatorname*{D}(f)\leq N$\ for all $f$. \ See
Buhrman and de Wolf \cite{bw}\ for detailed definitions as well as a survey of
these measures.

If $S=\left[  M\right]  ^{N}$, then we say $f$ is \textit{total}, and if
$M=2$, then we say $f$ is \textit{Boolean}.\ \ The case of total $f$ is
relatively well-understood. \ Already in 1998, Beals et al.\ \cite{bbcmw}%
\ showed the following:

\begin{theorem}
[Beals et al.]\label{bbcmwthm}$\operatorname*{D}\left(  f\right)
=O(\operatorname*{Q}(f)^{6})$ for all total Boolean functions $f:\left\{
0,1\right\}  ^{N}\rightarrow\left\{  0,1\right\}  $.
\end{theorem}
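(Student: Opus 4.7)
The plan is to establish $\operatorname{D}(f) = O(\operatorname{Q}(f)^6)$ by going through the \emph{approximate polynomial degree} $\widetilde{\deg}(f)$ and then leveraging classical relations between combinatorial measures of Boolean functions. The whole argument has four links.

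First, I would prove the polynomial method's central lemma: if $A$ is a $T$-query quantum algorithm with input $X\in\{0,1\}^N$, then the acceptance probability $p_A(X)$ is a multilinear polynomial in $x_1,\ldots,x_N$ of degree at most $2T$. The proof is by induction: after $t$ queries, each amplitude in the algorithm's state is a multilinear polynomial of degree $\leq t$ in the $x_i$'s (querying multiplies amplitudes by $x_i$ or $1-x_i$, unitaries take linear combinations), and squaring to get probabilities doubles the degree. Specializing to a $Q(f)$-query algorithm that computes $f$ with bounded error, $p_A$ approximates $f$ pointwise to within $1/3$, so $\widetilde{\deg}(f)\le 2\operatorname{Q}(f)$.

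Second, I would invoke the Nisan--Szegedy style bound translating approximate degree back to combinatorics: for every total Boolean $f$,
\[
\operatorname{bs}(f)\;\le\;C\cdot\widetilde{\deg}(f)^{2},
\]
where $\operatorname{bs}(f)$ is the block sensitivity. The argument picks a maximal set of disjoint sensitive blocks and restricts the approximating polynomial to the induced cube; applying a Markov/Ehlich--Zeller inequality to the resulting univariate polynomial on $\{0,1,\dots,\operatorname{bs}(f)\}$ forces $\widetilde{\deg}(f)^2\gtrsim\operatorname{bs}(f)$. This is the analytic core of the proof and, together with step one, will be the main obstacle, since the Markov-type inequality must be applied delicately so that the constants line up and so that totality of $f$ is actually used.

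Third, I would cite Nisan's classical chain for total Boolean $f$: certificate complexity satisfies $\operatorname{C}(f)\le\operatorname{bs}(f)^2$ (greedily collect certificates on each maximally sensitive block), and deterministic complexity satisfies $\operatorname{D}(f)\le\operatorname{C}_0(f)\cdot\operatorname{C}_1(f)\le\operatorname{C}(f)\cdot\operatorname{bs}(f)$ (query a minimal $0$-certificate; any consistent continuation must intersect every $1$-certificate on a sensitive block, and iterate). Hence $\operatorname{D}(f)\le\operatorname{bs}(f)^3$.

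Finally, chaining the inequalities gives
\[
\operatorname{D}(f)\;\le\;\operatorname{bs}(f)^{3}\;\le\;C^{3}\widetilde{\deg}(f)^{6}\;\le\;64\,C^{3}\operatorname{Q}(f)^{6},
\]
which is the claimed $O(\operatorname{Q}(f)^{6})$ bound. The quantum input to the argument is exclusively the polynomial-degree lemma of step one; all remaining steps are statements about Boolean functions and low-degree real polynomials, and totality of $f$ is used precisely where block sensitivity is converted into certificate and decision-tree complexity in step three.
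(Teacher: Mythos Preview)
Your proposal is correct and follows essentially the same route that the paper itself sketches (in the footnote immediately after Theorem~\ref{bbcmwthm}): combine $\operatorname{D}(f)=O(\operatorname{C}(f)\operatorname{bs}(f))$, $\operatorname{C}(f)=O(\operatorname{bs}(f)^2)$, and $\operatorname{bs}(f)=O(\operatorname{Q}(f)^2)$, the last via the polynomial method and a Markov-type inequality. Your four steps simply unpack these three ingredients with the intermediate quantity $\widetilde{\deg}(f)$ made explicit, which is exactly how Beals et al.\ prove $\operatorname{bs}(f)=O(\operatorname{Q}(f)^2)$.
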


Furthermore, it is easy to generalize Theorem \ref{bbcmwthm}\ to show
that\ $\operatorname*{D}\left(  f\right)  =O(\operatorname*{Q}(f)^{6})$\ for
\textit{all} total functions $f:\left[  M\right]  ^{N}\rightarrow\left\{
0,1\right\}  $, not necessarily Boolean.\footnote{Theorem \ref{bbcmwthm}\ is
proved by combining three ingredients: $\operatorname*{D}(f)=O\left(
\operatorname*{C}(f)\operatorname*{bs}(f)\right)  $, $\operatorname*{C}%
(f)=O(\operatorname*{bs}(f)^{2})$, and $\operatorname*{bs}%
(f)=O(\operatorname*{Q}(f)^{2})$\ (where $\operatorname*{C}(f)$\ is the
\textit{certificate complexity} of $f$ and $\operatorname*{bs}(f)$\ is the
\textit{block sensitivity}). \ And all three ingredients go through with no
essential change if we set $M>2$, and define suitable $M$-ary generalizations
of $\operatorname*{C}(f)$\ and $\operatorname*{bs}(f)$. \ (We could also
convert the non-Boolean function $f:\left[  M\right]  ^{N}\rightarrow\left\{
0,1\right\}  $\ to a Boolean one, but then we would lose a factor of $\log
M$.)} \ In other words, for total functions, the quantum query complexity is
always at least the $6^{th}$\ root of the classical query complexity. \ The
largest known gap between $\operatorname*{D}(f)$\ and $\operatorname*{Q}%
(f)$\ for a total function is quadratic, and is achieved by the
$\operatorname*{OR}$\ function (because of Grover's algorithm).

On the other hand, as soon as we allow non-total functions, we can get
enormous gaps.\ \ Aaronson \cite{aar:ph} gave a Boolean function
$f:S\rightarrow\left\{  0,1\right\}  $ for which $\operatorname*{R}%
(f)=N^{\Omega(1)}$, yet $\operatorname*{Q}(f)=O\left(  1\right)
$.\footnote{Previously, de Beaudrap, Cleve, and Watrous \cite{beaudrap}\ had
stated a similar randomized versus quantum separation. \ However, their
separation applied not to the standard quantum black-box model, but to a
different model in which the black box permutes the answer register
$\left\vert y\right\rangle $ in some unknown way (rather than simply mapping
$\left\vert y\right\rangle $\ to $\left\vert y\oplus f\left(  x\right)
\right\rangle $).} \ Other examples, for which $\operatorname*{R}%
(f)=\Omega(\sqrt{N})$\ and $\operatorname*{Q}(f)=O(\log N\log\log N)$,\ follow
easily from Simon's algorithm \cite{simon}\ and Shor's algorithm \cite{shor}.
\ Intuitively, these functions $f$\ achieve such large separations by being
highly structured: that is, their domain $S$ includes only inputs that satisfy
a stringent promise, such as encoding a periodic function, or (in the case of
\cite{aar:ph}) encoding two Boolean functions, one of which is correlated with
the Fourier transform of the other one.

By contrast with these highly-structured problems, consider the
\textit{collision problem}: that of deciding whether a sequence of numbers
$(x_{1},\ldots,x_{N})\in\left[  M\right]  ^{N}$\ is one-to-one (each number
appears once) or two-to-one (each number appears twice). \ Let \textsc{Col}%
$\left(  X\right)  =0$\ if $X$ is one-to-one and \textsc{Col}$\left(
X\right)  =1$\ if $X$ is two-to-one, promised that one of these is the case.
\ Then \textsc{Col}$\left(  X\right)  $ is not a total function, since its
definition involves a promise on $X$. \ Intuitively, however, the collision
problem seems much less \textquotedblleft structured\textquotedblright\ than
Simon's and Shor's problems. \ One way to formalize this intuition is as
follows. \ Call a partial function $f:\left[  M\right]  ^{N}\rightarrow
\left\{  0,1,\ast\right\}  $ \textit{permutation-invariant }if%
\[
f(x_{1},\ldots,x_{N})=f(\tau(x_{\sigma\left(  1\right)  }),\ldots
,\tau(x_{\sigma\left(  N\right)  }))
\]
for all inputs $X\in\left[  M\right]  ^{N}$\ and all permutations $\sigma\in
S_{N}$ and $\tau\in S_{M}$. \ Then \textsc{Col}$\left(  X\right)  $ is
permutation-invariant: we can permute a one-to-one sequence and relabel its
elements however we like, but it is still a one-to-one sequence, and likewise
for a two-to-one sequence. \ Because of this symmetry, attempts to solve the
collision problem using (for example) the Quantum Fourier Transform seem
unlikely to succeed. \ And indeed, in 2002 Aaronson \cite{aar:col}\ proved
that $\operatorname*{Q}\left(  \text{\textsc{Col}}\right)  =\Omega(N^{1/5})$:
that is, the quantum query complexity of the collision problem is at most
polynomially better than its randomized query complexity of $\Theta(\sqrt{N}%
)$. \ The quantum lower bound was later improved to $\Omega(N^{1/3})$\ by
Aaronson and Shi \cite{as}, matching an upper bound of Brassard, H\o yer, and
Tapp \cite{bht}.

Generalizing boldly from this example, John Watrous (personal communication)
conjectured that the randomized and quantum query complexities are
polynomially related for \textit{every} permutation-invariant problem:

\begin{conjecture}
[Watrous 2002]\label{watrousconj}$\operatorname*{R}(f)\leq\operatorname*{Q}%
(f)^{O\left(  1\right)  }$\ for every partial function $f:\left[  M\right]
^{N}\rightarrow\left\{  0,1,\ast\right\}  $ that is permutation-invariant.
\end{conjecture}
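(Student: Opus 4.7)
The plan is to combine the polynomial method with a symmetrization step, and then to bound $R(f)$ either by direct classical sampling of the collision structure of $X$ or by reducing to the task of distinguishing two ``hard'' cycle types. If $Q(f) = T$, then by Beals et al.\ there is a real polynomial $p$ of degree $d \leq 2T$ in the indicator variables $y_{i,j} = [x_i = j]$ that approximates $f$ to within $\tfrac{1}{3}$ on the domain $S$ of $f$. Because $f$ is invariant under $S_N \times S_M$, averaging $p$ over this group yields a doubly-symmetric polynomial $\bar{p}$ of the same degree that still approximates $f$. Every doubly-symmetric polynomial of degree $d$ in the $y_{i,j}$'s expands in terms of collision statistics $N_\lambda(X) = \#\{(i_1,\ldots,i_k) : x_{i_1} = \cdots = x_{i_k}\}$ and their multi-block generalizations, ranging over partitions $\lambda$ with $|\lambda| \leq d$. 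Hence $f(X)$ is, up to error $\tfrac{1}{3}$, a fixed polynomial in a bounded-dimensional summary $\Phi(X)$ of collision statistics.

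A natural classical algorithm then samples a uniformly random $U \subseteq [N]$ of size $s$, queries $X$ on $U$, computes each collision statistic within $U$, rescales to an unbiased estimate $\tilde{\Phi}$ of $\Phi$, and plugs $\tilde{\Phi}$ into the expansion of $\bar{p}$. A higher-order birthday-paradox analysis shows that a $k$-wise collision statistic can be estimated to multiplicative constant error using roughly $s = \tilde{O}(N^{1-1/k})$ queries, with the correct concentration bound provable by a moments/Chebyshev calculation on the number of $k$-tuples inside $U$ realizing pattern $\lambda$.

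The main obstacle is balancing the sample complexity against $T$: the naive combination gives $R(f) \leq N^{1 - 1/(2T)}$, which is \emph{not} polynomial in $T$ when $N \gg T$. To close the gap I would invoke Yao's minimax principle to extract from $R(f) = R$ a pair of cycle types $(\lambda_0,\lambda_1)$ that are classically $R$-query-indistinguishable; by the sampling analysis above, these two types must agree on every collision statistic of arity $k$ with $N^{1-1/k} \ll R$. Restricting $\bar{p}$ to a one-parameter family of distributions that smoothly interpolates $\lambda_0$ and $\lambda_1$ produces a univariate polynomial of degree $\leq d$ that must jump from near $0$ to near $1$ while its low-arity collision statistics barely move along the path; Markov- and Chebyshev-type univariate approximation bounds then force $d$, hence $T$, to grow as a fractional power of $R$. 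The constant exponent $9$ presumably comes from compounding the polynomial-method factor of $2$, the birthday-paradox exponent, and the quadratic loss from the univariate approximation bound; handling arbitrary pairs of cycle types (rather than one-to-one versus two-to-one, as in Aaronson--Shi) and correctly accounting for the multi-block partition statistics in the interpolating family will be the technical heart of the proof.
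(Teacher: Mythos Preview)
Your high-level shape---upper-bound $R(f)$ by a sampling algorithm that estimates multiplicity/collision data, extract a ``hard'' pair of types $(\mathcal{A}^*,\mathcal{B}^*)$ when that algorithm fails, and then lower-bound $Q(\mathcal{A}^*,\mathcal{B}^*)$---matches the paper's architecture. But the two proofs diverge sharply at the last step, and that is where your proposal has a genuine gap.

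The paper does \emph{not} lower-bound $Q(\mathcal{A}^*,\mathcal{B}^*)$ via the polynomial method or univariate approximation theory. Instead it runs a hybrid: a combinatorial ``chopping'' procedure builds a chain $\mathcal{A}^*=\mathcal{A}_0,\mathcal{A}_1,\ldots,\mathcal{A}_L=\mathcal{B}_L,\ldots,\mathcal{B}_0=\mathcal{B}^*$ of intermediate types (by repeatedly splitting off rows of the Young diagrams), and then lower-bounds each $Q(\mathcal{A}_\ell,\mathcal{A}_{\ell-1})$ by whichever of two tools applies---Ambainis's adversary bound when few rows are chopped, and a reduction from Midrijanis's \textsc{Set Equality} lower bound when many rows are chopped. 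The adversary method is doing essential work here; the paper explicitly lists ``reprove the bound using only the polynomial method'' as an open problem.

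Your plan instead posits a ``one-parameter family of distributions that smoothly interpolates $\lambda_0$ and $\lambda_1$'' and then appeals to Markov/Chebyshev-type bounds on the resulting univariate polynomial. For the collision problem (one-to-one vs.\ two-to-one) such an interpolation exists and is exactly how Aaronson--Shi works. But for an \emph{arbitrary} pair of types there is no evident one-parameter interpolation along which the low-arity collision statistics move slowly while $\bar p$ must jump; the space of types is high-dimensional, and the multi-block statistics you mention can vary independently. Constructing such a path, and proving that the degree bound forces a contradiction along it, is precisely the technical problem the paper's chopping-plus-adversary machinery is built to circumvent. As written, your proposal identifies the right obstacle but does not supply the missing idea that closes it.
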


Let us make two remarks about Conjecture \ref{watrousconj}. \ First, the
conjecture\ talks about \textit{randomized} versus quantum query complexity,
since in this setting, it is easy to find functions $f$ for
which\ $\operatorname*{R}(f)$\ and $\operatorname*{Q}(f)$\ are both tiny but
$\operatorname*{D}(f)$\ is huge. \ As an example, consider the
\textit{Deutsch-Jozsa problem} \cite{dj}: given a Boolean input $(x_{1}%
,\ldots,x_{N})$, decide whether the $x_{i}$'s are all equal or whether half of
them are $1$ and the other half are $0$, under the promise that one of these
is the case.

Second, if $M=2$\ (that is, $f$ is Boolean), then Conjecture \ref{watrousconj}%
\ follows relatively easily from known results:\ indeed, we prove in Appendix
\ref{BOOLEAN}\ that $\operatorname*{R}(f)=O(\operatorname*{Q}(f)^{2})$ in that
case. \ So the interesting case is when $M\gg2$, as it is for the collision problem.

Conjecture \ref{watrousconj} provides one natural way to formalize the idea
that classical and quantum query complexities should be polynomially related
for all \textquotedblleft unstructured\textquotedblright\ problems. \ A
different way is provided by the following conjecture, which we were aware of
since about 1999:

\begin{conjecture}
[folklore]\label{folkloreconj}Let $Q$ be a quantum algorithm that makes $T$
queries to a Boolean input $X=(x_{1},\ldots,x_{N})$, and let $\varepsilon>0$.
\ Then there exists a deterministic classical algorithm that makes
$\operatorname*{poly}(T,1/\varepsilon,1/\delta)$\ queries to the $x_{i}%
$'s,\ and that approximates $Q$'s acceptance probability to within an additive
error $\varepsilon$ on a $1-\delta$\ fraction of inputs.
\end{conjecture}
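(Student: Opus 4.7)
The plan is to use the polynomial representation of quantum query algorithms to reduce the conjecture to a question about classical approximation of bounded low-degree real polynomials, and then invoke the ``highly influential variable'' conjecture to drive a greedy classical decision tree. By Beals et al.\ \cite{bbcmw}, the acceptance probability of a $T$-query quantum algorithm on a Boolean input $X$ is a real multilinear polynomial $p:\{0,1\}^N\rightarrow[0,1]$ of degree $d\le 2T$. So it suffices to construct a deterministic decision tree of depth $\operatorname{poly}(d,1/\varepsilon,1/\delta)$ that, for a $1-\delta$ fraction of $X\in\{0,1\}^N$, outputs a value within $\varepsilon$ of $p(X)$.

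The algorithm is greedy. At an internal node reached along the path by restriction $\rho$, let $p_\rho$ be the restricted polynomial, still multilinear, of degree $\le d$, and valued in $[0,1]$. As long as the conjecture supplies some $i$ with $\operatorname{Inf}_i(p_\rho)\ge\operatorname{Var}(p_\rho)/\operatorname{poly}(d)$, query that $x_i$ and descend; otherwise declare the node a leaf and output $\mathbb{E}[p_\rho]$, the expectation under a uniformly random completion, which is computable directly from $p$. The analysis tracks the potential $\Phi:=\mathbb{E}_X[\operatorname{Var}(p_{\rho(X)})]$, where $\rho(X)$ is the leaf reached by $X$. The law-of-total-variance identity $\operatorname{Var}(p_\rho)=\mathbb{E}_{x_i}[\operatorname{Var}(p_{\rho\cdot x_i})]+\operatorname{Var}_{x_i}(\mathbb{E}[p_{\rho\cdot x_i}])$ implies (granting the gap between influence and variance reduction discussed below) that each level of the tree decreases $\Phi$ by a factor of at least $1-1/\operatorname{poly}(d)$, so after $\operatorname{poly}(d,\log(1/\delta\varepsilon))$ levels one has $\Phi\le\delta\varepsilon^2/4$. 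Markov's inequality then bounds by $\delta$ the probability that $X$ lands at a leaf of residual variance exceeding $\varepsilon^2/4$, and Chebyshev's inequality at every other leaf gives $|p(X)-\mathbb{E}[p_{\rho(X)}]|\le\varepsilon$, yielding the desired approximation.

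The main obstacle is the ``highly influential variable'' conjecture itself: that every multilinear $p:\{0,1\}^N\rightarrow[0,1]$ of degree $d$ and variance $\sigma^2$ has some $i$ with $\operatorname{Inf}_i(p)\ge\sigma^2/\operatorname{poly}(d)$. The naive sum bound $\sum_i\operatorname{Inf}_i(p)\le d\sigma^2$ yields only an average influence of order $\sigma^2 d/N$, which is useless when $N\gg d$. For $\{0,1\}$-valued $p$ the statement follows from Friedgut--Kalai--Naor and KKL-type Fourier-tail arguments, but extending it to arbitrary $[0,1]$-valued low-degree polynomials is genuinely harder and is precisely the kind of statement that the cited work of O'Donnell et al.\ and Dinur et al.\ appears designed to address; a natural attack is to argue that if every influence were $o(\sigma^2/\operatorname{poly}(d))$ then hypercontractivity or an appropriate level-$k$ inequality would force $p$ to be $L^2$-close to its mean, contradicting its variance.

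A secondary technical obstacle, even granting the conjecture, is that influence does not directly translate into variance reduction upon querying a variable: the expected per-query drop equals $\hat{p_\rho}(\{i\})^2$, which can be much smaller than $\operatorname{Inf}_i(p_\rho)$. Bridging this gap requires either iterating the conjecture so that the nonlinear mass accumulated on $x_i$ eventually surfaces as a degree-$1$ Fourier coefficient in a suitable restriction, or strengthening the conjecture to guarantee a variable with a large degree-$1$ Fourier coefficient after an average random restriction of a few other variables. Either path looks hard but plausible, and even partial progress---say for $p$ supported on a single Fourier level, or for polynomials close to juntas---would already unconditionally simulate restricted classes of quantum algorithms in the sense of the conjecture.
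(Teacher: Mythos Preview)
Your high-level plan---reduce to a bounded low-degree polynomial via Beals et al., then run a greedy decision tree driven by Conjecture~\ref{infconj0}---is exactly what the paper does in Theorem~\ref{impthm}. The genuine gap in your proposal is precisely the ``secondary technical obstacle'' you yourself flag, and the paper's fix is simpler than the strengthenings you contemplate.

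The paper does \emph{not} use variance as the potential. Instead it tracks the total $L^1$-influence $\operatorname{SumInf}[p_\rho]=\sum_i\operatorname{Inf}_i[p_\rho]$. Querying $x_i$ removes that variable, and a one-line averaging computation shows that the expected $\operatorname{SumInf}$ of the restricted polynomial drops by exactly $\operatorname{Inf}_i[p_\rho]$. So once Conjecture~\ref{infconj0} hands you a variable with $\operatorname{Inf}_i[p_\rho]\ge q(\varepsilon\delta/T)$, each query buys $q(\varepsilon\delta/T)$ units of progress directly---no need to relate influence to the degree-$1$ Fourier coefficient or to variance reduction.

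The remaining ingredient you are missing is the bound on the \emph{initial} potential. For an arbitrary degree-$d$ bounded polynomial one only has $\operatorname{SumInf}[p]\le O(\sqrt{Nd})$ or the like, which is useless. The paper instead invokes a quantum-specific fact (Lemma~\ref{qlbinf}, essentially Shi~\cite{shi:inf}): if $p$ is the acceptance probability of a $T$-query quantum algorithm then $\operatorname{SumInf}[p]=O(T)$, independent of $N$. This is where the argument uses more than ``$p$ has low degree,'' and it is what lets the greedy loop terminate after $O(T)/q(\varepsilon\delta/T)=\operatorname{poly}(T,1/\varepsilon,1/\delta)$ steps. With this potential, your law-of-total-variance detour, the worry about $\hat{p_\rho}(\{i\})^2$ versus $\operatorname{Inf}_i$, and the proposed iterated or strengthened conjectures all become unnecessary.
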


But what exactly does Conjecture \ref{folkloreconj} have to do with
\textquotedblleft the need for structure in quantum speedups\textquotedblright%
? \ With Conjecture \ref{watrousconj}, the connection to this paper's theme
was more-or-less obvious, but with Conjecture \ref{folkloreconj}, some
additional explanation is probably needed.

Intuitively, we want to say the following: in order to achieve a
superpolynomial speedup in the black-box model, a quantum computer needs not
merely a promise problem, but a \textquotedblleft severely
constrained\textquotedblright\ promise problem. \ In other words, only a
\textit{minuscule fraction} of the $2^{N}$\ oracle strings $X=(x_{1}%
,\ldots,x_{N})$ ought to satisfy the promise---precisely like what happens in
Simon's and Shor's problems, where the promise asserts that $X$ encodes a
periodic function. \ If the promise is too \textquotedblleft
mild\textquotedblright---if, say, it holds for all $X$ in some set
$S\subseteq\left\{  0,1\right\}  ^{N}$\ with $\left\vert S\right\vert
=\Omega(2^{N})$---then we should be back in the situation studied by Beals et
al.\ \cite{bbcmw}, where the oracle $X$\ lacked enough \textquotedblleft
structure\textquotedblright\ for a Shor-like algorithm to exploit, and as a
result, the best one could hope for was a \textit{polynomial} quantum speedup
like that of Grover's algorithm.

Yet, if we interpret the above intuition too na\"{\i}vely, then it is easy to
find counterexamples. \ To illustrate, let $S_{1}$\ consist of all strings
$X\in\left\{  0,1\right\}  ^{N}$\ that encode valid inputs to Simon's problem,
let $S_{0}$\ consist of all $Y\in\left\{  0,1\right\}  ^{N}$\ that have
Hamming distance at least $N/10$\ from every $X\in S_{1}$, and let
$S=S_{0}\cup S_{1}$.\ \ Then define a Boolean function
$f_{\operatorname*{Simon}}:S\rightarrow\left\{  0,1\right\}  $\ by
$f_{\operatorname*{Simon}}(X)=1$\ for all $X\in S_{1}$, and
$f_{\operatorname*{Simon}}(X)=0$\ for all $X\in S_{0}$. \ As observed by
Buhrman et al.\ \cite{bfnr}\ (see also Ambainis and de Wolf \cite{aw:average}
and Hemaspaandra, Hemaspaandra, and Zimand \cite{hhz}), this \textquotedblleft
property-testing version of Simon's problem\textquotedblright\ achieves an
exponential separation between randomized and quantum query complexities:
$\operatorname*{R}(f_{\operatorname*{Simon}})=\Omega(\sqrt{N/\log N})$\ while
$\operatorname*{Q}(f_{\operatorname*{Simon}})=O(\log N)$. \ But the promise is
certainly \textquotedblleft mild\textquotedblright: indeed $\left\vert
S\right\vert \geq2^{N}-2^{cN}$ for some constant $c<1$.

On the other hand, examining this counterexample more closely\ suggests a way
to salvage our original intuition. \ For notice that there exists a fast,
deterministic classical algorithm that correctly evaluates
$f_{\operatorname*{Simon}}(X)$ on \textit{almost all inputs} $X\in S$: namely,
the algorithm that always outputs $0$! \ This algorithm errs only on the
minuscule fraction of inputs $X\in S$ that happen to belong to $S_{1}$.
\ Thus, we might conjecture that this points to a general phenomenon: namely,
whenever there exists a fast quantum algorithm to compute a Boolean function
$f:S\rightarrow\left\{  0,1\right\}  $ with $\left\vert S\right\vert
=\Omega\left(  2^{N}\right)  $, there also exists a fast classical algorithm
to compute $f(X)$\ on \textit{most} inputs $X\in S$. \ In Appendix
\ref{EQUIV}, we will prove that Conjecture \ref{folkloreconj} is equivalent to
this conjecture.

Indeed, Conjecture \ref{folkloreconj}\ readily implies a far-reaching
generalization of the result of Beals et al.\ \cite{bbcmw} stating that
$\operatorname*{D}(f)=O(\operatorname*{Q}(f)^{6})$\ for all total Boolean
functions $f$. \ In particular, define the $\varepsilon$\textit{-approximate
query complexity} of a Boolean function $f:\left\{  0,1\right\}
^{N}\rightarrow\left\{  0,1\right\}  $, or $\operatorname*{D}_{\varepsilon
}(f)$, to be the minimum number of queries made by a deterministic algorithm
that evaluates $f$\ correctly on at least a $1-\varepsilon$\ fraction of
inputs $X$. \ Likewise, let $\operatorname*{Q}_{\varepsilon}(f)$\ be the
minimum number of queries made by a quantum algorithm that evaluates
$f$\ correctly on at least a $1-\varepsilon$\ fraction of inputs. \ Then
Conjecture \ref{folkloreconj}\ implies that $\operatorname*{D}_{\varepsilon
}(f)$\ and $\operatorname*{Q}_{\delta}(f)$ are polynomially related for all
Boolean functions $f$ and all constants $\varepsilon>\delta>0$ independent of
$N$.\footnote{More generally, as we will show in Corollary \ref{dqcor}, the
relation we obtain is\ $\operatorname*{D}_{\varepsilon+\delta}(f)\leq\left(
\operatorname*{Q}_{\varepsilon}(f)/\delta\right)  ^{O\left(  1\right)  }$ for
all $\varepsilon,\delta>0$.} \ This would provide a quantum counterpart to a
beautiful 2002 result of Smyth \cite{smyth},\ who solved an old open problem
of Steven Rudich by showing that $\operatorname*{D}_{\varepsilon
}(f)=O(\operatorname*{C}_{\varepsilon^{3}/30}(f)^{2}/\varepsilon^{3})$\ for
all $\varepsilon>0$ (where $\operatorname*{C}_{\delta}(f)$\ denotes the
\textquotedblleft$\delta$-approximate certificate complexity\textquotedblright%
\ of $f$).

More dramatically, if Conjecture \ref{folkloreconj}\ holds, then \textit{we
basically cannot hope to prove }$\mathsf{P}\neq\mathsf{BQP}$\textit{ relative
to a random oracle.} \ This would answer a question raised by Fortnow and
Rogers \cite{fr}\ in 1998, and would contrast sharply with the situation for
\textit{non}-random oracles: we have had oracles relative to which
$\mathsf{P}\neq\mathsf{BQP}$, and indeed $\mathsf{BQP}\not \subset
\mathsf{MA}$, since the work of Bernstein and Vazirani \cite{bv}\ in the early
1990s. \ More precisely, under some suitable complexity assumption (such as
$\mathsf{P}=\mathsf{P}^{\mathsf{\#P}}$), we would get $\mathsf{BQP}^{A}%
\subset\mathsf{AvgP}^{A}$\ with probability $1$ for a random oracle $A$.
\ Here $\mathsf{AvgP}$\ is the class of languages for which there exists a
polynomial-time algorithm that solves a\ $1-o\left(  1\right)  $\ fraction of
instances of size $n$. \ In other words, separating $\mathsf{BQP}$ from
$\mathsf{AvgP}$\ relative to a random oracle would be as hard as separating
complexity classes in the unrelativized world. \ This would provide a quantum
counterpart to a theorem of Impagliazzo and Rudich (credited in \cite{kss}),
who used the powerful results of Kahn, Saks, and Smyth \cite{kss} to show that
if $\mathsf{P}=\mathsf{NP}$, then $\mathsf{NP}^{A}\cap\mathsf{coNP}^{A}%
\subset\mathsf{ioAvgP}^{A}$\ with probability $1$ for a random oracle
$A$.\footnote{Here $\mathsf{ioAvgP}$\ means \textquotedblleft average-case
$\mathsf{P}$\ for infinitely many input lengths $n$.\textquotedblright\ \ The
reason Impagliazzo and Rudich only get a simulation in $\mathsf{ioAvgP}$,
rather than $\mathsf{AvgP}$, has to do with the fact that Smyth's result
\cite{smyth} only relates $\operatorname*{D}_{\varepsilon}(f)$\ to
$\operatorname*{C}_{\varepsilon^{3}/30}(f)$, rather than $\operatorname*{D}%
_{\varepsilon+\delta}(f)$\ to $\operatorname*{C}_{\varepsilon}(f)$\ for all
$\delta>0$.}

\subsection{Our Results\label{RESULTS}}

Our main contribution in this paper is essentially to prove Watrous's
conjecture (Conjecture \ref{watrousconj}), that randomized and quantum query
complexities are polynomially related for every symmetric problem.

\begin{theorem}
\label{mainthm}$\operatorname*{R}(f)=O(\operatorname*{Q}(f)^{7}%
\operatorname*{polylog}\operatorname*{Q}(f))$ for every partial function
$f:\left[  M\right]  ^{N}\rightarrow\left\{  0,1,\ast\right\}  $ that is permutation-invariant.
\end{theorem}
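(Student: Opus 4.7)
The plan is to prove Theorem~\ref{mainthm} in three stages. First, use the permutation-invariance of $f$ to reduce the problem to a statement about ``types'' (equivalently, integer partitions of $N$ describing the value-multiplicities in $X$). Second, characterize the randomized query complexity of $f$ as the sample complexity of distinguishing 0-types from 1-types. Third, prove a quantum lower bound of the form $Q(f) = \Omega(s(f)^{1/9})$ via the polynomial method, generalizing the collision lower bound of Aaronson and Shi \cite{as}.

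For the first stage: since $f(X)$ is invariant under permuting positions by $S_N$ and relabeling values by $S_M$, it depends only on the \emph{type} $\lambda(X)$, namely the multiset of multiplicities $\{m_v : v\in [M]\}$ where $m_v = |\{i : X_i = v\}|$. Equivalently, $\lambda(X)$ is a partition of $N$ into at most $M$ parts. So I identify $f$ with a function $g$ on such partitions. For the second stage, I would show $R(f) = \widetilde{\Theta}(s(f))$, where $s(f)$ is the least $k$ such that the distribution of sample-types produced by $k$ uniformly random position queries distinguishes $g^{-1}(0)$-types from $g^{-1}(1)$-types with constant advantage. The upper bound is immediate. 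For the lower bound, I invoke an averaging trick: given any randomized algorithm $A$, define $A'$ that first applies a uniformly random $\sigma\in S_N$ and $\tau\in S_M$ to the input before running $A$; by permutation-invariance of $f$, $A'$ has the same success probability, but from $A'$'s viewpoint positions are uniformly random and values carry no information beyond their collision pattern, so only the sample's type matters.

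For the third stage, by an averaging argument over the (at most polynomially many relevant) type pairs, I locate a specific pair $\lambda^0,\lambda^1$ with $g(\lambda^0)=0$ and $g(\lambda^1)=1$ whose pairwise sampling distinguishability $s(\lambda^0,\lambda^1)$ is $\Omega(R(f)/\mathrm{polylog}\,N)$. It then suffices to prove that any $T$-query quantum algorithm distinguishing $\lambda^0$ from $\lambda^1$ satisfies $T = \Omega(s(\lambda^0,\lambda^1)^{1/9})$. Apply the polynomial method: the acceptance probability of such an algorithm is a multilinear polynomial of degree $\leq 2T$ in the indicator variables $x_{i,v} = [X_i = v]$. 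Symmetrizing this polynomial under the action of $S_N \times S_M$ produces a polynomial $\tilde p$ of the same degree in the multiplicity histogram coordinates $(n_1,n_2,\ldots)$, where $n_r$ is the number of values appearing exactly $r$ times. This $\tilde p$ must take values $\approx 0$ on histograms of type $\lambda^0$ and $\approx 1$ on those of type $\lambda^1$.

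The main obstacle is obtaining the degree lower bound on $\tilde p$. In the Aaronson--Shi proof for collision, one has $\lambda^0 = (1^N)$ and $\lambda^1 = (g^{N/g})$, a one-parameter family along which Paturi's inequality applies directly. Here I need this argument in a multi-dimensional histogram space and for arbitrary pairs of types. My plan is to interpolate between $\lambda^0$ and $\lambda^1$ along a carefully chosen one-parameter curve of intermediate types that preserves the relevant sampling statistics, so as to reduce to a one-variable Paturi-style bound, paying a constant-factor loss in the exponent at each reduction step (one loss from reducing the general problem to a two-type distinguishing problem, one from converting sampling distinguishability to polynomial distinguishability, and one from the Paturi bound itself). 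The cumulative loss is what produces the $9^{th}$-root exponent rather than the tight $2/3$-power that holds for collision; sharpening each step should in principle give a much better exponent, but $1/9$ suffices to resolve Conjecture~\ref{watrousconj}.
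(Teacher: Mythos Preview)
Your first two stages are in the right spirit and roughly match the paper's setup. The third stage, however, is where the real difficulty lies, and your plan there has a genuine gap: you propose to prove the quantum lower bound via the polynomial method, by symmetrizing the acceptance polynomial and then ``interpolating between $\lambda^0$ and $\lambda^1$ along a carefully chosen one-parameter curve'' to reduce to a Paturi-style bound. But this interpolation is precisely the hard part, and you give no indication of what the curve is or why it exists. In fact, the paper explicitly lists reproving the result using only the polynomial method as an \emph{open problem} (see Section~\ref{OPEN}). So as stated, your plan is a hope rather than an argument.

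The paper's actual approach is quite different. After identifying the hard pair $\mathcal{A}^*,\mathcal{B}^*$, it constructs an explicit sequence of $O(\log N)$ intermediate types via a ``chopping procedure'' on the Young diagrams, making $\mathcal{A}_\ell$ and $\mathcal{B}_\ell$ progressively more similar until $\mathcal{A}_L=\mathcal{B}_L$. The quantum lower bound for each adjacent pair $(\mathcal{A}_{\ell-1},\mathcal{A}_\ell)$ is then obtained not by the polynomial method but by a \emph{dichotomy}: if few rows were chopped, apply Ambainis's adversary method directly; if many rows were chopped, reduce from the \textsc{Set Equality} problem and invoke Midrijanis's $\Omega((N/\log N)^{1/5})$ lower bound. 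The $9$th power arises from combining these two bounds (which balance at $r\approx 2^{5\ell/7}$) with the sampling parameter $c=2/7$, not from any accumulation of Paturi losses. A further subtlety you miss: your averaging step introduces a $\operatorname{polylog} N$ factor, which does \emph{not} give a polynomial relationship when $Q(f)\le(\log N)^{o(1)}$; the paper needs a weighted hybrid argument (with weights $\beta_\ell=1/(10\ell^2)$) to replace $\operatorname{polylog} N$ by $\operatorname{polylog} Q(f)$.
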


We conjecture that $\operatorname*{R}(f)$\ and $\operatorname*{Q}(f)$\ are
polynomially related even for functions $f$ satisfying \textit{one} of the two
symmetries: namely, $f(x_{1},\ldots,x_{N})=f(x_{\sigma\left(  1\right)
},\ldots,x_{\sigma\left(  N\right)  })$ for all $\sigma\in S_{N}$. \ We also
conjecture that the exponent of $7$ can be improved to $2$: in other words,
that Grover's algorithm once again provides the optimal separation between the
quantum and classical models.

While the proof of Theorem \ref{mainthm} is somewhat involved, it can be
entirely understood by those unfamiliar with quantum computing: the
difficulties lie in getting the problem into a form where existing quantum
lower bound technology can be applied to it. \ Let us stress that it was not
at all obvious \textit{a priori} that existing quantum lower bounds would
suffice here; that they did came as a surprise to us.

We first define and analyze a simple randomized algorithm, which tries to
compute $f(X)$\ for a given $X=(x_{1},\ldots,x_{N})$\ by estimating the
multiplicity of each element $x_{i}$. \ Next, by considering where this
randomized algorithm breaks down, we show that one can identify a
\textquotedblleft hard core\textquotedblright\ within $f$: roughly speaking,
two input types $\mathcal{A}^{\ast}$ and $\mathcal{B}^{\ast}$, such that the
difficulty of distinguishing $\mathcal{A}^{\ast}$\ from $\mathcal{B}^{\ast}$
accounts for a polynomial fraction of the entire difficulty of computing $f$.
\ The rest of the proof consists of lower-bounding the quantum query
complexity of distinguishing $\mathcal{A}^{\ast}$\ from $\mathcal{B}^{\ast}$.
\ We do so using a hybrid argument: we develop a \textquotedblleft chopping
procedure\textquotedblright\ that gradually deforms $\mathcal{A}^{\ast}$ to
make it more similar to $\mathcal{B}^{\ast}$, creating a sequence of
intermediate input types $\mathcal{A}_{0}=\mathcal{A}^{\ast},\mathcal{A}%
_{1},\mathcal{A}_{2},\ldots,\mathcal{A}_{2L}=\mathcal{B}^{\ast}$. \ We then
show that, for every $\ell\in\left[  L\right]  $, distinguishing
$\mathcal{A}_{\ell}$\ from $\mathcal{A}_{\ell-1}$\ requires many quantum
queries, \textit{either} by a reduction from Zhandry's recent $\Omega
(N^{1/3})$\ quantum lower bound for the \textsc{SetEquality} problem
\cite{zhandry:col} (which is a nontrivial generalization of Aaronson and Shi's
collision lower bound \cite{as}), or \textit{else} by an application of
Ambainis's general quantum adversary theorem \cite{ambainis}.

Note that, prior to Zhandry's $\Omega(N^{1/3})$\ quantum lower bound for
\textsc{SetEquality}, Midrijanis \cite{midrijanis}\ had proved a lower bound
of $\Omega((N/\log N)^{1/5})$; the latter was the first quantum lower bound
for \textsc{SetEquality}, and the only one for nearly a decade. \ An earlier
version of this paper used Midrijanis's lower bound to show that
$\operatorname*{R}(f)=O(\operatorname*{Q}(f)^{9}\operatorname*{polylog}%
\operatorname*{Q}(f))$ for all permutation-symmetric $f$. \ The improvement to
$\operatorname*{R}(f)=O(\operatorname*{Q}(f)^{7}\operatorname*{polylog}%
\operatorname*{Q}(f))$\ in the current version comes entirely from Zhandry's
improvement of the\ \textsc{SetEquality}\ lower bound to the optimal
$\Omega(N^{1/3})$.

Doing the hybrid argument in the \textquotedblleft obvious\textquotedblright%
\ way produces a bound of the form $\operatorname*{R}(f)\leq\operatorname*{Q}%
(f)^{O\left(  1\right)  }\operatorname*{polylog}N$, which fails to imply a
polynomial relationship between $\operatorname*{R}(f)$\ and $\operatorname*{Q}%
(f)$\ when $\operatorname*{Q}(f)\leq\left(  \log N\right)  ^{o\left(
1\right)  }$. \ However, a more sophisticated hybrid argument eliminates the
$\operatorname*{polylog}N$ factor.

Our second contribution is more exploratory, something we put forward in the
hope of inspiring followup work. \ We study Conjecture \ref{folkloreconj},
which states that every\ $T$-query quantum algorithm can be simulated on
\textit{most} inputs using $T^{O\left(  1\right)  }$\ classical queries. \ We
relate this conjecture to a fundamental open problem in Fourier analysis and
approximation theory.\ Given a real polynomial $p:\mathbb{R}^{N}%
\rightarrow\mathbb{R}$, let%
\[
\operatorname*{Inf}\nolimits_{i}\left[  p\right]  :=\operatorname*{E}%
_{X\in\left\{  0,1\right\}  ^{N}}\left[  (p(X)-p(X^{i}))^{2}\right]
\]
be the \textit{influence} of the $i^{th}$ variable, where $X^{i}$\ means
$X$\ with the $i^{th}$\ bit flipped. \ Then we conjecture that \textit{every
bounded low-degree polynomial has a \textquotedblleft highly
influential\textquotedblright\ variable}. \ More precisely:

\begin{conjecture}
[Bounded Polynomials Have Influential Variables]\label{infconj}Let
$p:\mathbb{R}^{N}\rightarrow\mathbb{R}$\ be a polynomial of degree $d$.
\ Suppose that $0\leq p(X)\leq1$\ for all $X\in\left\{  0,1\right\}  ^{N}$,
and%
\[
\operatorname*{E}_{X\in\left\{  0,1\right\}  ^{N}}\left[
(p(X)-\operatorname*{E}\left[  p\right]  )^{2}\right]  \geq\varepsilon.
\]
Then there exists an $i$ such that $\operatorname*{Inf}_{i}\left[  p\right]
\geq(\varepsilon/d)^{O\left(  1\right)  }$.
\end{conjecture}

We show the following:

\begin{theorem}
\label{consequences}Assume Conjecture \ref{infconj}. \ Then

\begin{enumerate}
\item[(i)] Conjecture \ref{folkloreconj} holds.

\item[(ii)] $\operatorname*{D}_{\varepsilon+\delta}(f)\leq\left(
\operatorname*{Q}_{\varepsilon}(f)/\delta\right)  ^{O\left(  1\right)  }$ for
all Boolean functions $f:\left\{  0,1\right\}  ^{N}\rightarrow\left\{
0,1\right\}  $\ and all $\varepsilon,\delta>0$.

\item[(iii)] If $\mathsf{P}=\mathsf{P}^{\mathsf{\#P}}$, then $\mathsf{BQP}%
^{A}\subset\mathsf{AvgP}^{A}$\ with probability $1$ for a random oracle $A$.
\end{enumerate}
\end{theorem}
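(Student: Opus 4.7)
The plan is to establish (i) and then derive (ii) and (iii) from (i). For (i), let $Q$ be a $T$-query quantum algorithm on inputs $X\in\{0,1\}^N$. By the polynomial method of Beals et al.~\cite{bbcmw}, $p(X):=\Pr[Q\text{ accepts }X]$ is a real multilinear polynomial of degree $d\le 2T$ with $0\le p(X)\le 1$ on $\{0,1\}^N$. The classical simulator $\mathcal{T}$ is a deterministic decision tree built greedily: at a node labeled by a partial assignment $\rho$, form the restricted polynomial $p_\rho$ on the consistent subcube and consider its spread
\[
U_\rho\;:=\;\operatorname*{E}_{X,Y\in\{0,1\}^{N-|\rho|}}\!\bigl[\,|p_\rho(X)-p_\rho(Y)|\,\bigr].
\]
If $U_\rho<\eta$ for a threshold $\eta=\Theta(\varepsilon\delta)$, stop the branch and label the leaf with $\bar p_\rho:=\operatorname*{E}[p_\rho]$; otherwise, apply Conjecture~\ref{infconj0} to $p_\rho$ to obtain an as-yet-unqueried variable $i$ with $\operatorname*{Inf}_{i}[p_\rho]\ge(\eta/d)^{c}$ (for an absolute constant $c$), and query it.

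To bound the depth of $\mathcal{T}$ by $\operatorname*{poly}(d,1/\varepsilon,1/\delta)$ I would introduce a non-negative potential $\Phi$ on partial restrictions with $\Phi(\emptyset)=O(1)$, arguing that each greedy query shrinks the weighted-leaf average $\sum_{\text{leaves }\ell}\Pr[\ell]\,\Phi(\ell)$ by at least $(\eta/d)^{O(1)}$. The two obvious candidates are $\Phi=U_\rho$ itself (directly matched by the conjecture) and $\Phi=\operatorname*{Var}(p_\rho)$ (whose per-query drop equals the square of a first-order Fourier coefficient); a workable choice will likely combine them, using low-degree hypercontractivity (Bonami--Beckner) to compare $L_1$ and $L_2$ quantities up to a factor depending only polynomially on $d$. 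Once the weighted average leaf spread drops below $\varepsilon\delta$, Markov's inequality over the leaves shows that on a $1-\delta$ fraction of $X$ the output $\bar p_{\rho(X)}$ lies within $\varepsilon$ of $p(X)$, establishing Conjecture~\ref{folkloreconj}.

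For (ii), let $Q$ be a quantum algorithm with $T=\operatorname*{Q}_\varepsilon(f)$ queries that is $2/3$-correct on a $1-\varepsilon$ fraction of inputs; its acceptance polynomial $p$ then satisfies $p(X)\ge 2/3$ or $p(X)\le 1/3$, according to $f(X)$, on that fraction. Running (i) with accuracy $1/6$ and fail-rate $\delta$ produces a deterministic decision tree of depth $\operatorname*{poly}(T,1/\delta)$ which, by a union bound, decides $f$ correctly on a $1-\varepsilon-\delta$ fraction of inputs, giving $\operatorname*{D}_{\varepsilon+\delta}(f)\le(\operatorname*{Q}_\varepsilon(f)/\delta)^{O(1)}$. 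For (iii), let $L\in\mathsf{BQP}^A$ be decided by a polynomial-time quantum machine $M$ making $T(n)=\operatorname*{poly}(n)$ oracle queries. For each input $x$ of length $n$, the acceptance probability $p_x(A)$ is a polynomial of degree $\le 2T(n)$ in the oracle bits, so (ii) yields a decision tree $D_x$ of depth $\operatorname*{poly}(n/\delta_n)$ that correctly computes $L^A(x)$ on a $1-\delta_n$ fraction of oracles. Under $\mathsf{P}=\mathsf{P}^{\mathsf{\#P}}$, every ingredient needed to build $D_x$ from $(M,x)$---the acceptance probability of $M^{A'}(x)$ on any specific hypothetical $A'$ (a $\mathsf{\#P}$ computation) and the spreads and influences used by the greedy rule (estimated by sampling, hence in $\mathsf{BPP}\subseteq\mathsf{P}^{\mathsf{\#P}}$)---collapses to deterministic polynomial time, so $D_x$ is uniformly $\mathsf{P}$-constructible. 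Choosing $\delta_n=n^{-3}$ and Markov-averaging over random $A$ yields: except on an $\le n^{-3/2}$ measure of oracles, the simulator errs on $\le n^{-3/2}$ fraction of length-$n$ inputs. Summability and Borel--Cantelli give, for random $A$ with probability $1$, correct decision on a $1-o(1)$ fraction of length-$n$ inputs for all sufficiently large $n$, i.e.\ $L\in\mathsf{AvgP}^A$; intersecting these probability-$1$ events over the countably many quantum machines $M$ gives $\mathsf{BQP}^A\subset\mathsf{AvgP}^A$ with probability $1$.

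The main technical obstacle is inside (i): Conjecture~\ref{infconj0} hands us an $L_1$-influential variable, but the cleanest progress-measure (variance) decreases by the square of a first-order Fourier coefficient, and the standard inequality $\|p\|_2\le C^d\|p\|_1$ from low-degree hypercontractivity loses an exponential-in-$d$ factor that would ruin the polynomial reduction. A successful argument must therefore either find a non-standard potential directly commensurable with $L_1$ influence, or amortize across a sequence of queries in a way that sidesteps the exponential loss. Getting this step right is the crux of the proof; once it is done, the remainder of (i), (ii), and (iii) is mostly bookkeeping, together with the $\mathsf{\#P}$ collapse used in (iii).
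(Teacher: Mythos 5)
Your overall architecture (greedy decision tree driven by Conjecture \ref{infconj0}, then (ii) by thresholding the estimate and a union bound, then (iii) by Markov plus Borel--Cantelli over a random oracle, intersected over countably many machines) matches the paper's. But the crux of (i) --- the step you explicitly leave unresolved --- is a genuine gap, and it is exactly where the paper's proof has a specific idea you are missing. The paper does not use a potential with $\Phi(\emptyset)=O(1)$, nor any $L_1$--$L_2$ comparison or hypercontractivity. The potential is the \emph{total influence} $\operatorname*{SumInf}[p]=\sum_i \operatorname*{Inf}_i[p]$, and the quantum origin of $p$ is used a second time, beyond the degree bound of Lemma \ref{bbcmwlem}: Lemma \ref{qlbinf} (a slight generalization of Shi's Lemma 4.3, bounding $\operatorname*{E}_{X,i}[\|\,|\psi_X\rangle-|\psi_{X^i}\rangle\|^2]\leq 2T/N$) gives $\operatorname*{SumInf}[p_0]=O(T)$ for any $T$-query quantum algorithm. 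Since Conjecture \ref{infconj0} guarantees, whenever $\operatorname*{Vr}[p_j]>\varepsilon\delta$, a variable with $L_1$-influence at least $q(\varepsilon\delta/T)$, and querying that variable removes (at least) its influence from the total, the number of queries is at most $\operatorname*{SumInf}[p_0]/q(\varepsilon\delta/T)\leq(T/\varepsilon\delta)^{O(1)}$. In other words, the per-step progress and the global budget are measured in the same $L_1$-influence currency, so the exponential loss you worry about never arises; a general bounded degree-$d$ polynomial could have total influence as large as $\Omega(d)$ only up to... rather, the point is that without the quantum total-influence bound your potentials ($U_\rho$, variance, or hypercontractive hybrids) do not yield a polynomial depth bound, and your proposal as written does not establish (i); since (ii) and (iii) are derived from (i), they inherit the gap.

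A secondary, fixable imprecision is in (iii): you propose estimating the spreads and influences by sampling (``$\mathsf{BPP}\subseteq\mathsf{P}^{\mathsf{\#P}}$''), but the simulator must be a single deterministic polynomial-time machine relative to $A$, and sampling-based thresholding is randomized and can be inconsistent near the threshold. The paper instead computes $\operatorname*{E}[p_j]$, $\operatorname*{Vr}[p_j]$, and $\operatorname*{Inf}_i[p_j]$ exactly enough inside the counting hierarchy (an exponential sum of absolute values of exponential sums puts them in the second level, and selecting a suitable $i$ in the third), which collapses to $\mathsf{P}$ under $\mathsf{P}=\mathsf{P}^{\mathsf{\#P}}$; this is what makes the decision tree uniformly $\mathsf{P}$-constructible. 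Your handling of the measure-one argument itself (per-input failure probability $1/n^{3}$, Markov over inputs, summability, intersecting over countably many machines) agrees with the paper.
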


The main evidence for Conjecture \ref{infconj}---besides the fact that all the
Fourier analysis experts we asked were confident of it!---is that extremely
similar statements have recently been proved. \ Firstly, O'Donnell, Saks,
Schramm, and Servedio \cite{osss}\ proved an analogue of Conjecture
\ref{infconj}\ for \textit{decision trees}, which are a special case of
bounded real polynomials:

\begin{theorem}
[O'Donnell et al.\ 2005]\label{osssthm}Let $f:\left\{  0,1\right\}
^{N}\rightarrow\left\{  0,1\right\}  $\ be a Boolean function, and suppose
$\Pr\left[  f=1\right]  \Pr\left[  f=0\right]  \geq\varepsilon$. \ Then there
exists an $i$ such that $\operatorname*{Inf}_{i}\left[  f\right]
\geq4\varepsilon/\operatorname*{D}(f)$, where $\operatorname*{D}(f)$\ is the
decision tree complexity of $f$.
\end{theorem}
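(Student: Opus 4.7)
The plan is to deduce Theorem~\ref{osssthm} from the OSSS \emph{decision-tree inequality}
\[
\operatorname*{Var}[f] \;\le\; \sum_{i=1}^{N} \delta_i(T)\cdot \operatorname*{E}_{X_{-i}}\!\bigl[\operatorname*{Var}(f(X)\mid X_{-i})\bigr],
\]
valid for any decision tree $T$ computing $f$, where $\delta_i(T) = \Pr_X[T\text{ queries coordinate }i]$ on uniform $X\in\{0,1\}^N$. Granted this inequality, the theorem is immediate: for Boolean $f$ the conditional variance $\operatorname*{Var}(f\mid X_{-i})$ equals $0$ or $\tfrac14$ according to whether $f$ is sensitive to coordinate $i$ at $X_{-i}$, so $\operatorname*{E}_{X_{-i}}[\operatorname*{Var}(f\mid X_{-i})] = \tfrac14\operatorname*{Inf}_i[f]$; the revealment probabilities satisfy $\sum_i \delta_i(T) \le \operatorname*{D}(T)$ since every root-to-leaf path has length at most $\operatorname*{D}(T)$; and $\operatorname*{Var}[f] = \Pr[f{=}1]\Pr[f{=}0] \ge \varepsilon$ by hypothesis. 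Specializing $T$ to an optimal decision tree of depth $\operatorname*{D}(f)$ then forces $\max_i \operatorname*{Inf}_i[f] \ge 4\varepsilon/\operatorname*{D}(f)$.

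To prove the OSSS inequality itself I would use a coupling/telescoping argument. Let $X, Y$ be independent uniform inputs, and exploit the identity $2\operatorname*{Var}[f] = \Pr[f(X)\neq f(Y)]$. Simulate $T$ on $X$ to obtain the query sequence $I_1,\ldots,I_{k(X)}$ and the queried set $S(X)$; since $f(X)$ is determined by $X_{S(X)}$, the input $Z$ equal to $X$ on $S(X)$ and to $Y$ on $[N]\setminus S(X)$ has $f(Z) = f(X)$ and is itself uniformly distributed. Now resample the coordinates of $S(X)$ from $Y$ one at a time, producing a chain of Hamming neighbors $Z = Z^{(0)}, Z^{(1)}, \ldots, Z^{(k(X))}$, and bound $\Pr[f(X)\neq f(Y)]$ by a union bound over the edge-flips, with each edge in direction $i$ contributing $\delta_i(T)\cdot\tfrac12\operatorname*{Inf}_i[f]$ after the event $I_j=i$ has been properly marginalized.

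The main obstacle is making this marginalization honest. The identity of $I_j$ is correlated with the coordinates already resampled, so the conditional law of $(Z^{(j-1)}, Z^{(j)})$ given $I_j=i$ is not automatically a uniformly random Boolean-cube edge in direction $i$, and one cannot identify $\Pr[f(Z^{(j-1)})\neq f(Z^{(j)}) \mid I_j=i]$ with $\operatorname*{Inf}_i[f]$ without further work. The fix---which is the technical heart of OSSS---is to symmetrize the order of substitutions (for instance, by averaging over a uniformly random permutation of $S(X)$, or equivalently by stopping the descent through $T$ at a uniformly random depth), after which the conditional marginal on $X_{-i}$ really is uniform and each edge contributes exactly $\tfrac12\operatorname*{Inf}_i[f]$. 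Setting up this symmetrization and tracking the conditional distributions carefully is where the real effort lies; once it is in place, summing with weights $\delta_i(T)$ and comparing to $2\operatorname*{Var}[f]$ yields the OSSS inequality with the advertised constant, and the rest is bookkeeping.
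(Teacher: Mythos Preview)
The paper does not prove Theorem~\ref{osssthm}; it is quoted as a known result from O'Donnell, Saks, Schramm, and Servedio \cite{osss} and used only as supporting evidence for Conjecture~\ref{infconj0}. So there is no ``paper's own proof'' to compare against.

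That said, your plan is exactly the route taken in \cite{osss}: prove the decision-tree inequality $\operatorname*{Var}[f]\le\sum_i\delta_i(T)\,\operatorname*{Inf}_i[f]$ by the rerandomization/telescoping argument you describe, then specialize to an optimal tree and bound $\sum_i\delta_i(T)\le\operatorname*{D}(f)$. Two small cautions. First, your displayed inequality with $\operatorname*{E}_{X_{-i}}[\operatorname*{Var}(f\mid X_{-i})]=\tfrac14\operatorname*{Inf}_i[f]$ on the right is a factor of $4$ stronger than the version usually stated and proved in \cite{osss}; the standard coupling argument yields $\operatorname*{Var}[f]\le\sum_i\delta_i(T)\,\operatorname*{Inf}_i[f]$, which gives $\max_i\operatorname*{Inf}_i[f]\ge\varepsilon/\operatorname*{D}(f)$ rather than $4\varepsilon/\operatorname*{D}(f)$. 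The constant $4$ in the statement here is not essential to anything in this paper, so this is harmless, but you should not expect your telescoping bound to produce the extra factor of $4$ without additional work. Second, your diagnosis of the ``main obstacle'' is slightly off: in the OSSS argument one resamples the queried coordinates \emph{in the order the tree queries them}, and the point is that conditioned on the first $j{-}1$ queried values (which determine $I_j$), the hybrid $Z^{(j-1)}$ is genuinely uniform because its unqueried coordinates are fresh $Y$-bits and its queried coordinates, taken jointly with the conditioning, are still uniform---no extra symmetrization over permutations is needed. Once you set up the hybrids in query order, the marginalization goes through directly.
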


Unfortunately, Theorem \ref{osssthm}\ does not directly imply anything about
our problem, even though Beals et al.\ \cite{bbcmw}\ showed that
$\operatorname*{D}(f)$\ and $\operatorname*{Q}(f)$\ are polynomially related
for all total Boolean functions $f$. \ The reason is that the acceptance
probability of a quantum algorithm\ need not approximate a total Boolean function.

The second piece of evidence for Conjecture \ref{infconj}\ comes from a
powerful result of Dinur, Friedgut, Kindler, and O'Donnell \cite{dfko},\ which
implies our conjecture, \textit{except} with $\operatorname*{Inf}_{i}\left[
p\right]  \geq\varepsilon^{3}/2^{O\left(  d\right)  }$\ instead of
$\operatorname*{Inf}_{i}\left[  p\right]  \geq\left(  \varepsilon/d\right)
^{O\left(  1\right)  }$. \ Let us state the special case of their result that
is relevant for us:

\begin{theorem}
[Dinur et al.\ 2006]\label{dfkothm0}Let $\varepsilon>0$, and let
$p:\mathbb{R}^{N}\rightarrow\mathbb{R}$\ be a degree-$d$ polynomial such that
$0\leq p(X)\leq1$\ for all $X\in\left\{  0,1\right\}  ^{N}$. \ Then there
exists a $2^{O\left(  d\right)  }/\varepsilon^{2}$-junta $\widetilde{p}%
:\mathbb{R}^{N}\rightarrow\mathbb{R}$ (that is, a polynomial depending on at
most $2^{O\left(  d\right)  }/\varepsilon^{2}$\ variables) such that%
\[
\operatorname*{E}_{X\in\left\{  0,1\right\}  ^{N}}\left[  \left(
\widetilde{p}(X)-p(X)\right)  ^{2}\right]  \leq\varepsilon.
\]

\end{theorem}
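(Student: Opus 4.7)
The plan is to proceed by Fourier analysis on the Boolean cube. Write $p$ in its multilinear expansion $p(X) = \sum_{|S| \le d} \hat{p}(S)\chi_S(X)$, where $\chi_S(X) = \prod_{i \in S}(-1)^{x_i}$. The boundedness hypothesis and Parseval give $\sum_S \hat{p}(S)^2 = \mathbb{E}[p^2] \le 1$, and the standard Poincar\'e-type bound for degree-$d$ functions yields a total-influence bound $\sum_i \mathrm{Inf}^{(2)}_i[p] = \sum_S |S|\hat{p}(S)^2 \le d$, where $\mathrm{Inf}^{(2)}_i[p]$ denotes the $L_2$-influence.

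The natural candidate junta is obtained by fixing a threshold $\tau > 0$, setting $J = \{i : \mathrm{Inf}^{(2)}_i[p] \ge \tau\}$, and defining $\widetilde{p}(X) = \sum_{S \subseteq J}\hat{p}(S)\chi_S(X)$. The total-influence bound gives $|J| \le d/\tau$, and the $L_2$ error is exactly $\sum_{S \not\subseteq J}\hat{p}(S)^2$. The first instinct is to bound this by $\sum_{i \notin J}\mathrm{Inf}^{(2)}_i[p]$, but that yields only a bound of $d$, which is useless. Thus the entire subtlety is in controlling this "Fourier tail on low-influence coordinates" much more sharply than raw influences allow.

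Here I would invoke Bonami--Beckner hypercontractivity, which says that every degree-$d$ multilinear polynomial $f$ on $\{0,1\}^N$ satisfies $\|f\|_4 \le 3^{d/2}\|f\|_2$. Applied to level-by-level pieces $p^{=k} = \sum_{|S|=k}\hat{p}(S)\chi_S$ or to suitable truncations of $p-\widetilde{p}$, this can be combined with $\|p\|_\infty \le 1$ to force geometric decay of Fourier weight on characters supported outside $J$. Intuitively, if large Fourier mass outside $J$ consisted entirely of coordinates of $L_2$-influence below $\tau$, a $4$-norm calculation using hypercontractivity would contradict $p$ being bounded in $[0,1]$. Optimizing $\tau$ of order $\varepsilon \cdot 2^{-O(d)}$ then yields the advertised junta size $|J| = 2^{O(d)}/\varepsilon^2$.

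The hard part, and where I expect the DFKO argument to be genuinely subtle, is converting the hypercontractive inequality into the claimed exponential (rather than super-exponential like $d^{O(d)}$) dependence on $d$ in the junta size, while keeping the $L_2$-error linear in $\varepsilon$. A cleaner route is probably to first smooth $p$ with the Bonami--Beckner noise operator $T_\rho$ at an appropriate $\rho = \rho(d,\varepsilon)$, approximate the smoothed polynomial by a junta (which is easier because noise damps Fourier mass on characters involving many low-influence coordinates), and then show that smoothing loses only $O(\varepsilon)$ in $L_2$-distance using the degree bound. Calibrating $\rho$ and $\tau$ so that both the smoothing error and the tail of the truncated polynomial are simultaneously $O(\varepsilon)$ is the main calculation.
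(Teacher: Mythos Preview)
The paper does not prove this theorem; it is quoted verbatim as a result of Dinur, Friedgut, Kindler, and O'Donnell \cite{dfko} and used as a black box (to derive Corollary~\ref{nonadap} and Theorem~\ref{bqplogthm}). So there is no proof in the paper to compare your proposal against.

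That said, your sketch is a reasonable outline of the actual DFKO argument: the junta is indeed taken to be the set of coordinates with $L_2$-influence above a threshold, the total-influence bound $\sum_i \mathrm{Inf}^{(2)}_i[p]\le d$ controls its size, and hypercontractivity is the engine that bounds the Fourier mass on characters touching low-influence coordinates. Your instinct that the delicate step is getting $2^{O(d)}$ rather than $d^{O(d)}$ is also correct; in DFKO this comes from a careful induction on the degree combined with the $(2,4)$-hypercontractive inequality applied to the low-influence part, rather than from noise-operator smoothing as you suggest. Your smoothing route is plausible but would need a real argument to avoid the cruder $d^{O(d)}$ dependence that a naive level-by-level application of Bonami--Beckner tends to produce.
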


Even though Theorem \ref{dfkothm0} has an exponential rather than polynomial
dependence on $1/d$, we observe that it \textit{already} has a nontrivial
consequence for quantum computation. \ Namely, it implies that any $T$-query
quantum algorithm can be simulated on \textit{most} inputs using $2^{O\left(
T\right)  }$\ classical queries.\footnote{Indeed, in this case the classical
queries are nonadaptive.} \ Recall that the gaps between classical and quantum
query complexities can be superexponential (and even $N^{\Omega\left(
1\right)  }$\ versus $O\left(  1\right)  $, as in the example of Aaronson
\cite{aar:ph}), so even an exponential upper bound\ is far from obvious.

\subsection{Subsequent Work\label{SUBSEQ}}

Since the first version of this paper was circulated, there have been at least
three interesting developments (not counting the $\Omega(N^{1/3})$\ quantum
lower bound of Zhandry \cite{zhandry:col}\ for \textsc{SetEquality}, which we
incorporate here).

First, Yuen \cite{yuen:col}\ adapted the hybrid argument that we used to prove
Theorem \ref{mainthm}, in order to show that distinguishing a random function
$X:\left[  N\right]  \rightarrow\left[  N\right]  $\ from a random permutation
requires $\Omega(N^{1/5}/\log N)$\ quantum queries. \ (Subsequently, however,
Zhandry \cite{zhandry:col}\ proved a tight lower bound of\ $\Omega(N^{1/3})$
for the random function versus random permutation problem, using completely
different ideas.)

Second, Montanaro \cite{montanaro:hyper}\ used a hypercontractive inequality
to prove Conjecture \ref{infconj}, in the special case where $p$ is a
multilinear polynomial all of whose coefficients (when written in the Fourier
basis) have the same absolute value. \ Currently, it remains open to
generalize Montanaro's technique to arbitrary multilinear polynomials, let
alone arbitrary polynomials.

Third, Ba\v{c}kurs and Bavarian \cite{bacbav}\ solved a technical problem that
arose from an earlier version of this paper. \ In the earlier version, we
stated Conjecture \ref{infconj}\ in terms of $L_{1}$-influences rather than
$L_{2}$-influences,\ and we also used the $L_{1}$-norm\ in proving the
consequences of Conjecture \ref{infconj}\ for quantum query complexity.
\ Subsequently, Ba\v{c}kurs (personal communication) found an error in our
proof. \ Fortunately, however, we noticed that (a) our proof could be fixed by
simply switching from $L_{1}$-norm\ to $L_{2}$-norm\ throughout, and (b) the
$L_{2}$\ version of Conjecture \ref{infconj}\ was, in any case, provably
equivalent to our original $L_{1}$\ version. \ So we switched to the $L_{2}%
$-norm. \ At the same time, though, we remained curious about our original
$L_{1}$-based argument\ \textit{could have} worked. \ The question boiled down
to the following: given a degree-$d$ real polynomial $p:\mathbb{R}%
^{N}\rightarrow\mathbb{R}$, let%
\[
\operatorname*{Inf}\nolimits_{i}^{1}\left[  p\right]  :=\operatorname*{E}%
_{X\in\left\{  0,1\right\}  ^{N}}\left[  \left\vert p(X)-p(X^{i})\right\vert
\right]  .
\]
Then do we have $\sum_{i=1}^{N}\operatorname*{Inf}\nolimits_{i}^{1}\left[
p\right]  \leq d^{O\left(  1\right)  }$, whenever $p(X)\in\left[  0,1\right]
$\ for all $X\in\left\{  0,1\right\}  ^{N}$? \ Ba\v{c}kurs and Bavarian
\cite{bacbav} show that the answer is yes: indeed, the sum of the $L_{1}%
$-influences is upper-bounded by $O(d^{3}\log d)$. \ Using their result, one
\textit{can} salvage our original $L_{1}$-based argument.

For simplicity, though, in this version of the paper we stick with $L_{2}%
$-influences. \ There, the analogue of Ba\v{c}kurs and Bavarian's result is
much easier, and states that $\sum_{i=1}^{N}\operatorname*{Inf}\nolimits_{i}%
\left[  p\right]  \leq d$\ (we provide the folklore proof in Lemma
\ref{qlbinf}). \ For completeness, in Appendix \ref{NORMS}\ we prove the
equivalence of the $L_{1}$\ and $L_{2}$\ versions\ of Conjecture \ref{infconj}.

\section{Quantum Lower Bound for All Symmetric Problems\label{SYMM}}

In this section we prove Theorem \ref{mainthm}: that $\operatorname*{R}%
(f)=O(\operatorname*{Q}(f)^{7}\operatorname*{polylog}\operatorname*{Q}(f))$
for all permutation-symmetric $f$.

We start with a simple observation that is essential to everything that
follows. \ Since $f$ is symmetric, we can group the inputs $X=(x_{1}%
,\ldots,x_{N})$ into equivalence classes that we call \textit{types}.

\begin{definition}
Given an input $X=(x_{1},\ldots,x_{N})\in\left[  M\right]  ^{N}$, the type of
$X$ is a list of positive integers $\mathcal{A}=(a_{1},\ldots,a_{u})$, which
records the multiplicities of the integers occurring in $X$ from most to least
frequent. \ So in particular, $a_{1}\geq\cdots\geq a_{u}$\ and $a_{1}%
+\cdots+a_{u}=N$. \ For convenience, we adopt the convention that $a_{i}%
=0$\ for all $i>u$.
\end{definition}

In other words, a type is just a partition (or \textit{Young diagram})\ that
records the multiplicities of the input elements. \ For example, a one-to-one
input has type $a_{1}=\cdots=a_{N}=1$, while a two-to-one input has type
$a_{1}=\cdots=a_{N/2}=2$. \ We write $X\in\mathcal{A}$\ if $X$ is of type
$\mathcal{A}$. \ Clearly $f(X)$\ depends only on the type of $X$.
\ Furthermore, given a quantum query algorithm $Q$, we can assume without loss
of generality that $\Pr\left[  Q~\text{accepts }X\right]  $\ depends only on
the type of $X$---since we can \textquotedblleft symmetrize\textquotedblright%
\ $Q$ (that is, randomly permute $X$'s inputs and outputs)\ prior to running
$Q$.

\subsection{Randomized Upper Bound\label{RUB}}

Let $X=(x_{1},\ldots,x_{N})$\ be an input. \ For each $j\in\left[  M\right]
$, let $\kappa_{j}$\ be the number of $i$'s such that $x_{i}=j$. \ Then the
first step is to give a classical randomized algorithm that estimates the
$\kappa_{j}$'s. \ This algorithm, $\mathcal{S}_{T}$, is an extremely
straightforward sampling procedure. \ (Indeed, there is essentially nothing
else that a randomized algorithm can do here.) \ $\mathcal{S}_{T}$ will make
$O(T^{1+c}\log T)$ queries, where $T$ is a parameter and $c\in\left(
0,1\right]  $ is a constant that we will choose later to optimize the final
bound.\bigskip

\texttt{\qquad Set }$U:=21T^{1+c}\ln T$

\texttt{\qquad Choose }$U$\texttt{\ indices }$i_{1},\ldots,i_{U}\in\left[
N\right]  $\texttt{\ uniformly at random with replacement}

\texttt{\qquad Query }$x_{i_{1}},\ldots,x_{i_{U}}$

\texttt{\qquad For each }$j\in\left[  M\right]  $\texttt{:}

\texttt{\qquad\qquad Let }$z_{j}$\texttt{\ be the number of occurrences of
}$j$\texttt{ in }$(x_{i_{1}},\ldots,x_{i_{U}})$

\texttt{\qquad\qquad Output }$\widetilde{\kappa}_{j}:=\frac{z_{j}}{U}\cdot
N$\texttt{ as the estimate for }$\kappa_{j}$ \bigskip

We now analyze how well $\mathcal{S}_{T}$\ works.

\begin{lemma}
\label{sampling}With probability $1-O\left(  1/T\right)  $,\ we have
$\left\vert \widetilde{\kappa}_{j}-\kappa_{j}\right\vert \leq\frac{N}{T}%
+\frac{\kappa_{j}}{T^{c}}$\ for all $j\in\left[  M\right]  $.
\end{lemma}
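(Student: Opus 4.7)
The plan is to apply a standard concentration inequality to each $j$ separately and then union-bound. Fix $j \in [M]$, set $p_j := \kappa_j / N$, and note that $z_j$ is a sum of $U$ i.i.d.\ Bernoulli random variables each equal to $1$ with probability $p_j$, so $\mathbb{E}[z_j] = Up_j$ and $\mathbb{E}[\widetilde{\kappa}_j] = (N/U)\mathbb{E}[z_j] = \kappa_j$. Multiplying through by $U/N$, the target inequality $|\widetilde{\kappa}_j - \kappa_j| \leq N/T + \kappa_j/T^c$ is equivalent to
\[
  |z_j - Up_j| \;\leq\; \tfrac{U}{T} + \tfrac{Up_j}{T^c}.
\]

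I would apply Bernstein's inequality
\[
  \Pr\!\bigl[|z_j - Up_j| \geq s\bigr] \;\leq\; 2\exp\!\bigl(-s^2/(2Up_j + \tfrac{2}{3}s)\bigr)
\]
with $s := U/T + Up_j/T^c$, splitting into two regimes depending on whether the variance term $2Up_j$ or the range term $\tfrac{2}{3}s$ dominates the denominator. In the first regime ($Up_j \geq s/3$), using the crude estimate $s^2 \geq 2(U/T)(Up_j/T^c) = 2U^2 p_j / T^{c+1}$, the exponent is at least $s^2/(4Up_j) \geq U/(2T^{c+1}) = 12\ln T$. In the second regime ($Up_j < s/3$), the exponent is at least $3s/4 \geq 3U/(4T) = 18 T^c \ln T$, which is much larger. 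Thus the factor $24 \ln T$ built into $U = 24\, T^{1+c}\ln T$ forces the per-$j$ failure probability to be $O(1/T^{12})$.

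Finally, one union-bounds over $j$. For $j$ with $\kappa_j = 0$, $z_j = 0$ deterministically and the inequality holds with room to spare; so only $j$'s with $\kappa_j > 0$ matter, and there are at most $N$ of these because $\sum_j \kappa_j = N$. If one wishes to sharpen the union bound, one can note that only $O(T)$ values of $j$ can satisfy $\kappa_j \geq N/T$, and that for smaller $\kappa_j$ the additive slack $N/T$ absorbs any reasonable deviation---so in effect one only needs to control the heavy $j$'s. Either way, after union bounding the total failure probability is at most $O(1/T)$ in the relevant range of $T$.

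The proof is essentially mechanical; the one place one has to be a little careful is in choosing which Chernoff variant to use in which regime, and in setting the constant in $U = 24 T^{1+c}\ln T$ large enough so that the $\Omega(\ln T)$ exponent in Bernstein survives the union bound over the (at most polynomially many) indices $j$ with $\kappa_j > 0$. Balancing the two error contributions $N/T$ and $\kappa_j/T^c$ is the reason the single inequality needs to be attacked via two separate regimes rather than by a single direct Chernoff estimate.
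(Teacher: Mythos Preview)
Your per-$j$ concentration estimate via Bernstein is fine and gives a failure probability of $O(T^{-12})$ for each fixed $j$; this is essentially equivalent to the paper's first two Chernoff cases. The gap is in the union bound. You correctly note that at most $N$ indices have $\kappa_j>0$, but $N$ is the input length and is not bounded by any polynomial in $T$ (in this paper $T$ will eventually be tied to $\operatorname*{Q}(f)$, which can be vastly smaller than $N$). So $N\cdot T^{-12}$ is not $O(1/T)$, and your phrase ``polynomially many indices $j$ with $\kappa_j>0$'' is not justified.

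Your parenthetical fix---restrict attention to the $O(T)$ ``heavy'' $j$'s with $\kappa_j\ge N/T$ and say the additive slack $N/T$ handles the light ones---is the right instinct but is not a proof: you still must bound the probability that \emph{some} light $j$ violates the inequality, and there can be $\Theta(N)$ of them. The paper closes exactly this gap with a third case for very light $j$'s (those with $\kappa_j<N/T^{5}$). There it abandons Chernoff/Bernstein and uses the crude bound $\Pr[z_j\ge 2]\le \binom{U}{2}(\kappa_j/N)^2$, which simplifies to at most $\kappa_j/(TN)$. The crucial feature is that this bound is \emph{proportional to $\kappa_j$}, so summing over all light $j$ gives $\sum_j \kappa_j/(TN)\le N/(TN)=1/T$, independent of how many light $j$'s there are. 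Your range-dominated Bernstein tail is tiny but not proportional to $\kappa_j$, so it does not sum. To repair your argument, add this third regime: for $\kappa_j$ small enough, the failure event forces $z_j\ge 2$, and then the second-moment estimate above gives the summable bound you need.
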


\begin{proof}
For each $j\in\left[  M\right]  $, we consider four cases. \ First suppose
$\kappa_{j}\geq N/T^{1-c}$. \ Notice that $z_{j}$\ is a sum of $U$%
\ independent Boolean variables, and that $\operatorname*{E}\left[
z_{j}\right]  =\frac{U}{N}\operatorname*{E}[\widetilde{\kappa}_{j}]=\frac
{U}{N}\kappa_{j}$. \ Thus%
\begin{align*}
\Pr\left[  \left\vert \widetilde{\kappa}_{j}-\kappa_{j}\right\vert
>\frac{\kappa_{j}}{T^{c}}\right]   &  =\Pr\left[  \left\vert z_{j}-\frac{U}%
{N}\kappa_{j}\right\vert >\frac{U\kappa_{j}}{NT^{c}}\right] \\
&  <2\exp\left(  -\frac{U\kappa_{j}/N}{3T^{2c}}\right) \\
&  <2\exp\left(  -\frac{U}{3T^{1+c}}\right) \\
&  =2T^{-7},
\end{align*}
where the second line follows from a Chernoff bound and the third from
$\kappa_{j}\geq N/T^{1-c}$.

Second, suppose $N/T\leq\kappa_{j}<N/T^{1-c}$. \ Then%
\begin{align*}
\Pr\left[  \left\vert \widetilde{\kappa}_{j}-\kappa_{j}\right\vert >\frac
{N}{T}\right]   &  =\Pr\left[  \left\vert z_{j}-\frac{U}{N}\kappa
_{j}\right\vert >\frac{U}{T}\right] \\
&  <2\exp\left(  -\frac{U\kappa_{j}/N}{3}\left(  \frac{N}{T\kappa_{j}}\right)
^{2}\right) \\
&  <2\exp\left(  -\frac{U}{3T^{1+c}}\right) \\
&  =2T^{-7}%
\end{align*}
where the second line follows from a Chernoff bound (which is valid because
$\frac{N}{T\kappa_{j}}\leq1$) and the third from $\kappa_{j}<N/T^{1-c}$.

Third, suppose $N/T^{6}\leq\kappa_{j}<N/T$. \ Then%
\begin{align*}
\Pr\left[  \left\vert \widetilde{\kappa}_{j}-\kappa_{j}\right\vert >\frac
{N}{T}\right]   &  =\Pr\left[  \left\vert z_{j}-\frac{U}{N}\kappa
_{j}\right\vert >\frac{U}{T}\right] \\
&  <\left(  \frac{e^{N/\left(  T\kappa_{j}\right)  }}{\left(  1+N/\left(
T\kappa_{j}\right)  \right)  ^{1+N/\left(  T\kappa_{j}\right)  }}\right)
^{U\kappa_{j}/N}\\
&  \leq\exp\left(  -\frac{N}{T\kappa_{j}}\cdot\frac{U\kappa_{j}}{N}\right) \\
&  =\exp\left(  -\frac{U}{T}\right) \\
&  =O\left(  \frac{1}{T^{7}}\right)  ,
\end{align*}
where the second line follows from a Chernoff bound, the third line follows
from $\frac{N}{T\kappa_{j}}>1$, and the last follows from $U=21T^{1+c}\ln T$.

Fourth, suppose $\kappa_{j}<N/T^{6}$. \ Then%
\begin{align*}
\Pr\left[  \left\vert \widetilde{\kappa}_{j}-\kappa_{j}\right\vert >\frac
{N}{T}\right]   &  =\Pr\left[  \left\vert z_{j}-\frac{U}{N}\kappa
_{j}\right\vert >\frac{U}{T}\right] \\
&  \leq\Pr\left[  z_{j}\geq2\right] \\
&  \leq\binom{U}{2}\left(  \frac{\kappa_{j}}{N}\right)  ^{2}\\
&  \leq\frac{U^{2}}{T^{6}}\left(  \frac{\kappa_{j}}{N}\right) \\
&  \leq\frac{\kappa_{j}}{TN}%
\end{align*}
for all sufficiently large $T$, where the second line follows from $\kappa
_{j}<N/T^{6}$, the third from the union bound, the fourth from $\kappa
_{j}<N/T^{6}$ (again), and the fifth from $U\leq21T^{2}\ln T$.

Notice that there are at most $T^{6}$\ values of $j$ such that $\kappa_{j}\geq
N/T^{6}$. \ Hence, putting all four cases together,%
\begin{align*}
\Pr\left[  \exists j:\left\vert \widetilde{\kappa}_{j}-\kappa_{j}\right\vert
>\frac{N}{T}+\frac{\kappa_{j}}{T^{c}}\right]   &  \leq T^{6}\cdot O\left(
\frac{1}{T^{7}}\right)  +\sum_{j:\kappa_{j}<N/T^{6}}\frac{\kappa_{j}}{TN}\\
&  =O\left(  \frac{1}{T}\right)  .
\end{align*}

\end{proof}

Now call $\mathcal{A}$ a $1$\textit{-type} if $f(X)=1$\ for all $X\in
\mathcal{A}$, or a $0$\textit{-type} if $f(X)=0$\ for all $X\in\mathcal{A}$.
\ Consider the following randomized algorithm $\mathcal{R}_{T}$\ to compute
$f(X)$:\bigskip

\qquad\texttt{Run }$\mathcal{S}_{T}$\texttt{ to find an estimate
}$\widetilde{\kappa}_{i}$\texttt{\ for each }$\kappa_{i}$

\qquad\texttt{Sort the }$\widetilde{\kappa}_{i}$\texttt{'s in descending
order, so that }$\widetilde{\kappa}_{1}\geq\cdots\geq\widetilde{\kappa}_{M}$

\qquad\texttt{If there exists a }$1$\texttt{-type\ }$\mathcal{A}=(a_{1}%
,a_{2},\ldots)$\texttt{\ such that }$\left\vert \widetilde{\kappa}_{i}%
-a_{i}\right\vert \leq\frac{N}{T}+\frac{a_{i}}{T^{c}}$

\qquad\qquad\texttt{for all }$i$\texttt{, then output }$f(X)=1$

\qquad\texttt{Otherwise output }$f(X)=0\bigskip$

Clearly $\mathcal{R}_{T}$\ makes $O(T^{1+c}\log T)$\ queries, just as
$\mathcal{S}_{T}$\ does. \ We now give a sufficient condition for
$\mathcal{R}_{T}$\ to succeed.

\begin{lemma}
\label{rworks}Suppose that for all $1$-types $\mathcal{A}=(a_{1},a_{2}%
,\ldots)$ and $0$-types $\mathcal{B}=(b_{1},b_{2},\ldots)$, there exists an
$i$\ such that $\left\vert a_{i}-b_{i}\right\vert >\frac{2N}{T}+\frac
{a_{i}+b_{i}}{T^{c}}$. \ Then $\mathcal{R}_{T}$\ computes $f$\ with bounded
probability of error, and hence $\operatorname*{R}\left(  f\right)
=O(T^{1+c}\log T)$.
\end{lemma}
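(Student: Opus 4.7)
The plan is to condition on the good event of Lemma \ref{sampling}, under which $|\widetilde{\kappa}_j - \kappa_j| \leq \frac{N}{T} + \frac{\kappa_j}{T^c}$ holds for every $j \in [M]$; this fails with probability only $O(1/T)$, so everything else can be argued deterministically on that event. The first step is a short technical claim that the per-coordinate error bound survives the descending sort inside $\mathcal{R}_T$. Writing the confidence interval as $I(k) = [k(1-T^{-c}) - N/T,\ k(1+T^{-c}) + N/T]$ so that $\widetilde{\kappa}_j \in I(\kappa_j)$ for all $j$, and noting that both endpoints of $I(\cdot)$ are monotone in $k$, a counting argument (at least $i$ of the $\widetilde{\kappa}_j$'s are $\geq$ the $i$-th largest value $\widetilde{\kappa}_{(i)}$, each of their right endpoints $r_j$ satisfies $r_j \geq \widetilde{\kappa}_j \geq \widetilde{\kappa}_{(i)}$, so the $i$-th largest right endpoint is $\geq \widetilde{\kappa}_{(i)}$, dually for the lower endpoint) shows that after sorting both $\kappa$ and $\widetilde{\kappa}$ descending and reindexing, the bound $|\widetilde{\kappa}_i - \kappa_i| \leq \frac{N}{T} + \frac{\kappa_i}{T^c}$ still holds for each $i$.

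Given this, the completeness and soundness arguments are pure bookkeeping. Let $X$ have type $\mathcal{C} = (c_1, c_2, \ldots)$, so the sorted multiplicities match $\mathcal{C}$ and the previous step yields $|\widetilde{\kappa}_i - c_i| \leq \frac{N}{T} + \frac{c_i}{T^c}$ for all $i$. If $f(X) = 1$ then $\mathcal{A} := \mathcal{C}$ witnesses the existence test in $\mathcal{R}_T$, so it correctly outputs $1$. If instead $f(X) = 0$ and $\mathcal{R}_T$ nevertheless finds a $1$-type $\mathcal{A} = (a_1, a_2, \ldots)$ with $|\widetilde{\kappa}_i - a_i| \leq \frac{N}{T} + \frac{a_i}{T^c}$ for all $i$, then setting $\mathcal{B} := \mathcal{C}$ and applying the triangle inequality gives $|a_i - b_i| \leq \frac{2N}{T} + \frac{a_i + b_i}{T^c}$ for every $i$, directly contradicting the hypothesis of the lemma. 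Hence $\mathcal{R}_T$ errs with probability $O(1/T)$ and uses $O(T^{1+c}\log T)$ queries, giving $\operatorname*{R}(f) = O(T^{1+c}\log T)$.

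The only place where any real thought is required is the sort-preservation step; I do not expect it to be a genuine obstacle, since the intervals $I(\kappa_j)$ are monotone in $\kappa_j$ by design, but it is the one spot where a naive reading of Lemma \ref{sampling} does not immediately supply what $\mathcal{R}_T$ needs after the reindexing that sorting performs.
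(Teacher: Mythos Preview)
Your proof is correct and follows essentially the same approach as the paper: condition on the good event of Lemma~\ref{sampling}, then use the triangle inequality for both the $1$-type (completeness) and $0$-type (soundness) cases. The paper dispatches the sort-preservation step with a one-line parenthetical (``it is easy to see that sorting the $\widetilde{\kappa}_i$'s can only decrease the maximum difference''), whereas your monotone-interval counting argument actually justifies this; your write-up is more careful on that point but not different in substance.
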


\begin{proof}
First suppose $X\in\mathcal{A}$\ where $\mathcal{A}=(a_{1},a_{2},\ldots)$\ is
a $1$-type. \ Then by Lemma \ref{sampling}, with probability $1-O\left(
1/T\right)  $\ we have $\left\vert \widetilde{\kappa}_{i}-a_{i}\right\vert
\leq\frac{N}{T}+\frac{a_{i}}{T^{c}}$\ for all $i$. \ (It is easy to see that
sorting the $\widetilde{\kappa}_{i}$'s can only decrease the maximum
difference.) \ Provided this occurs, $\mathcal{R}_{T}$\ finds some
$1$-type\ close to $(\widetilde{\kappa}_{1},\widetilde{\kappa}_{2},\ldots)$
(possibly $\mathcal{A}$\ itself) and outputs $f(X)=1$.

Second, suppose $X\in\mathcal{B}$\ where $\mathcal{B}=(b_{1},b_{2},\ldots
)$\ is a $0$-type. \ Then with probability $1-O\left(  1/T\right)  $ we have
$\left\vert \widetilde{\kappa}_{i}-b_{i}\right\vert \leq\frac{N}{T}%
+\frac{b_{i}}{T^{c}}$\ for all $i$. \ Provided this occurs, by the triangle
inequality, for every $1$-type\ $\mathcal{A}=(a_{1},a_{2},\ldots)$\ there
exists an $i$ such that%
\[
\left\vert \widetilde{\kappa}_{i}-a_{i}\right\vert \geq\left\vert a_{i}%
-b_{i}\right\vert -\left\vert \widetilde{\kappa}_{i}-b_{i}\right\vert
>\frac{N}{T}+\frac{a_{i}}{T^{c}}.
\]
Hence $\mathcal{R}_{T}$\ does \textit{not} find a $1$-type\ close to
$(\widetilde{\kappa}_{1},\widetilde{\kappa}_{2},\ldots)$, and it outputs
$f(X)=0$.
\end{proof}

In particular, suppose we keep decreasing $T$ until there exists a
$1$-type\ $\mathcal{A}^{\ast}=(a_{1},a_{2},\ldots)$ and a $0$%
-type\ $\mathcal{B}^{\ast}=(b_{1},b_{2},\ldots)$ such that%
\begin{equation}
\left\vert a_{i}-b_{i}\right\vert \leq\frac{2N}{T}+\frac{a_{i}+b_{i}}{T^{c}}
\label{abbound}%
\end{equation}
for all $i$, stopping as soon as that happens. \ Then Lemma \ref{rworks}
implies that we will still have $\operatorname*{R}\left(  f\right)
=O(T^{1+c}\log T)$. \ For the rest of the proof, we will fix that
\textquotedblleft almost as small as possible\textquotedblright\ value of $T$
for which (\ref{abbound}) holds, as well as the $1$-type\ $\mathcal{A}^{\ast}%
$\ and the $0$-type\ $\mathcal{B}^{\ast}$ that $\mathcal{R}_{T}$%
\ \textquotedblleft just barely distinguishes\textquotedblright\ from one another.

\subsection{The Chopping Procedure\label{CHOP}}

Given two sets of inputs $A$ and $B$ with $A\cap B=\varnothing$,\ let
$\operatorname*{Q}(A,B)$\ be the minimum number of queries made by any quantum
algorithm that accepts every $X\in A$\ with probability at least $2/3$, and
accepts every $Y\in B$ with probability at most $1/3$. \ Also, let
$\operatorname*{Q}_{\varepsilon}(A,B)$\ be the minimum number of queries made
by any quantum algorithm that accepts every $X\in A$\ with at least some
probability $p$, and that accepts every $Y\in B$ with probability at most
$p-\varepsilon$. \ Then we have the following basic relation:

\begin{proposition}
\label{amplify}$\operatorname*{Q}(A,B)=O(\frac{1}{\varepsilon}%
\operatorname*{Q}_{\varepsilon}(A,B))$ for all $A,B$\ and all $\varepsilon>0$.
\end{proposition}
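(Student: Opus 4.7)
The plan is to boost the $\varepsilon$-gap algorithm to a bounded-error algorithm using quantum amplitude estimation, which costs only a $1/\varepsilon$ overhead rather than the $1/\varepsilon^2$ one would pay for classical repetition with Chernoff. Fix an algorithm $Q$ witnessing $\operatorname{Q}_{\varepsilon}(A,B)=T$, together with a threshold $p$ such that $Q$ accepts every $X\in A$ with probability at least $p$ and every $Y\in B$ with probability at most $p-\varepsilon$. I treat $p$ as a fixed, known quantity, since it is part of the specification of the chosen gap-$\varepsilon$ algorithm.

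First, I would view $Q$ as a unitary $U$ whose output, on input $X$, has the form $\sqrt{q_X}\,|\textrm{accept}\rangle + \sqrt{1-q_X}\,|\textrm{reject}\rangle$, where $q_X$ is the true acceptance probability. I would then invoke the amplitude estimation algorithm of Brassard, H\o yer, Mosca, and Tapp with $k = O(1/\varepsilon)$ controlled applications of $U$ and $U^{-1}$. This produces an estimate $\widetilde{q}$ satisfying $|\widetilde{q}-q_X| \leq \varepsilon/3$ with probability at least $8/\pi^2$; a constant number of independent repetitions followed by taking the median boosts the success probability above $3/4$.

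Second, I would output $1$ whenever $\widetilde{q} \geq p - \varepsilon/2$ and $0$ otherwise. For $X\in A$, $q_X \geq p$ gives $\widetilde{q} \geq p - \varepsilon/3 > p - \varepsilon/2$, so the output is $1$; for $Y\in B$, $q_Y \leq p - \varepsilon$ gives $\widetilde{q} \leq p - 2\varepsilon/3 < p - \varepsilon/2$, so the output is $0$. The total query count is $O(kT) = O(T/\varepsilon) = O(\operatorname{Q}_{\varepsilon}(A,B)/\varepsilon)$, matching the claimed bound.

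There is no substantive obstacle in this argument: the proposition is essentially a textbook amplification, and the only thing to verify is that amplitude estimation's additive-error bound $O(\sqrt{q(1-q)}/k + 1/k^2)$ indeed yields the desired $O(1/\varepsilon)$ dependence, which it does for $k=\Theta(1/\varepsilon)$. The one minor subtlety — that the threshold $p$ be available to the designer of the amplified algorithm — is harmless because we construct the amplified algorithm \emph{after} fixing the optimal gap-$\varepsilon$ algorithm and its associated $p$.
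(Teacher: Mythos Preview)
Your proof is correct and follows exactly the same route as the paper: the paper's own proof is a one-line appeal to standard amplitude estimation from Brassard, H\o yer, Mosca, and Tapp \cite{bhmt}, and you have simply spelled out the details of that appeal. Nothing further to add.
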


\begin{proof}
This follows from standard amplitude estimation techniques (see Brassard et
al.\ \cite{bhmt}\ for example).
\end{proof}

The rest of the proof consists of lower-bounding $\operatorname*{Q}%
(\mathcal{A}^{\ast},\mathcal{B}^{\ast})$, the quantum query complexity of
distinguishing inputs of type $\mathcal{A}^{\ast}$ from inputs of type
$\mathcal{B}^{\ast}$. \ We do this via a hybrid argument. \ Let
$L:=\left\lceil \log_{2}N\right\rceil +1$. \ At a high level, we will
construct a sequence of types $\mathcal{A}_{0},\ldots,\mathcal{A}_{2L}$
such that

\begin{enumerate}
\item[(i)] $\mathcal{A}_{0}=\mathcal{A}^{\ast}$,

\item[(ii)] $\mathcal{A}_{2L}=\mathcal{B}^{\ast}$, and

\item[(iii)] $\operatorname*{Q}(\mathcal{A}_{\ell},\mathcal{A}_{\ell-1})$ is
large for every $\ell\in\left[  2L\right]  $.
\end{enumerate}

Provided we can do this, it is not hard to see that we get the desired lower
bound on $\operatorname*{Q}(\mathcal{A}^{\ast},\mathcal{B}^{\ast})$. \ Suppose
a quantum algorithm distinguishes $\mathcal{A}_{0}=\mathcal{A}^{\ast}$\ from
$\mathcal{A}_{2L}=\mathcal{B}^{\ast}$ with constant bias. \ Then by the
triangle inequality, it must also distinguish \textit{some} $\mathcal{A}%
_{\ell}$\ from $\mathcal{A}_{\ell+1}$ with reasonably large bias (say
$\Omega\left(  1/\log N\right)  $). \ By Proposition \ref{amplify}, any
quantum algorithm that succeeds with\ bias $\varepsilon$\ can be amplified,
with $O\left(  1/\varepsilon\right)  $ overhead, to an algorithm that succeeds
with constant bias.

Incidentally, the need, in this hybrid argument, to amplify the distinguishing
bias $\varepsilon=\varepsilon_{\ell}$\ from $\Omega\left(  1/\log N\right)  $
to $\Omega\left(  1\right)  $ is exactly what could produce an undesired
$1/\log N$\ factor in our final lower bound on $\operatorname*{Q}(f)$, if we
were not careful. \ (We mentioned this issue in Section \ref{RESULTS}.) \ The
way we will solve this problem, roughly speaking, is to design the
$\mathcal{A}_{\ell}$'s in such a way that our lower bounds on
$\operatorname*{Q}(\mathcal{A}_{\ell},\mathcal{A}_{\ell-1})$ increase quickly
as functions of $\ell$. \ That way, we can take the biases $\varepsilon_{\ell
}$ to\ decrease quadratically with $\ell$ (thus summing to a constant), yet
still have $\operatorname*{Q}(\mathcal{A}_{\ell},\mathcal{A}_{\ell-1})$
increasing quickly enough that%
\begin{align*}
\operatorname*{Q}\nolimits_{\varepsilon_{\ell}}(\mathcal{A}_{\ell}%
,\mathcal{A}_{\ell-1})  &  =\Omega(\varepsilon_{\ell}\operatorname*{Q}%
(\mathcal{A}_{\ell},\mathcal{A}_{\ell-1}))
\end{align*}
remain \textquotedblleft uniformly large,\textquotedblright\ with $1/\log
T$\ factors but no $1/\log N$\ factor.

We now describe the procedure for creating the intermediate types
$\mathcal{A}_{\ell}$.\ \ Intuitively, we want to form $\mathcal{A}_{\ell}%
$\ from $\mathcal{A}_{\ell-1}$ by making its Young diagram more similar to
that of $\mathcal{B}^{\ast}$, by decreasing the rows of $\mathcal{A}_{\ell-1}$
which are larger than the corresponding rows of $\mathcal{B}^{\ast}$ and
increasing the rows of $\mathcal{A}_{\ell-1}$ which are smaller than the
corresponding rows of $\mathcal{B}^{\ast}$.

More precisely, we construct the intermediate types $\mathcal{A}%
_{1},\mathcal{A}_{2},\ldots$\ via the following procedure. \ In this
procedure, $(a_{1}, a_{2}, \ldots)$\ is an input type that is initialized to
$\mathcal{A}^{\ast}$, and $\mathcal{B}^{\ast}=(b_{1},b_{2},\ldots)$.\bigskip

\qquad\texttt{let }$P$\texttt{ be the first power of }$2$\texttt{ greater than
or equal to }$N$

\qquad\texttt{for }$\ell:=1$\texttt{\ to }$L$

\qquad\qquad\texttt{let }$S_{A}$\texttt{ be the set of }$i$\texttt{ such that
} $a_{i}-b_{i}\geq P/2^{l}$

\qquad\qquad\texttt{let }$S_{B}$\texttt{ be the set of }$i$\texttt{ such that
} $b_{i}-a_{i}\geq P/2^{l}$

\qquad\qquad\texttt{let }$m:=\min(\left\vert S_{A}\right\vert ,\left\vert
S_{B}\right\vert )$

\qquad\qquad\texttt{choose }$m$\texttt{ elements }$i$\texttt{ from }$S_{A}%
$\texttt{, set } $a_{i}:=a_{i}-P/2^{\ell}$\texttt{ and remove them from
}$S_{A}$

\qquad\qquad\texttt{choose }$m$\texttt{ elements }$i$\texttt{ from }$S_{B}%
$\texttt{, set } $a_{i}:=a_{i}+P/2^{\ell}$\texttt{ and remove them from
}$S_{B}$

\qquad\qquad\texttt{let }$\mathcal{A}_{2\ell-1}:=\operatorname*{type}%
(a_{1},a_{2},\ldots)$

\qquad\qquad\texttt{if }$\left\vert S_{A}\right\vert >0$

\qquad\qquad\qquad\texttt{let } $a_{i}:=a_{i}-P/2^{\ell}$ \texttt{ for all }
$i\in S_{A}$

\qquad\qquad\qquad\texttt{choose }$\left\vert S_{A}\right\vert $\texttt{
elements }$i$\texttt{ such that }$a_{i}<b_{i}$ \texttt{ and set }
$a_{i}:=a_{i}+P/2^{\ell}$

\qquad\qquad\texttt{if }$\left\vert S_{B}\right\vert >0$

\qquad\qquad\qquad\texttt{let } $a_{i}:=a_{i}+P/2^{\ell}$ \texttt{ for all }
$i\in S_{B}$

\qquad\qquad\qquad\texttt{choose }$\left\vert S_{B}\right\vert $\texttt{
elements }$i$\texttt{ such that }$a_{i}>b_{i}$ \texttt{ and set }
$a_{i}:=a_{i}-P/2^{\ell}$

\qquad\qquad\texttt{let }$\mathcal{A}_{2\ell}:=\operatorname*{type}%
(a_{1},a_{2},\ldots)$

\qquad\texttt{next }$\ell\bigskip$

The procedure is illustrated pictorially in Figure \ref{chopfig}.%
\begin{figure}[ptb]%
\centering
\includegraphics[
trim=2.006902in 6.022299in 2.258162in 0.252040in,
height=1.4702in,
width=5.0972in
]%
{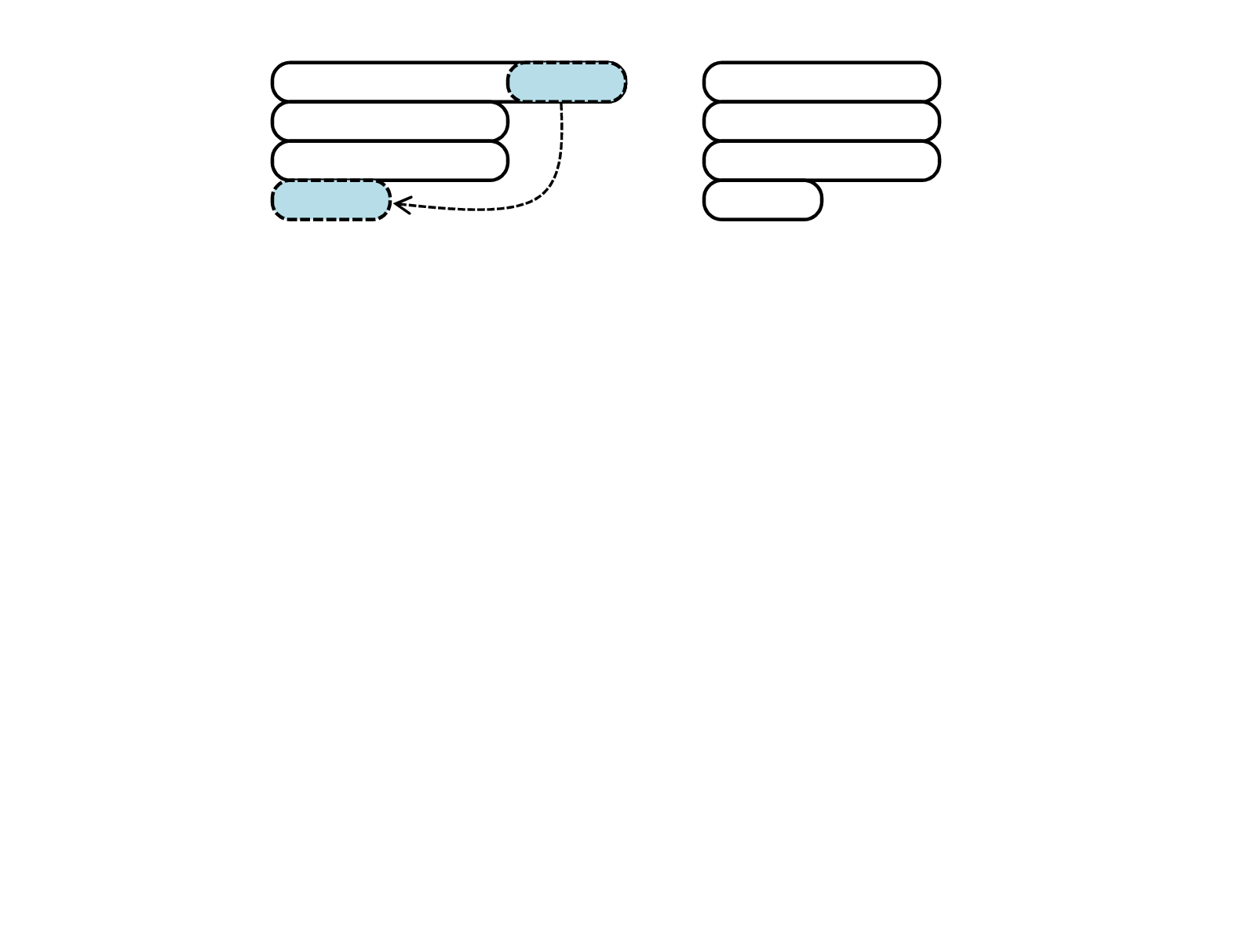}%
\caption{Chopping a row of $\mathcal{A}_{\ell}$'s Young diagram to make it
more similar to $\mathcal{B}_{\ell}$.}%
\label{chopfig}%
\end{figure}

We start with some simple observations. \ First, by construction, this
procedure halts after $2L=O\left(  \log N\right)  $ iterations. \ Second,
after the $\ell^{\mathrm{th}}$ iteration, we have $\left\vert a_{i}%
-b_{i}\right\vert <\frac{P}{2^{\ell}}$ for all $i$. \ This follows by
induction. \ Let $a_{i}^{\prime}$ be the value of $a_{i}$ before the
$\ell^{\mathrm{th}}$ iteration. \ Because of the inductive assumption, we must
have $\left\vert a_{i}^{\prime}-b_{i}\right\vert <\frac{P}{2^{\ell-1}}$---for
if $\left\vert a_{i}^{\prime}-b_{i}\right\vert \geq\frac{P}{2^{\ell}}$, then
$a_{i}$ is changed by $\frac{P}{2^{\ell}}$ during the $\ell^{\mathrm{th}}$
iteration, to decrease the difference $\left\vert a_{i}-b_{i}\right\vert $.
\ After this change,
\[
\left\vert a_{i}-b_{i}\right\vert =\left\vert a_{i}^{\prime}-b_{i}\right\vert
-\frac{P}{2^{\ell}}<\frac{P}{2^{\ell-1}}-\frac{P}{2^{\ell}}=\frac{P}{2^{\ell}%
}.
\]
Besides the $\left\vert a_{i}^{\prime}-b_{i}\right\vert \geq\frac{P}{2^{\ell}%
}$ case, there is one other case where $\left\vert a_{i}-b_{i}\right\vert $
could change. \ In the transition from $A_{2\ell-1}$ to $A_{2\ell}$, if
$\left\vert S_{A}\right\vert >0$ or $\left\vert S_{B}\right\vert >0$, then we
change $a_{i}$ for $\left\vert S_{A}\right\vert $ or $\left\vert
S_{B}\right\vert $ elements $i$ that do not belong to $S_{A}$ or $S_{B}$.
\ For those elements, we have $\left\vert a_{i}-b_{i}\right\vert <\frac
{P}{2^{\ell}}$ and we change $a_{i}$ in the direction of $b_{i}$ (we increase
it by $\frac{P}{2^{\ell}}$ if $a_{i}<b_{i}$ and decrease it by the same amount
if $a_{i}>b_{i}$). \ Therefore, after the change, the sign of the difference
$a_{i}-b_{i}$ flips and $\left\vert a_{i}-b_{i}\right\vert <\frac{P}{2^{\ell}%
}$.

Now let us define%
\[
\left\Vert \mathcal{A}-\mathcal{B}\right\Vert :=\frac{1}{2}\sum_{i=1}%
^{N}\left\vert a_{i}-b_{i}\right\vert .
\]
Notice that $\left\Vert \mathcal{A}_{\ell}-\mathcal{A}_{\ell-1}\right\Vert
=rP/2^{\ell^{\prime}}$, where $r$\ is the number of rows that get increased
(or decreased) in the $\ell^{th}$\ iteration and $l^{\prime}=\lceil\frac{l}%
{2}\rceil$. \ We now prove an upper bound on $\left\Vert \mathcal{A}_{\ell
}-\mathcal{A}_{\ell-1}\right\Vert $\ when $\ell$ is small, which will be
useful later.

\begin{lemma}
\label{chopub}If $\ell\leq\left(  \log_{2}T\right)  -2$, then%
\[
\left\Vert \mathcal{A}_{2\ell-2}-\mathcal{A}_{2\ell-1}\right\Vert +\left\Vert
\mathcal{A}_{2\ell-1}-\mathcal{A}_{2\ell}\right\Vert \leq\frac{4N}{T^{c}}.
\]

\end{lemma}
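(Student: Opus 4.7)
The plan is to bound $\|\mathcal{A}_\ell - \mathcal{A}_{\ell-1}\| = r\cdot P/2^\ell$ by controlling the number $r$ of rows chopped in iteration $\ell$, invoking the near-minimality condition (\ref{abbound}) on $(\mathcal{A}^*, \mathcal{B}^*)$ fixed in the previous subsection. The hypothesis $\ell \leq \log_2 T - 2$ is designed precisely to make the chop threshold $P/2^\ell$ large compared to the ``background'' term $3N/T$ appearing in (\ref{abbound}), so that any row chopped at level $\ell$ must be genuinely ``heavy'' in the original types.

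Concretely, I would first observe that any row $i$ chopped in iteration $\ell$ satisfies $a_i - b_i \geq P/2^\ell$ at that moment, and for any row traceable back to $\mathcal{A}^*, \mathcal{B}^*$ this forces $a_i^* - b_i \geq P/2^\ell$ (since prior chops only decrease $a_i$). Plugging into (\ref{abbound}) gives $a_i^* + b_i \geq T^c(P/2^\ell - 3N/T)$. Next I would use $2^\ell \leq T/4$ to conclude $P/2^\ell \geq 4P/T \geq 4N/T$, making the $3N/T$ term at most $\tfrac{3}{4}$ of $P/2^\ell$, so $a_i^* + b_i \geq T^c P/(4\cdot 2^\ell)$. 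Pigeonholing against $\sum_i(a_i^* + b_i) = 2N$ then caps the number of such rows at $8N\cdot 2^\ell/(T^c P)$, and multiplying by $P/2^\ell$ yields the target bound $8N/T^c$ on the contribution of original-row chops.

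The delicate step I expect to be the main obstacle is handling the ``descendant'' rows: entries $a_j = P/2^{\ell'}$ that earlier iterations $\ell' < \ell$ planted into empty slots, which the procedure can also chop but which are not directly constrained by (\ref{abbound}) since their $b_j$ was never part of $\mathcal{B}^*$. My plan is to charge each such chop back to the original chop that spawned its lineage: each lineage preserves its total mass across iterations (a row of value $P/2^{\ell-1}$ splits into two of value $P/2^\ell$), so the iteration-$\ell$ movement coming from a lineage rooted at $\ell'$ is controlled by that lineage's initiating mass, which the previous step already bounds. If this charging leaves a polylogarithmic slack, I expect it to be absorbed by the ``more sophisticated hybrid argument'' mentioned at the end of Section~\ref{RESULTS}, so that the stated constant $8$ in $8N/T^c$ emerges after a careful telescoping.
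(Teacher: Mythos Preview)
Your handling of the original-row chops (your steps 1--5) is correct and essentially matches the paper's use of (\ref{abbound}) together with $\ell \le (\log_2 T)-2$. The gap is in the descendant rows.

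Your charging scheme---routing the iteration-$\ell$ movement of each descendant sub-lineage back to its initiating chop at some earlier iteration $\ell'<\ell$---does control each $\ell'$-slice by $O(N/T^{c})$, but summing over $\ell'<\ell$ unavoidably produces a bound of order $\ell\cdot N/T^{c}$, not $8N/T^{c}$. There is no telescoping that collapses this sum: the $\ell'$-slices are genuinely disjoint contributions. Your suggestion that the slack is absorbed by the ``more sophisticated hybrid argument'' misreads that device; the weights $\beta_{\ell}=1/(10\ell^{2})$ in Section~\ref{TOGETHER} handle the length of the hybrid chain, not the constant in this lemma, and the lemma as stated really asserts $8N/T^{c}$ with no $\ell$-dependence.

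The paper avoids the loss by \emph{not} separating original from descendant chops. It groups every row chopped at iteration $\ell$ by its ancestral row $j$ in $\mathcal{A}^{\ast}$ and observes that the total movement contributed by lineage $j$ at iteration~$\ell$ is at most $|a_{j}-b_{j}|$. The reason is simple: that movement is $\tfrac{1}{2}(\text{descendant mass of }j)+[\ell\in S_{j}]\,P/2^{\ell}$, and the descendant mass together with the bit $[\ell\in S_{j}]\,P/2^{\ell}$ are disjoint pieces of the binary expansion of $a_{j}-b_{j}$, so their sum (and \emph{a fortiori} the movement) is at most $|a_{j}-b_{j}|$. Summing over the set $R$ of ancestral rows and using your own inequality $3N/T\le 3(a_{j}+b_{j})/T^{c}$ then gives
\[
\|\mathcal{A}_{\ell}-\mathcal{A}_{\ell-1}\|\;\le\;\sum_{j\in R}|a_{j}-b_{j}|\;\le\;\sum_{j\in R}\Bigl(\tfrac{3N}{T}+\tfrac{a_{j}+b_{j}}{T^{c}}\Bigr)\;\le\;4\sum_{j\in R}\tfrac{a_{j}+b_{j}}{T^{c}}\;\le\;\tfrac{8N}{T^{c}}.
\]
So the missing idea is to bound the per-lineage movement by $|a_{j}-b_{j}|$ in one stroke, rather than decomposing by initiating iteration.
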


\begin{proof}
Let $m:=\max(\left\vert S_{A}\right\vert ,\left\vert S_{B}\right\vert )$. Then%
\[
\left\Vert \mathcal{A}_{2\ell-2}-\mathcal{A}_{2\ell-1}\right\Vert +\left\Vert
\mathcal{A}_{2\ell-1}-\mathcal{A}_{2\ell}\right\Vert =m\frac{P}{2^{\ell}}.
\]
Without loss of generality, we assume that $m=\left\vert S_{A}\right\vert $.
\ To show the lemma, it suffices to prove that $\left\vert S_{A}\right\vert
\leq\frac{4N/T^{c}}{P/2^{\ell}}$.

We consider the sum $\sum_{j\in R}\left\vert a_{j}-b_{j}\right\vert $ where
$R$ is the set of all $j$ such that $\left\vert a_{j}-b_{j}\right\vert
\geq\frac{P}{2^{\ell}}$, with $(a_{1},a_{2},\ldots)$ evolving from
$\mathcal{A}_{0}$ to $\mathcal{A}_{2\ell-2}$ and $\mathcal{B}^{\ast}%
=(b_{1},b_{2},\ldots)$ fixed. \ Initially (when $(a_{1},a_{2},\ldots
)=\mathcal{A}_{0}$), we have
\[
\frac{P}{2^{\ell}}\leq\left\vert a_{j}-b_{j}\right\vert \leq\frac{2N}{T}%
+\frac{a_{j}+b_{j}}{T^{c}}%
\]
for each $j\in R$. \ Since $\ell\leq\left(  \log_{2}T\right)  -2$, the left
inequality implies%
\[
\left\vert a_{j}-b_{j}\right\vert \geq\frac{4N}{T},
\]
which combined with the right inequality yields%
\begin{equation}
\frac{a_{j}+b_{j}}{T^{c}}\geq\frac{2N}{T}. \label{dude}%
\end{equation}
Therefore%
\begin{align*}
\sum_{i\in R}\left\vert a_{i}-b_{i}\right\vert  &  \leq\sum_{i\in R}\left(
\frac{2N}{T}+\frac{a_{i}+b_{i}}{T^{c}}\right) \\
&  \leq2\sum_{i\in R}\frac{a_{i}+b_{i}}{T^{c}}\\
&  \leq\frac{4N}{T^{c}},
\end{align*}
where the third line uses (\ref{dude}).

The sum $\sum_{i\in R}\left\vert a_{i}-b_{i}\right\vert $ is not increased by
any step of the algorithm that generates $\mathcal{A}_{0},\ldots
,\mathcal{A}_{2\ell-2}$. Therefore, at the beginning of the ${\ell
}^{\mathrm{th}}$ iteration, we still have $\sum_{i\in R}\left\vert a_{i}%
-b_{i}\right\vert \leq\frac{4N}{T^{c}}$. This means that $\left\vert
S_{A}\right\vert \leq\frac{4N/T^{c}}{P/2^{\ell}}$.
\end{proof}

\subsection{Quantum Lower Bounds\label{QLB}}

Recall that we listed four properties that we needed the chopping procedure to
satisfy. \ We have already seen that it satisfies properties (i)-(ii),\ so the
remaining step is to show that it satisfies property (iii). \ That is, we need
to lower-bound $\operatorname*{Q}(\mathcal{A}_{\ell},\mathcal{A}_{\ell-1}%
)$,\ the bounded-error quantum query complexity of distinguishing inputs of
type $\mathcal{A}_{\ell}$\ from inputs of type $\mathcal{A}_{\ell-1}$. \ To do
this, it will be convenient to consider two cases: first, that forming
$\mathcal{A}_{\ell}$\ involved chopping few\ elements of $\mathcal{A}_{\ell
-1}$, and second, that it involved chopping many\ elements. \ We will show
that we \textquotedblleft win either way,\textquotedblright\ by a different
quantum lower bound in each case.

First consider the case that few elements were chopped. \ Here we prove a
lower bound using Ambainis's quantum adversary method \cite{ambainis}, in its
\textquotedblleft general\textquotedblright\ form (the one used, for example,
to lower-bound the quantum query complexity of inverting a permutation). \ For
completeness, we now state Ambainis's adversary theorem in the form we will need.

\begin{theorem}
[Ambainis \cite{ambainis}]\label{ambthm}Let $A,B\subseteq\left[  M\right]
^{N}$\ be two sets of inputs with $A\cap B=\varnothing$. \ Let $R\subseteq
A\times B$\ be a relation on input pairs, such that for every $X\in A$\ there
exists at least one $Y\in B$\ with $\left(  X,Y\right)  \in R$\ and vice
versa. \ Given inputs $X=(x_{1},\ldots,x_{N})$\ in $A$ and $Y=(y_{1}%
,\ldots,y_{N})$\ in $B$, let%
\begin{align*}
q_{X,i}  &  =\Pr_{Y\in B}\left[  x_{i}\neq y_{i}~|~\left(  X,Y\right)  \in
R\right]  ,\\
q_{Y,i}  &  =\Pr_{X\in A}\left[  x_{i}\neq y_{i}~|~\left(  X,Y\right)  \in
R\right]  .
\end{align*}
Suppose that $q_{X,i}q_{y,i}\leq\alpha$\ for every $\left(  X,Y\right)  \in
R$\ and every $i\in\left[  N\right]  $\ such that $x_{i}\neq y_{i}$. \ Then
$\operatorname*{Q}(A,B)=\Omega(1/\sqrt{\alpha})$.
\end{theorem}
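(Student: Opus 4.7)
The plan is to lower-bound $\operatorname*{Q}(A,B)$ via a progress-function (hybrid) argument. Fix an arbitrary $T$-query quantum algorithm and let $|\psi_X^t\rangle$ denote its state after $t$ queries on input $X \in A \cup B$. Because the initial state is independent of the input, $\langle\psi_X^0|\psi_Y^0\rangle = 1$ for every pair $(X,Y)\in R$. On the other hand, if the algorithm distinguishes $A$ from $B$ with constant bias, then a standard argument (distinguishable measurement outputs force the corresponding pure states to be far apart in trace distance) shows that $|\langle\psi_X^T|\psi_Y^T\rangle| \leq c$ for some absolute $c<1$ on every $(X,Y)\in R$.

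Define the progress function
\[
W_t \;:=\; \sum_{(X,Y)\in R} \langle\psi_X^t|\psi_Y^t\rangle.
\]
By the above, $W_0 = |R|$ while $|W_T| \leq c|R|$, so the algorithm must shrink $W_t$ by $\Omega(|R|)$ over the course of its $T$ queries. The input-independent unitaries that the algorithm interleaves between queries preserve each $\langle\psi_X^t|\psi_Y^t\rangle$ individually, so all of the damage happens at the query steps. The proof therefore reduces to showing that a single query changes $W_t$ by at most $O(\sqrt{\alpha})\cdot |R|$; this then gives $T = \Omega(1/\sqrt{\alpha})$, which is exactly the desired bound.

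For the per-query estimate, write $|\psi_X^t\rangle = \sum_{i\in[N]} |i\rangle|\phi_{X,i}\rangle$ with $i$ indexing the query register. A query applies a phase $(-1)^{x_i}$ to block $i$, so only indices with $x_i \neq y_i$ contribute to $\langle\psi_X^{t+1}|\psi_Y^{t+1}\rangle - \langle\psi_X^t|\psi_Y^t\rangle$. Summing the contributions over $(X,Y)\in R$ and applying the Cauchy--Schwarz inequality once with $X$ fixed (summing over the $Y$'s related to $X$) and once with $Y$ fixed, each cross term picks up factors of $\sqrt{q_{X,i}}$ and $\sqrt{q_{Y,i}}$. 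Invoking the hypothesis $q_{X,i}q_{Y,i}\le\alpha$ bounds the overall change by $O(\sqrt{\alpha})\cdot|R|$. The main obstacle is precisely arranging this double Cauchy--Schwarz so that the two conditional marginals appear symmetrically and the product bound can be invoked cleanly; this is the standard technical heart of every adversary-style proof, and is why Ambainis's original paper introduces weight functions to control the relative influence of different $(X,Y)$ pairs.
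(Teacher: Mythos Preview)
The paper does not prove this theorem at all: it is quoted from Ambainis \cite{ambainis} and used as a black box in the proofs of Lemmas \ref{usingamb} and \ref{booleanqlb}. So there is no ``paper's own proof'' to compare your sketch against.

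For what it is worth, your outline is a faithful high-level account of Ambainis's original argument. One small slip: you model the oracle as applying a phase $(-1)^{x_i}$, which is the Boolean case, whereas the theorem as stated here is for inputs in $[M]^N$; the appropriate oracle is $|i,z\rangle\mapsto|i,z\oplus x_i\rangle$ (or an equivalent phase oracle with $M$th roots of unity). This affects only the bookkeeping in the per-query estimate, not the structure of the argument. Your description of the ``double Cauchy--Schwarz'' is also a bit loose: in Ambainis's proof one bounds $\sum_{(X,Y)\in R:\,x_i\neq y_i}\|\phi_{X,i}\|\,\|\phi_{Y,i}\|$ by a single Cauchy--Schwarz after grouping terms, and the factors $q_{X,i}$ and $q_{Y,i}$ arise from counting, for each fixed $X$ (respectively $Y$), what fraction of its $R$-neighbors differ at coordinate $i$. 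But the essential idea---that the hypothesis $q_{X,i}q_{Y,i}\le\alpha$ caps the per-query drop in $W_t$ at $O(\sqrt{\alpha})\,|R|$---is exactly right.
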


Using Theorem \ref{ambthm},\ we can prove the following lower bound on
$\operatorname*{Q}\left(  \mathcal{A}_{\ell},\mathcal{A}_{\ell-1}\right)  $.

\begin{lemma}
\label{usingamb}Let $d=\left\Vert \mathcal{A}_{\ell}-\mathcal{A}_{\ell
-1}\right\Vert $, and assume $d\leq N/2$.\ \ Then $\operatorname*{Q}%
(\mathcal{A}_{\ell},\mathcal{A}_{\ell-1})=\Omega(\sqrt{N/d})$.
\end{lemma}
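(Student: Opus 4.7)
The plan is to invoke Ambainis's general adversary method (Theorem~\ref{ambthm}) by exhibiting a relation $R$ between inputs of type $\mathcal{A}_{\ell-1}$ and inputs of type $\mathcal{A}_{\ell}$ for which the per-coordinate product $q_{X,i}\,q_{Y,i}$ is uniformly $O(d/N)$; Ambainis's theorem will then give the required $\Omega(1/\sqrt{d/N}) = \Omega(\sqrt{N/d})$ lower bound.

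First I would unpack the chopping step concretely. Going from $\mathcal{A}_{\ell-1}$ to $\mathcal{A}_{\ell}$, the procedure in Section~\ref{CHOP} shortens $r$ rows (of lengths $a^{(1)},\ldots,a^{(r)}$, each satisfying $a^{(j)} > k$ by construction) by exactly $k = P/2^{\ell}$, and creates $r$ new rows of length $k$; thus $d = rk$. Set $A$ and $B$ to be the inputs of types $\mathcal{A}_{\ell-1}$ and $\mathcal{A}_{\ell}$ respectively, and define $R\subseteq A\times B$ by: $(X,Y)\in R$ iff $Y$ arises from $X$ by a single chopping operation, namely, for each $j\in [r]$, pick one specific row of $X$ of the appropriate length, pick $k$ of its positions, and reassign those positions to a fresh value $v'_j$ (the $v'_j$ being distinct values that do not appear elsewhere in $X$). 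By construction, any such $(X,Y)$ differ in exactly $d$ coordinates and every $X\in A$ (resp.\ $Y\in B$) is related to some $Y\in B$ (resp.\ $X\in A$).

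Next I would compute $q_{X,i}$ and $q_{Y,i}$ using the symmetry in how the chopping is sampled. For a coordinate $i$ with $x_i\neq y_i$, position $i$ must live in the $j$-th chopped row of $X$ for some $j$, and a standard counting argument gives $q_{X,i} = k/(m_j a^{(j)})$, where $m_j$ counts how many rows in $\mathcal{A}_{\ell-1}$ have length $a^{(j)}$ (and so could have served as the $j$-th chopped row). Symmetrically, $q_{Y,i}$ is the reciprocal of the number of length-$k$ rows of $Y$ that could have been the $j$-th ``new row'' times the analogous count on the $\mathcal{A}_{\ell}$ side, giving $q_{Y,i} = r_k'^{-1}\cdot(\text{pairing factor})$, where $r_k'$ is the total number of length-$k$ rows in $\mathcal{A}_{\ell}$. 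The totals $\sum_i q_{X,i} = \sum_i q_{Y,i} = d$ serve as sanity checks and show the probabilities are correct ``on average.''

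The main obstacle is promoting the average bound $d/N$ on the product to a pointwise bound. Here the argument divides along the row structure: when the chopped family $(a^{(j)})$ has many rows of each length (so $m_j$ is large and many length-$k$ rows appear in $\mathcal{A}_{\ell}$), both factors are small by direct counting; when there are few rows of some chopped length, the chopping condition $a^{(j)} > k$ together with (\ref{abbound}) and the assumption $d\leq N/2$ forces $a^{(j)}$ itself to be large, which again keeps $q_{X,i}$ small. In either regime the product telescopes to give $q_{X,i}\,q_{Y,i} = O(d/N)$, and then Theorem~\ref{ambthm} with $\alpha = O(d/N)$ yields $\operatorname*{Q}(\mathcal{A}_{\ell},\mathcal{A}_{\ell-1}) = \Omega(\sqrt{N/d})$, as claimed. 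If the naive pairing gives a product slightly worse than $d/N$ in some corner cases, I would refine the relation by restricting $R$ to a single chopped row at a time and taking the best lower bound over $j$, using $d = rk$ to patch the cases together.
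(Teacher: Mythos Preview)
Your relation $R$---chop and relabel to fresh values, with no further shuffling---does not give the pointwise product bound you need. Take the simplest case $r=1$: a single row of length $a^{(1)}$ in $\mathcal{A}_{\ell-1}$ is split into rows of lengths $a^{(1)}-k$ and $k$, so $d=k$. For any index $i$ with $x_i\neq y_i$, we get $q_{X,i}=k/a^{(1)}$ (the chance that this position is among the $k$ chosen from $a^{(1)}$), while $q_{Y,i}$ is at least the probability that the length-$k$ row of $Y$ containing $i$ is the one selected for re-merging---a constant whenever $\mathcal{A}_\ell$ has only $O(1)$ rows of length $k$. If, say, $a^{(1)}=2k$ and there are no other length-$k$ rows, then $q_{X,i}=q_{Y,i}=1/2$ and Theorem~\ref{ambthm} yields only $\Omega(1)$ rather than $\Omega(\sqrt{N/k})$. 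Your appeal to (\ref{abbound}) cannot repair this: the lemma assumes only $d\le N/2$, the paper's own proof never invokes (\ref{abbound}), and nothing in the chop rules forces $a^{(j)}$ to be comparable to $N$ (the rule is merely $a^{(j)}>k$). Restricting $R$ to one chopped row at a time, as you suggest for ``corner cases,'' is exactly the $r=1$ situation and fails for the same reason.

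The missing idea is the paper's step~(3): after changing the $d$ entries, \emph{swap them with $d$ uniformly chosen other entries of $X$}. This extra randomization is precisely what drives the product down. For any $i$ with $x_i\neq y_i$, either $i$ was one of the $d$ random swap targets in the $X\to Y$ direction, giving $q_{X,i}\le d/(N-d)$; or else $y_i$ is an ``old'' value of $X$, in which case $i$ must be one of the $d$ random swap targets in the reverse $Y\to X$ direction, giving $q_{Y,i}\le d/(N-d)$. Either way $q_{X,i}q_{Y,i}\le d/(N-d)$, and Theorem~\ref{ambthm} then yields $\Omega(\sqrt{(N-d)/d})=\Omega(\sqrt{N/d})$. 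The swap is the whole trick; without it the adversary bound degenerates.
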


\begin{proof}
Let $\mathcal{A}_{\ell-1}=(a_{1},a_{2},\ldots)$, and\ let $\ell^{\prime
}=\lceil\frac{\ell}{2}\rceil$. \ Then in the transition from $\mathcal{A}%
_{\ell-1}$ to $\mathcal{A}_{\ell}$, we augment or chop various rows by
$P/2^{\ell^{\prime}}$ elements each. \ Let $i\left(  1\right)  ,\ldots
,i\left(  r\right)  $\ be the $r$ rows in $\mathcal{A}_{\ell-1}$\ that get
chopped and let $i^{\prime}\left(  1\right)  ,\ldots,i^{\prime}\left(
r\right)  $\ be the $r$ rows in $\mathcal{A}_{\ell-1}$\ that get augmented.

Fix distinct $h_{1},\ldots,h_{r}\in\left[  M\right]  $ and $h_{1}^{\prime
},\ldots,h_{r}^{\prime}\in\left[  M\right]  $. \ Also, let us restrict
ourselves to inputs\ such that for each $j\in\left[  r\right]  $, there are
exactly $a_{i\left(  j\right)  }$\ indices $i\in\left[  N\right]
$\ satisfying $x_{i}=h_{j}$ and exactly $a_{i^{\prime}\left(  j\right)  }%
$\ indices $i\in\left[  N\right]  $\ satisfying $x_{i}=h_{j}^{\prime}$. G iven
inputs $X=(x_{1},\ldots,x_{N})$ in $\mathcal{A}_{\ell-1}$\ and $Y=(y_{1}%
,\ldots,y_{N})$ in $\mathcal{A}_{\ell}$, we set $\left(  X,Y\right)  \in
R$\ if and only if it is possible to transform $X$ to $Y$ in the following way:

\begin{enumerate}
\item[(1)] For each $j\in\left[  r\right]  $, change exactly $P/2^{\ell
^{\prime}}$\ of the $x_{i}$'s that are equal to $h_{j}$\ to value
$h_{j}^{\prime}$. \ (The total number of changed elements is $d$.)

\item[(2)] Swap the $d$\ elements of $X$\ that were changed in step (2) with
any other $d$\ elements $x_{i}$ of $X$, subject to the following constraints:

\begin{enumerate}
\item[(a)] we do not use $x_{i}$ such that $x_{i}=h_{j}$ for some $j$ and
$\frac{a_{i_{j}}-P/2^{\ell^{\prime}}}{P/2^{\ell^{\prime}}} < \frac{N-d}{3d}$;

\item[(b)] we do not use $x_{i}$ such that $x_{i}=h_{j}^{\prime}$ for some $j$
and $\frac{a_{i_{j}}}{P/2^{\ell^{\prime}}}<\frac{N-d}{3d}$.
\end{enumerate}
\end{enumerate}

The procedure is illustrated pictorially in Figure \ref{swapfig}.%
\begin{figure}[ptb]%
\centering
\includegraphics[
trim=2.055670in 5.924552in 2.469136in 0.330398in,
height=1.1398in,
width=3.7343in
]%
{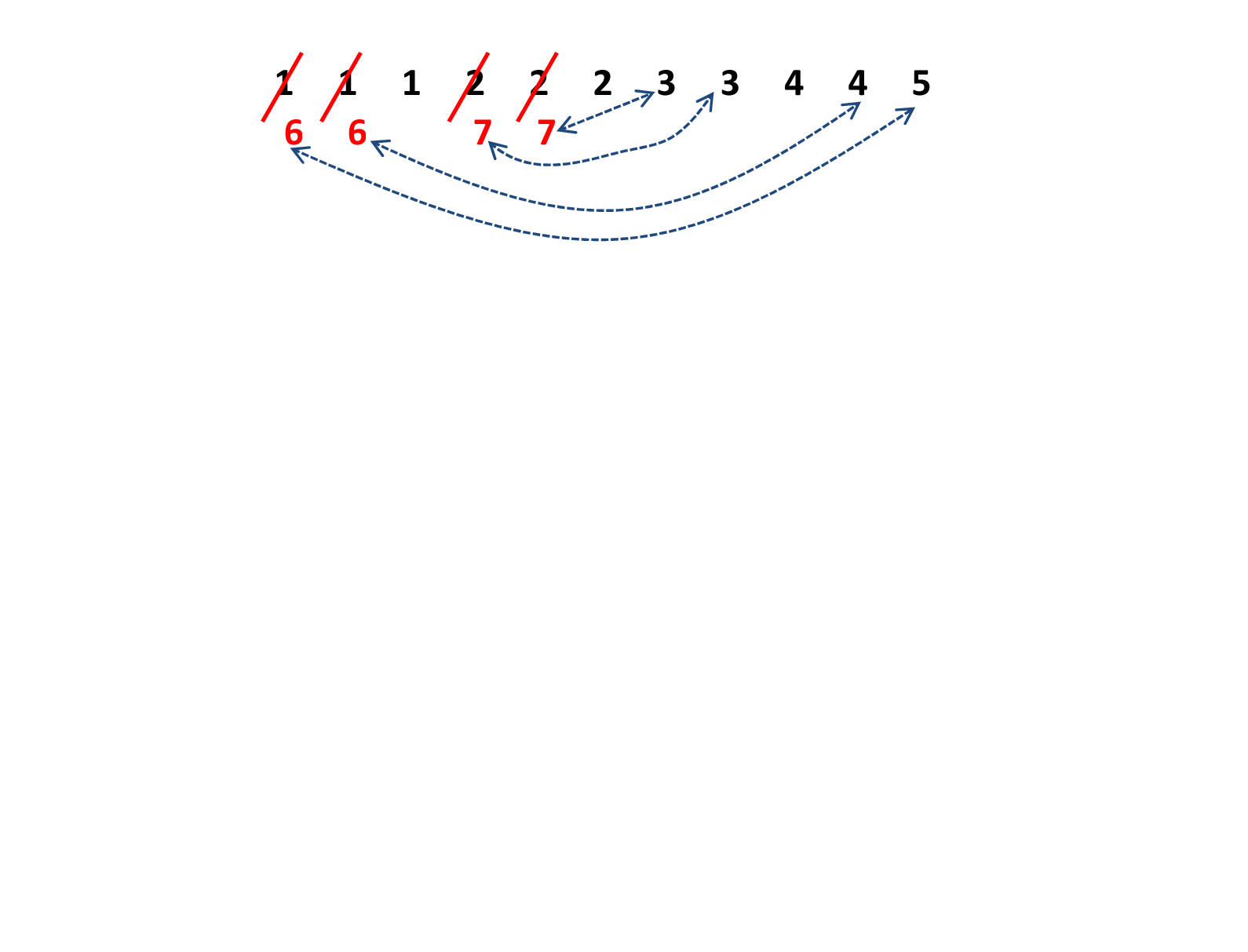}%
\caption{In this example, $N=11$, $r=2$, $P/2^{\ell}=2$, and $a_{1}=a_{2}=3$.
\ So we transform $X$ to $Y$ by choosing $h_{1}=1$\ and $h_{2}=2$, changing
any two elements equal to $h_{1}$ and any two elements equal to $h_{2}$, and
then swapping the four elements that we changed with four unchanged elements.}%
\label{swapfig}%
\end{figure}
Note that we can reverse the procedure in a natural way to go from $Y$ back to
$X$:

\begin{enumerate}
\item[(1)] For each $j\in\left[  r\right]  $, change exactly $P/2^{\ell
^{\prime}}$\ of the $x_{i}$'s that are equal to $h_{j}^{\prime}$ to value
$h_{j}$.

\item[(2)] Swap the $d$\ elements of $X$\ that were changed in step (2) with
any $d$\ elements $x_{i}$ of $X$, subject to the same constraints as in the
step (2) of the $X\rightarrow Y$ conversion.
\end{enumerate}

Fix any $\left(  X,Y\right)  \in R$, and let $i\in\left[  N\right]  $\ be any
index such that $x_{i}\neq y_{i}$. \ Then we claim that the parameters of
Theorem \ref{ambthm} satisfy either $q_{X,i}\leq\frac{6d}{N-d}$\ or
$q_{Y,i}\leq\frac{6d}{N-d}$. \ To see this, let us write $q_{X,i}%
=q_{X,i}^{\prime}+q_{X,i}^{\prime\prime}$, where $q_{X,i}^{\prime}$ is the
probability that $x_{i}$ is changed in step (1) of the $X\rightarrow
Y$\ conversion and $q_{X,i}^{\prime\prime}$ is the probability that $x_{i}$ is
not changed in step (1), but is swapped with some changed element in step (2).
\ We also express $q_{Y,i}$ in a similar way, with respect to the
$Y\rightarrow X$\ conversion.

We consider two cases. \ The first case is that $x_{i}$\ is one of the
\textquotedblleft other $d$\ elements\textquotedblright\ with which we swap
the changed elements in step (2) of the $X\rightarrow Y$\ conversion. \ In
this case, $q_{X,i}^{\prime}\neq0$ only if $x_{i}=h_{j}$ for some $j$. \ Then
because of the constraint (a), we have $q_{X,i}^{\prime}\leq\frac{3d}{N+2d}$.
\ We also have
\[
q_{X,i}^{\prime\prime}=\Pr_{Y^{\prime}\in\mathcal{A}_{\ell}}\left[  x_{i}\neq
y_{i}^{\prime}~|~\left(  X,Y^{\prime}\right)  \in R\right]  \leq\frac
{d}{(N-d)/3}=\frac{3d}{N-d},
\]
because each of the constraints (a) and (b) eliminates at most $(N-d)/3$ of
the $N-d$ variables $x_{i}$ that are available for swapping in step (2).
\ Therefore, $q_{X,i}=q_{X,i}^{\prime}+q_{X,i}^{\prime\prime}\leq\frac
{6d}{N-d}$.

The second case is that $x_{i}$ is one of the elements that are changed in
step (1) of the $X\rightarrow Y$\ conversion. \ Then $y_{i}$\ is one of the
\textquotedblleft other $d$\ elements\textquotedblright\ in step (2) of the
$Y\rightarrow X$\ conversion. \ Similarly to the previous case, we can show
that $q_{Y,i}\leq\frac{6d}{N-d}$.

Since $q_{X,i}\leq1$\ and $q_{Y,i}\leq1$, it follows that%
\[
q_{X,i}q_{Y,i}\leq\frac{6d}{N-d}.
\]
Thus, by Theorem \ref{ambthm},%
\[
\operatorname*{Q}(\mathcal{A}_{\ell},\mathcal{A}_{\ell-1})=\Omega\left(
\frac{1}{\sqrt{q_{X,i}q_{Y,i}}}\right)  =\Omega\left(  \sqrt{\frac{N-d}{d}%
}\right)  =\Omega\left(  \sqrt{\frac{N}{d}}\right)  .
\]

\end{proof}

We now consider the case that many elements are chopped. \ Here we prove a
lower bound by reduction from \textsc{SetEquality}. \ Given two sequences of
integers $Y\in\left[  M\right]  ^{N}$\ and $Z\in\left[  M\right]  ^{N}$,
neither with any repeats, the \textsc{SetEquality}\ problem is to decide
whether $Y$ and $Z$ are equal as sets or disjoint as sets, promised that one
of these is the case. \ \textsc{SetEquality}\ is similar to the collision
problem studied by Aaronson and Shi\ \cite{as}, but it lacks permutation
symmetry, making it harder to prove a lower bound by the polynomial method.
\ By combining the collision lower bound with Ambainis's adversary method,
Midrijanis \cite{midrijanis}\ was nevertheless able to show that%
\[
\operatorname*{Q}\left(  \text{\textsc{SetEquality}}\right)  =\Omega\left(
\left(  \frac{N}{\log N}\right)  ^{1/5}\right)  .
\]
Very recently, and using different ideas, Zhandry \cite{zhandry:col}\ managed
to improve Midrijanis's lower bound to the following:

\begin{theorem}
[Zhandry \cite{zhandry:col}]\label{zhandrythm}$\operatorname*{Q}\left(
\text{\textsc{SetEquality}}\right)  =\Omega(N^{1/3})$.
\end{theorem}

Theorem \ref{zhandrythm}\ is known to be tight, by the upper bound of
Brassard, H\o yer, and Tapp \cite{bht} mentioned in Section \ref{FORMAL}.

We will consider a modification of the \textsc{SetEquality}\ problem, which we
call \textsc{3SetEquality}. \ Here we are given three sequences of integers
$Y,Z,W\in\left[  M\right]  ^{N}$, none of which has any repeats. \ We are
promised that $Y$ and $W$ are disjoint as sets, and that $Z$ is equal either
to $Y$ or to $W$ as a set. \ The task is to distinguish between those two cases.

\begin{theorem}
\label{zhandrythm1}$\operatorname*{Q}($\textsc{3SetEquality}$)=\Omega
(N^{1/3})$.
\end{theorem}

\begin{proof}
The theorem follows from Theorem \ref{zhandrythm} together with the following
claim: if \textsc{3SetEquality} is solvable by a quantum algorithm
$\mathcal{A}$ that uses $T$ queries, then \textsc{3SetEquality} is solvable by
a quantum algorithm that uses $O(T)$ queries.

To show this, let $Y,W$ be an instance of \textsc{SetEquality}. \ We produce
an instance of \textsc{3SetEquality} by choosing $Z$ to be either a randomly
permuted version of $Y$ or a randomly permuted version of $W$. \ We then run
the algorithm for \textsc{3SetEquality} on that instance. \ If $Y$ and $W$ are
disjoint, then the promise of \textsc{3SetEquality} is satisfied and the
algorithm will find whether we used $Y$ or $W$ to generate $Z$. \ If $Y=W$,
then using $Y$ and using $W$ results in the same probability distribution for
$Z$; hence no algorithm will be able to guess whether we used $Y$ or $W$ with
probability greater than $1/2$.
\end{proof}

We now use Theorem \ref{zhandrythm1} to prove another lower bound on
$\operatorname*{Q}(\mathcal{A}_{\ell},\mathcal{A}_{\ell-1})$.

\begin{lemma}
\label{usingzha}Suppose $\mathcal{A}_{\ell}$\ was formed from $\mathcal{A}%
_{\ell-1}$\ by chopping $r$ rows.\ \ Then $\operatorname*{Q}\left(
\mathcal{A}_{\ell},\mathcal{A}_{\ell-1}\right)  =\Omega\left(  r^{1/3}\right)
$.
\end{lemma}

\begin{proof}
We will show how to embed a \textsc{3SetEquality}\ instance of size $r$ into
the $\mathcal{A}_{\ell}$\ versus $\mathcal{A}_{\ell-1}$\ problem.

Let $\mathcal{A}_{\ell-1}=(a_{1},\ldots,a_{u})$. \ Also, let $i\left(
1\right)  ,\ldots,i\left(  r\right)  \in\left[  u\right]  $\ be the $r$ rows
that are chopped in going from $\mathcal{A}_{\ell-1}$\ to $\mathcal{A}_{\ell}%
$, let $i^{\prime}\left(  1\right)  ,\ldots,i^{\prime}\left(  r\right)
\in\left[  u\right]  $\ be the $r$ rows that are augmented, and let $j\left(
1\right)  ,\ldots,j\left(  u-2r\right)  \in\left[  u\right]  $\ be the
$u-2r$\ rows that are left unchanged. \ Recall that, in going from
$\mathcal{A}_{\ell-1}$\ to $\mathcal{A}_{\ell}$, each row $i\left(  k\right)
$ (or $i^{\prime}\left(  k\right)  $)\ is chopped or augmented by
$P/2^{\ell^{\prime}}$ elements, where $\ell^{\prime}=\lceil\frac{\ell}%
{2}\rceil$.

Now let $Y=(y_{1},\ldots,y_{r})$, $Z=(z_{1},\ldots,z_{r})$, $W=(z_{1}%
,\ldots,z_{r})$ be an instance of \textsc{3SetEquality}. \ Then we construct
an input $X\in\left[  M\right]  ^{N}$\ as follows. \ First, for each
$k\in\left[  r\right]  $,\ set $a_{i\left(  k\right)  }-P/2^{\ell^{\prime}}%
$\ of the $x_{i}$'s equal to $y_{k}$, set $P/2^{\ell^{\prime}}$\ of the
$x_{i}$'s equal to\ $z_{k}$ and set $a_{i\left(  k\right)  }^{\prime}$ of the
$x_{i}$'s equal to $w_{k}$. \ Next, let $w_{1},w_{2},\ldots\in\left[
M\right]  $ be a list of numbers that are guaranteed \textit{not} to be in
$Y\cup Z$. \ Then for each $k\in\left[  u-2r\right]  $, set $a_{j\left(
k\right)  }$\ of the $x_{i}$'s\ equal to $w_{k}$.

It is easy to see that, if $Y$ and $Z$\ are equal as sets, then $X$\ will have
type $\mathcal{A}_{\ell-1}$, while if $Z$\ and $W$\ are\ equal as sets, then
$X$\ will have type $\mathcal{A}_{\ell}$. \ So in deciding whether
$X$\ belongs to $\mathcal{A}_{\ell}$\ or $\mathcal{A}_{\ell-1}$, we also
decide whether $Y=Z$ or $Z=W$. \ The lemma now follows from Theorem
\ref{zhandrythm1}.
\end{proof}

\subsection{Putting Everything Together\label{TOGETHER}}

Let $\mathcal{C}$\ be a\ quantum query algorithm that distinguishes
$\mathcal{A}_{0}=\mathcal{A}^{\ast}$\ from $\mathcal{A}_{2L}=\mathcal{B}%
^{\ast}$, and assume $\mathcal{C}$\ is optimal: that is, it makes
$\operatorname*{Q}(\mathcal{A}^{\ast},\mathcal{B}^{\ast})\leq\operatorname*{Q}%
(f)$\ queries. \ As mentioned earlier, we can assume that\ $\Pr\left[
\mathcal{C}~\text{accepts }X\right]  $\ depends only on the type of $X$.
\ Thus, let%
\begin{align*}
p_{\ell}  &  :=\Pr\left[  \mathcal{C}~\text{accepts }X\in\mathcal{A}_{\ell
}\right]  .
\end{align*}
Then by assumption, $\left\vert p_{0}-p_{2L}\right\vert \geq1/3$. Now let
$\beta_{\ell}:=\frac{1}{10\ell^{2}}$, and observe that $\sum_{\ell=1}^{\infty
}\beta_{\ell}<\frac{1}{6}$. \ By the triangle inequality, it follows that
there exists an $\ell\in\left[  2L\right]  $\ such that $\left\vert p_{\ell
}-p_{\ell-1}\right\vert \geq\beta_{\ell}$. \ In other words, we get a
$\operatorname*{Q}(f)$-query\ quantum algorithm that distinguishes
$\mathcal{A}_{\ell}$ from $\mathcal{A}_{\ell-1}$\ with bias $\beta_{\ell}$.
\ By Proposition \ref{amplify}, this immediately implies%
\[
\operatorname*{Q}(\mathcal{A}_{\ell},\mathcal{A}_{\ell-1})=O\left(
\frac{\operatorname*{Q}(f)}{\beta_{\ell}}\right)
\]
or equivalently%
\[
\operatorname*{Q}(f)=\Omega\left(  \frac{\operatorname*{Q}(\mathcal{A}_{\ell
},\mathcal{A}_{\ell-1})}{\ell^{2}}\right)  .
\]
Now let $d=\left\Vert \mathcal{A}_{\ell}-\mathcal{A}_{\ell-1}\right\Vert $,
and suppose $\mathcal{A}_{\ell}$\ was produced from $\mathcal{A}_{\ell-1}$\ by
chopping $r$ rows. \ Then $d=rP/2^{\ell^{\prime}}\leq2rN/2^{\ell^{\prime}}$
where $l^{\prime}=\lceil\frac{l}{2}\rceil$. \ Combining Lemmas \ref{usingamb}
and \ref{usingzha}, we find that%
\begin{align*}
\operatorname*{Q}(\mathcal{A}_{\ell},\mathcal{A}_{\ell-1})  &  =\Omega\left(
\max\left\{  \sqrt{\frac{N}{d}},r^{1/3}\right\}  \right) \\
&  =\Omega\left(  \sqrt{\frac{2^{\ell^{\prime}}}{r}}+r^{1/3}\right) \\
&  =\Omega\left(  2^{\ell^{\prime}/5}\right)  ,
\end{align*}
since the minimum occurs when $r$ is asymptotically $2^{3\ell^{\prime}/5}$.
\ If $\ell^{\prime}\leq\left(  \log_{2}T\right)  -2$, then combining Lemmas
\ref{usingamb} and \ref{chopub}, we also have the lower bound%
\[
\operatorname*{Q}(\mathcal{A}_{\ell},\mathcal{A}_{\ell-1})=\Omega\left(
\sqrt{\frac{N}{4N/T^{c}}}\right)  =\Omega(\sqrt{T^{c}}).
\]
Hence%
\[
\operatorname*{Q}(f)=\left\{
\begin{array}
[c]{cc}%
\Omega\left(  \frac{\sqrt{T^{c}}}{\ell^{2}}\right)  & \text{if }\ell^{\prime
}\leq\left(  \log_{2}T\right)  -2\\
\Omega\left(  2^{\ell^{\prime}/5}\right)  & \text{if }\ell^{\prime}>\left(
\log_{2}T\right)  -2.
\end{array}
\right.
\]
Let us now make the choice $c=2/5$,\ so that we get a lower bound of%
\[
\operatorname*{Q}(f)=\Omega\left(  \frac{T^{1/5}}{\log^{2}T}\right)
\]
in either case. \ Hence $T=O(\operatorname*{Q}(f)^{5}\log^{10}%
\operatorname*{Q}(f))$. \ By Lemma \ref{rworks}:%
\begin{align*}
\operatorname*{R}(f)  &  =O(T^{1+c}\log T)\\
&  =O(T^{7/5}\log T)\\
&  =O(\operatorname*{Q}(f)^{7}\log^{15}\operatorname*{Q}(f)).
\end{align*}
This completes the proof of Theorem \ref{mainthm}.

\section{Quantum Lower Bounds Under The Uniform Distribution\label{RO}}

In this section, we consider the problems of $\mathsf{P}\overset{?}{=}%
\mathsf{BQP}$\ relative to a random oracle,\ and of simulating a $T$-query
quantum algorithm on \textit{most} inputs using $T^{O\left(  1\right)  }%
$\ classical queries. \ We show that these problems are connected to a
fundamental conjecture about influences in low-degree polynomials.

Recall Conjecture \ref{infconj},\ which said that \textit{bounded polynomials
have influential variables}: that is, for every degree-$d$ polynomial
$p:\mathbb{R}^{N}\rightarrow\mathbb{R}$\ such that $0\leq p(X)\leq1$\ for all
$X\in\left\{  0,1\right\}  ^{N}$, there exists an $i\in\left[  N\right]  $
such that $\operatorname*{Inf}\nolimits_{i}\left[  p\right]  \geq
(\operatorname*{Var}\left[  p\right]  /d)^{O\left(  1\right)  }$, where%
\begin{align*}
\operatorname*{Inf}\nolimits_{i}\left[  p\right]   &  :=\operatorname*{E}%
_{X\in\left\{  0,1\right\}  ^{N}}\left[  (p(X)-p(X^{i}))^{2}\right]  ,\\
\operatorname*{Var}\left[  p\right]   &  :=\operatorname*{E}_{X\in\left\{
0,1\right\}  ^{N}}\left[  (p(X)-\operatorname*{E}\left[  p\right]
)^{2}\right]  .
\end{align*}
We will show that Conjecture \ref{infconj} has several powerful consequences
for quantum complexity theory.

As a first step, let%
\[
\operatorname*{Inf}\left[  p\right]  :=\sum_{i=1}^{N}\operatorname*{Inf}%
\nolimits_{i}\left[  p\right]
\]
be the \textit{total influence} of $p$. \ Then we have the following bound,
versions of which have long been known in the analysis of Boolean functions
community,\footnote{For example, Shi \cite{shi:inf} proved the bound for the
special case of Boolean functions, and generalizing his proof to arbitrary
bounded functions is straightforward.} but which we prove for completeness.

\begin{lemma}
[folklore]\label{qlbinf}Let $p:\mathbb{R}^{N}\rightarrow\mathbb{R}$\ be a
degree-$d$ real polynomial such that $0\leq p(X)\leq1$\ for all $X\in\left\{
0,1\right\}  ^{N}$. \ Then $\operatorname*{Inf}\left[  p\right]  \leq d$.
\end{lemma}

\begin{proof}
Let $q$ be the analogue of $p$ in the Fourier representation:%
\[
q(x_{1},\ldots,x_{N}):=1-2p\left(  \frac{1-x_{1}}{2},\ldots,\frac{1-x_{N}}%
{2}\right)  .
\]
Clearly $\deg(q)=\deg(p)=d$\ and $-1\leq q(X)\leq1$\ for all $X\in\left\{
1,-1\right\}  ^{N}$. \ Also, defining $X^{i}$\ to be $X\in\left\{
1,-1\right\}  ^{N}$\ with $x_{i}$ negated, and%
\[
\operatorname*{Inf}\nolimits_{i}\left[  q\right]  :=\frac{1}{4}%
\operatorname*{E}_{X\in\left\{  1,-1\right\}  ^{N}}\left[  (q(X)-q(X^{i}%
))^{2}\right]  ,
\]
we have $\operatorname*{Inf}\nolimits_{i}\left[  q\right]
=\operatorname*{Inf}\nolimits_{i}\left[  p\right]  $.

Note that we can express $q$ as%
\[
q(X)=\sum_{S\subseteq\left[  N\right]  ~:~\left\vert S\right\vert \leq
d}\alpha_{S}\chi_{S}(X),
\]
where $\alpha_{S}\in\mathbb{R}$\ and $\chi_{S}(X):=\prod_{i\in S}x_{i}$\ is
the Fourier character corresponding to the set $S$. \ Furthermore, by
Parseval's identity,%
\[
\sum_{\left\vert S\right\vert \leq d}\alpha_{S}^{2}=\frac{1}{2^{N}}\sum
_{X\in\left\{  1,-1\right\}  ^{N}}q(X)^{2}\leq1.
\]
Now, in the Fourier representation, it is known that%
\[
\operatorname*{Inf}\nolimits_{i}\left[  q\right]  =\sum_{\left\vert
S\right\vert \leq d~:~i\in S}\alpha_{S}^{2}.
\]
Hence%
\[
\operatorname*{Inf}\left[  p\right]  =\operatorname*{Inf}\left[  q\right]
=\sum_{i\in\left[  N\right]  }\sum_{\left\vert S\right\vert \leq d~:~i\in
S}\alpha_{S}^{2}=\sum_{\left\vert S\right\vert \leq d~}\sum_{i\in S}\alpha
_{S}^{2}=\sum_{\left\vert S\right\vert \leq d}\left\vert S\right\vert
\alpha_{S}^{2}\leq d\sum_{\left\vert S\right\vert \leq d}\alpha_{S}^{2}\leq d
\]
as claimed.
\end{proof}

We also need the following lemma of Beals et al.\ \cite{bbcmw}.

\begin{lemma}
[Beals et al.]\label{bbcmwlem}Suppose a quantum algorithm $Q$\ makes $T$
queries to a Boolean input $X\in\left\{  0,1\right\}  ^{N}$. \ Then $Q$'s
acceptance probability is a real multilinear polynomial $p(X)$, of degree at
most $2T$.
\end{lemma}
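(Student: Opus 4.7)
The plan is to track how the amplitudes of the quantum algorithm evolve as polynomials in the input bits. Write the state after $t$ queries as $\sum_z \alpha_z^{(t)}(X)\,|z\rangle$, where the sum is over basis states of the algorithm's workspace (which I split as $|i, b, w\rangle$ with $i$ the query register, $b$ the answer bit, and $w$ the rest). I will prove by induction on $t$ that each amplitude $\alpha_z^{(t)}(X)$ is a (complex) polynomial in $x_1,\ldots,x_N$ of degree at most $t$. The base case $t=0$ is immediate: the initial state is independent of $X$, so each $\alpha_z^{(0)}$ is a constant.

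For the inductive step, I would separate the two ingredients of one query step: a unitary $U_t$ that does not depend on $X$, followed by the standard oracle $O_X : |i, b, w\rangle \mapsto |i, b \oplus x_i, w\rangle$. Applying $U_t$ takes fixed complex-linear combinations of the amplitudes, so it preserves the degree bound. Applying $O_X$ sends the amplitude of $|i, b, w\rangle$ to what used to sit at $|i, b \oplus x_i, w\rangle$, which equals $(1-x_i)\alpha_{i,b,w}^{(t)} + x_i \alpha_{i,1-b,w}^{(t)}$; this multiplies by a linear function of $x_i$, raising the degree by at most one. After $T$ queries every amplitude is thus a polynomial of degree at most $T$ in $X$.

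The acceptance probability is $p(X) = \sum_{z \in \text{acc}} |\alpha_z^{(T)}(X)|^2 = \sum_{z\in\text{acc}} \alpha_z^{(T)}(X)\overline{\alpha_z^{(T)}(X)}$. Writing each amplitude as $\alpha_z^{(T)} = \mathrm{Re}(\alpha_z^{(T)}) + i\,\mathrm{Im}(\alpha_z^{(T)})$, where real and imaginary parts are real polynomials of degree at most $T$, the modulus squared is a real polynomial of degree at most $2T$, and summing preserves this bound. Finally, since $x_i \in \{0,1\}$ implies $x_i^k = x_i$ for all $k \geq 1$, I can reduce $p$ to its multilinear form without changing its values on the Boolean cube or increasing its degree.

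The argument is essentially bookkeeping and there is no serious obstacle; the only point requiring a little care is making sure the oracle step is written so that the degree increment is clearly at most one (rather than two), which is handled by the explicit $(1-x_i),x_i$ decomposition above, and making sure that passing from $|\alpha|^2$ to a real polynomial and then multilinearizing do not inflate the degree beyond $2T$.
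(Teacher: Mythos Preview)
Your proof is correct and is precisely the standard argument of Beals et al.\ \cite{bbcmw}: track amplitudes as degree-$t$ polynomials through $t$ query steps (input-independent unitaries preserve degree, the oracle adds one via the $(1-x_i),x_i$ split), then square and multilinearize. The paper itself does not reprove this lemma but simply cites it, so there is nothing further to compare.
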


\subsection{Consequences of Our Influence Conjecture\label{CONSEQ}}

We now prove our first consequence of Conjecture \ref{infconj}: namely, that
it implies the folklore Conjecture \ref{folkloreconj}.

\begin{theorem}
\label{impthm}Suppose Conjecture \ref{infconj} holds, and let $\varepsilon
,\delta>0$. \ Then given any quantum algorithm $Q$ that makes $T$ queries to a
Boolean input $X$, there exists a deterministic classical algorithm that makes
$\operatorname*{poly}(T,1/\varepsilon,1/\delta)$\ queries,\ and that
approximates $Q$'s acceptance probability to within an additive constant
$\varepsilon$ on a $1-\delta$\ fraction of inputs.
\end{theorem}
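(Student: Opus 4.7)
The plan is to build, from the $T$-query quantum algorithm $Q$, a deterministic decision tree $\mathcal{T}$ that greedily queries variables furnished by Conjecture~\ref{infconj0} and outputs a conditional mean at each leaf. By Lemma~\ref{bbcmwlem}, $Q$'s acceptance probability is a polynomial $p:\{0,1\}^N \to [0,1]$ of degree at most $d=2T$. Fix thresholds $\eta := \varepsilon\delta/4$ and $\lambda := (\eta/d)^{c}$, where $c$ is the absolute constant promised by Conjecture~\ref{infconj0}. At a node with partial restriction $\rho$: if $\operatorname{Vr}[p|_\rho] \le \eta$, I declare the node a leaf and output $\operatorname{E}[p|_\rho]$; otherwise, since $p|_\rho$ is still a bounded polynomial of degree at most $d$, Conjecture~\ref{infconj0} yields some coordinate $i^{\ast}$ with $\operatorname{Inf}_{i^{\ast}}[p|_\rho] \ge \lambda$, which $\mathcal{T}$ then queries. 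Leaf correctness is immediate from Jensen and Markov: $\|p|_\rho - \operatorname{E}[p|_\rho]\|_1 \le \operatorname{Vr}[p|_\rho] \le \eta$, so at most an $\eta/\varepsilon = \delta/4$ fraction of inputs consistent with $\rho$ have $|p(X) - \operatorname{E}[p|_\rho]| > \varepsilon$.

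The heart of the argument is bounding the depth, for which I use the total-influence potential $\Phi(\rho) := \operatorname{SumInf}[p|_\rho]$. Since $p|_\rho$ is still the acceptance probability of a $T$-query quantum algorithm on the unfixed bits, Lemma~\ref{qlbinf} gives $\Phi(\rho) = O(T)$ everywhere. The key identity is that $L_1$ influence is preserved on average under conditioning on \emph{other} coordinates: for $i \ne i^{\ast}$ one has $\operatorname{E}_{x_{i^{\ast}}}[\operatorname{Inf}_i[p|_{\rho, x_{i^{\ast}}}]] = \operatorname{Inf}_i[p|_\rho]$ (simply writing out both expectations), while $\operatorname{Inf}_{i^{\ast}}[p|_{\rho, x_{i^{\ast}}}] = 0$ once $x_{i^{\ast}}$ is fixed. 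Summing over $i$ gives $\operatorname{E}_{x_{i^{\ast}}}[\Phi(\rho, x_{i^{\ast}})] = \Phi(\rho) - \operatorname{Inf}_{i^{\ast}}[p|_\rho] \le \Phi(\rho) - \lambda$ at every internal node. Telescoping along the random path induced by a uniformly random input $X \in \{0,1\}^N$, and using $\Phi \ge 0$, yields $\lambda \cdot \operatorname{E}[\mathrm{depth}(X)] \le \Phi(\emptyset) = O(T)$, hence $\operatorname{E}[\mathrm{depth}(X)] = O(T/\lambda) = \operatorname{poly}(T,1/\varepsilon,1/\delta)$. I will truncate $\mathcal{T}$ at depth $D := 4\operatorname{E}[\mathrm{depth}(X)]/\delta$; by Markov, at most a $\delta/4$ fraction of inputs hit the cutoff, and a union bound with the $\delta/4$ leaf error closes the proof.

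The main obstacle I anticipate is not conceptual but the constant-pushing: verifying that the failure probabilities from the depth truncation and the leaf-Markov slack together fit inside the $\delta$ budget, and that the absolute exponent buried in $\lambda=(\eta/d)^{O(1)}$ propagates correctly to the final $\operatorname{poly}(T,1/\varepsilon,1/\delta)$ bound. A minor technical point will be checking that $p|_\rho$ really is a bounded degree-$d$ polynomial at every node, which it is because substituting hypercube values into a multilinear polynomial preserves both degree and range. What makes this strategy work smoothly is that Conjecture~\ref{infconj0} only has to supply a \emph{single} influential coordinate per internal node: combined with the conservation of $L_1$ influence under conditioning, and with Lemma~\ref{qlbinf}'s $O(T)$ cap on total influence, this one-variable guarantee drives $\Phi$ down within $\operatorname{poly}(T,1/\varepsilon,1/\delta)$ queries in expectation, which Markov then converts into the desired worst-case bound on a $1-\delta$ fraction of inputs.
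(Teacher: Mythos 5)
Your proposal is correct and follows essentially the same route as the paper: restrict the acceptance-probability polynomial (degree $\le 2T$ by Lemma~\ref{bbcmwlem}) by greedily querying the influential variable supplied by Conjecture~\ref{infconj0} whenever the $L_1$-variance exceeds $\Theta(\varepsilon\delta)$, output the conditional mean at termination, prove leaf correctness by Markov, and bound the number of queries by the total-influence potential together with Lemma~\ref{qlbinf}. The one place you diverge is the potential-decrease step, and your version is actually the more careful one: the paper asserts that each query decreases $\operatorname*{SumInf}$ by at least $q(\varepsilon\delta/T)$ for the induced polynomial, but for a single branch this can fail (restricting $x_1:=1$ in $p=x_1x_2$ leaves the total influence unchanged); the decrease is only guaranteed \emph{in expectation} over the answer bit, exactly the conservation identity you use. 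Your expectation-telescoping bound on the expected depth, followed by truncation at depth $O(T/(\lambda\delta))$ and absorbing the truncated inputs into the $\delta$ budget, repairs this at no real cost, and all the remaining bookkeeping (restrictions keep the polynomial bounded and of degree $\le 2T$, and remain acceptance probabilities of $T$-query algorithms so Lemma~\ref{qlbinf} applies) is handled correctly.
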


\begin{proof}
Let $p(X)$\ be the probability that $Q$ accepts input $X=(x_{1},\ldots,x_{N}%
)$.\ \ Then Lemma \ref{bbcmwlem} says that $p$ is a real polynomial of degree
at most $2T$. \ Assume Conjecture \ref{infconj}. \ Then for every such $p$,
there exists an index $i$ satisfying $\operatorname*{Inf}\nolimits_{i}\left[
p\right]  \geq w(\operatorname*{Var}\left[  p\right]  /T)$, for some fixed
polynomial $w$. \ Under that assumption, we give a classical algorithm
$C$\ that makes $\operatorname*{poly}(T,1/\varepsilon,1/\delta)$\ queries to
the $x_{i}$'s, and that approximates $p(X)$\ on most inputs $X$. \ In what
follows, assume $X\in\left\{  0,1\right\}  ^{N}$ is uniformly random.\bigskip

\texttt{\qquad set }$p_{0}:=p$

\texttt{\qquad for }$j:=0,1,2,\ldots$\texttt{:}

\texttt{\qquad\qquad if }$\operatorname*{Var}\left[  p_{j}\right]
\leq\varepsilon^{2}\delta/2$

\texttt{\qquad\qquad\qquad output }$\operatorname*{E}_{Y\in\left\{
0,1\right\}  ^{N-j}}\left[  p_{j}(Y)\right]  $\texttt{\ as approximation for
}$p(X)$\texttt{\ and halt}

\texttt{\qquad\qquad else}

\texttt{\qquad\qquad\qquad find an }$i\in\left[  N-j\right]  $\texttt{\ such
that }$\operatorname*{Inf}\nolimits_{i}\left[  p_{j}\right]  >w(\varepsilon
^{2}\delta/2T)$

\texttt{\qquad\qquad\qquad query }$x_{i}$\texttt{, and let }$p_{j+1}%
:\mathbb{R}^{N-j}\rightarrow\mathbb{R}$\texttt{\ be the polynomial}

\texttt{\qquad\qquad\qquad\qquad induced by the answer\bigskip}

When $C$\ halts, by assumption $\operatorname*{Var}\left[  p_{j}\right]
\leq\varepsilon^{2}\delta/2$. \ By Markov's inequality, this implies%
\[
\Pr_{X\in\left\{  0,1\right\}  ^{N-j}}\left[  \left\vert p_{j}%
(X)-\operatorname*{E}\left[  p_{j}\right]  \right\vert >\varepsilon\right]
<\frac{\delta}{2},
\]
meaning that \textit{when} $C$ halts, it succeeds with probability at least
$1-\delta/2$.

On the other hand, suppose $\operatorname*{Var}\left[  p_{j}\right]
>\varepsilon^{2}\delta/2$. \ Then by Conjecture \ref{infconj}, there exists an
index $i^{\ast}\in\left[  N\right]  $\ such that%
\[
\operatorname*{Inf}\nolimits_{i^{\ast}}\left[  p_{j}\right]  \geq w\left(
\frac{\operatorname*{Var}\left[  p_{j}\right]  }{T}\right)  \geq w\left(
\frac{\varepsilon^{2}\delta}{2T}\right)  .
\]
Thus, suppose we query $x_{i^{\ast}}$. \ Since $X$ is uniformly random,
$x_{i^{\ast}}$\ will be $0$ or $1$ with equal probability, even conditioned on
the results of all previous queries. \ So after the query, our new polynomial
$p_{j+1}$\ will satisfy%
\[
\Pr\left[  p_{j+1}=p_{j|x_{i^{\ast}}=0}\right]  =\Pr\left[  p_{j+1}%
=p_{j|x_{i^{\ast}}=1}\right]  =\frac{1}{2},
\]
where $p_{j|x_{i^{\ast}}=0}$\ and $p_{j|x_{i^{\ast}}=1}$\ are the polynomials
on $N-j-1$\ variables obtained from $p_{j}$\ by restricting $x_{i^{\ast}}$\ to
$0$ or $1$ respectively. \ Therefore%
\begin{align*}
\operatorname*{E}_{x_{i^{\ast}}\in\left\{  0,1\right\}  }\left[
\operatorname*{Inf}\left[  p_{j+1}\right]  \right]   &  =\frac{1}{2}\left(
\operatorname*{Inf}\left[  p_{j|x_{i^{\ast}}=0}\right]  +\operatorname*{Inf}%
\left[  p_{j|x_{i^{\ast}}=1}\right]  \right) \\
&  =\frac{1}{2}\left(  \sum_{i\neq i^{\ast}}\operatorname*{Inf}\nolimits_{i}%
\left[  p_{j|x_{i^{\ast}}=0}\right]  +\sum_{i\neq i^{\ast}}\operatorname*{Inf}%
\nolimits_{i}\left[  p_{j|x_{i^{\ast}}=1}\right]  \right) \\
&  =\sum_{i\neq i^{\ast}}\operatorname*{Inf}\nolimits_{i}\left[  p_{j}\right]
\\
&  =\operatorname*{Inf}\left[  p_{j}\right]  -\operatorname*{Inf}%
\nolimits_{i^{\ast}}\left[  p_{j}\right] \\
&  \leq\operatorname*{Inf}\left[  p_{j}\right]  -w\left(  \frac{\varepsilon
^{2}\delta}{2T}\right)  .
\end{align*}
By linearity of expectation, this imples that for all $j$,%
\[
\operatorname*{E}_{X\in\left\{  0,1\right\}  ^{N}}\left[  \operatorname*{Inf}%
\left[  p_{j}\right]  \right]  \leq\operatorname*{Inf}\left[  p_{0}\right]
-jw\left(  \frac{\varepsilon^{2}\delta}{2T}\right)
\]
But recall from Lemma \ref{qlbinf} that%
\[
\operatorname*{Inf}\left[  p_{0}\right]  \leq\deg\left(  p_{0}\right)
\leq2T.
\]
It follows that $C$ halts after an expected number of iterations that is at
most%
\[
\frac{\operatorname*{Inf}\left[  p_{0}\right]  }{w(\varepsilon^{2}\delta
/2T)}\leq\frac{2T}{w(\varepsilon^{2}\delta/2T)}.
\]
Thus, by Markov's inequality, the probability (over $X$) that $C$ has
\textit{not} halted after $\frac{4T}{\delta\cdot w(\varepsilon^{2}\delta/2T)}%
$\ iterations is at most $\delta/2$. \ Hence by the union bound, the
probability over $X$ that $C$ fails is at most $\delta/2+\delta/2=\delta$.
\ Since each iteration queries exactly one variable and%
\[
\frac{4T}{\delta\cdot w(\varepsilon^{2}\delta/2T)}=\operatorname*{poly}%
(T,1/\varepsilon,1/\delta),
\]
this completes the proof.
\end{proof}

An immediate corollary is the following:

\begin{corollary}
\label{dqcor}Suppose Conjecture \ref{infconj} holds.
\ Then$\ \operatorname*{D}_{\varepsilon+\delta}(f)\leq(\operatorname*{Q}%
_{\varepsilon}(f)/\delta)^{O\left(  1\right)  }$ for all Boolean functions
$f$\ and all $\varepsilon,\delta>0$.
\end{corollary}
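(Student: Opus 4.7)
The plan is to derive the corollary directly from Theorem \ref{impthm} by feeding it an optimal quantum algorithm for the bounded-error approximate-evaluation problem and then thresholding the output.

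First, let $T := \operatorname*{Q}_{\varepsilon}(f)$, and let $Q$ be a $T$-query quantum algorithm that computes $f(X)$ with probability at least $2/3$ on at least a $1-\varepsilon$ fraction of inputs $X \in \{0,1\}^N$. Write $p(X)$ for the acceptance probability of $Q$; by Lemma \ref{bbcmwlem} this is a real multilinear polynomial of degree at most $2T$ with values in $[0,1]$. On the "good" set $G \subseteq \{0,1\}^N$ of inputs on which $Q$ is correct (with $|G|/2^N \geq 1-\varepsilon$), we have $p(X) \geq 2/3$ when $f(X)=1$ and $p(X) \leq 1/3$ when $f(X)=0$.

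Next, I would apply Theorem \ref{impthm} to $Q$ with approximation parameter $1/6$ (a constant) and failure-fraction parameter $\delta$. This produces a deterministic classical algorithm $C$ making $\operatorname*{poly}(T, 6, 1/\delta) = (\operatorname*{Q}_{\varepsilon}(f)/\delta)^{O(1)}$ queries that, on at least a $1-\delta$ fraction of inputs $X$, outputs a value $\widetilde{p}(X)$ satisfying $|\widetilde{p}(X) - p(X)| \leq 1/6$. Call this fraction of inputs the "approximation-good" set $H$; then $|H|/2^N \geq 1-\delta$.

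Finally, define the deterministic algorithm $C'$ that runs $C$ and outputs $1$ if $\widetilde{p}(X) \geq 1/2$ and $0$ otherwise. For $X \in G \cap H$, the $1/6$-approximation of $p(X)$ separates the cases $p(X) \geq 2/3$ and $p(X) \leq 1/3$ across the threshold $1/2$, so $C'$ outputs $f(X)$ correctly. By the union bound, $|G \cap H|/2^N \geq 1 - \varepsilon - \delta$, so $C'$ computes $f$ on at least a $1-(\varepsilon+\delta)$ fraction of inputs using $(\operatorname*{Q}_{\varepsilon}(f)/\delta)^{O(1)}$ queries. This yields $\operatorname*{D}_{\varepsilon+\delta}(f) \leq (\operatorname*{Q}_{\varepsilon}(f)/\delta)^{O(1)}$, as claimed. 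The argument is essentially bookkeeping; the only point requiring a bit of care is choosing the approximation parameter in Theorem \ref{impthm} to be a small absolute constant (like $1/6$) so that it disappears into the $O(1)$ exponent, rather than coupling it to $\varepsilon$ or $\delta$.
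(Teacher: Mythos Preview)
Your proposal is correct and follows essentially the same approach as the paper's own proof: apply Theorem~\ref{impthm} to an optimal $\operatorname*{Q}_{\varepsilon}(f)$-query quantum algorithm with a constant approximation parameter (the paper uses $1/10$, you use $1/6$) and failure parameter $\delta$, then threshold at $1/2$ and use the union bound to conclude correctness on a $1-\varepsilon-\delta$ fraction of inputs. Your write-up is slightly more explicit about the good sets $G$ and $H$, but the argument is identical.
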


\begin{proof}
Let $Q$ be a quantum algorithm that evaluates $f(X)$, with bounded error, on a
$1-\varepsilon$\ fraction of inputs $X\in\left\{  0,1\right\}  ^{N}$. \ Let
$p(X):=\Pr\left[  Q\text{ accepts }X\right]  $. \ Now run the classical
simulation algorithm $C$ from Theorem \ref{impthm}, to obtain an estimate
$\widetilde{p}(X)$\ of $p(X)$\ such that%
\[
\Pr_{X\in\left\{  0,1\right\}  ^{N}}\left[  \left\vert \widetilde{p}%
(X)-p(X)\right\vert \leq\frac{1}{10}\right]  \geq1-\delta.
\]
Output $f(X)=1$\ if\ $\widetilde{p}(X)\geq\frac{1}{2}$\ and $f(X)=0$%
\ otherwise. \ By the theorem, this requires $\operatorname*{poly}%
(T,1/\delta)$ queries to $X$, and by the union bound it successfully computes
$f(X)$\ on at least a $1-\varepsilon-\delta$\ fraction of inputs $X$.
\end{proof}

We also get the following complexity-theoretic consequence:

\begin{theorem}
\label{pbqpthm}Suppose Conjecture \ref{infconj} holds. \ Then\ $\mathsf{P}%
=\mathsf{P}^{\mathsf{\#P}}$\ implies $\mathsf{BQP}^{A}\subset\mathsf{AvgP}%
^{A}$\ \ with probability $1$ for a random oracle $A$.
\end{theorem}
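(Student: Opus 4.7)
The plan is to combine Theorem~\ref{impthm} (the query-efficient classical simulation) with the hypothesis $\mathsf{P}=\mathsf{P}^{\mathsf{\#P}}$ to upgrade it to a time-efficient simulation, then pass from ``correct on $1-\delta$ of oracles for each fixed input'' to ``correct on $1-o(1)$ of inputs with probability~$1$ over the oracle'' via Fubini and Borel--Cantelli.

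Fix $L\in\mathsf{BQP}^{A}$, decided by a uniform family of quantum algorithms $Q_y$ that on input $y$ of length~$n$ make $T=\operatorname*{poly}(n)$ queries to $A$ and accept with probability $p_y(A)\geq 2/3$ if $y\in L^{A}$ and $\leq 1/3$ otherwise. By Lemma~\ref{bbcmwlem}, $p_y$ is a multilinear polynomial of degree at most $2T$ in the $N=2^{\operatorname*{poly}(n)}$ potentially queryable oracle bits. Applying Theorem~\ref{impthm} with $\varepsilon=1/6$ and $\delta=\delta_n:=n^{-3}$, for each $y$ one obtains a classical algorithm $C_y$ that makes $\operatorname*{poly}(n)$ queries to $A$ and approximates $p_y(A)$ to within $\varepsilon$ on a $1-\delta_n$ fraction of oracles; thresholding the estimate at $1/2$ then decides $L^{A}(y)$ correctly on a $1-\delta_n$ fraction of oracles.

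The main obstacle is making $C_y$ run in polynomial (not merely query-polynomial) time. At each iteration~$j$, the simulation of Theorem~\ref{impthm} must decide whether $\operatorname*{Vr}[p_j]\leq\varepsilon\delta$ and, if not, locate a variable $i$ with $\operatorname*{Inf}_i[p_j]\geq q(\varepsilon\delta/T)$. Each such quantity is an average of $p_y$ and its restrictions over exponentially many oracle completions. Since $\mathsf{BQP}\subseteq\mathsf{P}^{\mathsf{\#P}}$, the value $p_y(A)$ is exactly computable as a polynomial-bit rational in $\mathsf{P}^{\mathsf{\#P}}$; and since sums over exponential domains and absolute-value comparisons of $\mathsf{P}^{\mathsf{\#P}}$-computable quantities stay inside the counting hierarchy (hence inside $\mathsf{P}^{\mathsf{\#P}}$ by Toda's theorem), both the variance test and the per-variable influence test lie in $\mathsf{P}^{\mathsf{\#P}}$. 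To avoid iterating over all $2^{\operatorname*{poly}(n)}$ candidate oracle bits, I would binary-search on the address of~$i$: the decision ``does some $i$ in a given lexicographic interval satisfy $\operatorname*{Inf}_i[p_j]\geq q(\varepsilon\delta/T)$?'' lies in $\mathsf{NP}^{\mathsf{P}^{\mathsf{\#P}}}\subseteq\mathsf{P}^{\mathsf{\#P}}$. By Lemma~\ref{qlbinf} the number of iterations is at most $\operatorname*{SumInf}[p_0]/q(\varepsilon\delta/T)=\operatorname*{poly}(n)$, so under $\mathsf{P}=\mathsf{P}^{\mathsf{\#P}}$ the entire machine $C_y$ runs in $\operatorname*{poly}(n)$ time while making only $\operatorname*{poly}(n)$ queries to~$A$.

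Finally, lift per-input to per-oracle success. For each $y$ of length~$n$, $\Pr_A[C_y(y)\neq L^A(y)]\leq \delta_n=n^{-3}$, so Fubini gives $\operatorname*{E}_A[\#\{y:|y|=n,\ C_y(y)\neq L^A(y)\}]\leq 2^n/n^3$, and Markov then yields $\Pr_A[\text{error fraction at length }n>1/n]\leq 1/n^2$. Since $\sum_n 1/n^2<\infty$, the Borel--Cantelli lemma ensures that with probability~$1$ over $A$ the error fraction exceeds $1/n$ for only finitely many~$n$, placing $L$ in $\mathsf{AvgP}^A$ as required.
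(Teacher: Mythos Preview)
Your proof is essentially the paper's: run the query-simulation of Theorem~\ref{impthm}, argue that the needed quantities ($\operatorname*{Vr}[p_j]$, $\operatorname*{Inf}_i[p_j]$, and the search for an influential $i$) are computable in the counting hierarchy, collapse that to $\mathsf{P}$ under the hypothesis, and finish with Markov plus Borel--Cantelli. One correction: your appeal to Toda's theorem is misplaced---Toda gives $\mathsf{PH}\subseteq\mathsf{P}^{\#\mathsf{P}}$, not $\mathsf{CH}\subseteq\mathsf{P}^{\#\mathsf{P}}$, and $\mathsf{NP}^{\mathsf{P}^{\#\mathsf{P}}}\subseteq\mathsf{P}^{\#\mathsf{P}}$ is not known unconditionally; the paper instead just observes that $\mathsf{P}=\mathsf{P}^{\#\mathsf{P}}$ collapses the entire counting hierarchy to $\mathsf{P}$ by induction, which is all you need. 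You also skip two routine finishing steps the paper makes explicit: hardwiring the finitely many bad input lengths after Borel--Cantelli, and taking a countable intersection over all $L\in\mathsf{BQP}^{A}$ to get the containment simultaneously with probability~$1$.
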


\begin{proof}
Let $Q$ be a polynomial-time quantum Turing machine that queries an oracle
$A$, and assume $Q$\ decides some language $L\in\mathsf{BQP}^{A}$\ with
bounded error. \ Given an input $x\in\left\{  0,1\right\}  ^{n}$, let
$p_{x}(A):=\Pr\left[  Q^{A}(x)\text{ accepts}\right]  $. \ Then clearly
$p_{x}(A)$\ depends only on some finite prefix $B$\ of $A$, of size
$N=2^{\operatorname*{poly}(n)}$. \ Furthermore, Lemma \ref{bbcmwlem} implies
that $p_{x}$\ is a polynomial in the bits of $B$, of degree at most
$\operatorname*{poly}(n)$.

Assume Conjecture \ref{infconj} as well as $\mathsf{P}=\mathsf{P}%
^{\mathsf{\#P}}$. \ Then we claim that there exists a deterministic
polynomial-time algorithm $C$ such that for all $Q$ and $x\in\left\{
0,1\right\}  ^{n}$,%
\begin{equation}
\Pr_{A}\left[  \left\vert \widetilde{p}_{x}(A)-p_{x}(A)\right\vert >\frac
{1}{10}\right]  <\frac{1}{n^{3}}, \label{apx}%
\end{equation}
where $\widetilde{p}_{x}(A)$\ is the output of $C$ given input $x$ and oracle
$A$. \ This $C$ is essentially just the algorithm from Theorem \ref{impthm}.
\ The key point is that we can implement $C$ using not only
$\operatorname*{poly}(n)$\ queries to $A$, but also $\operatorname*{poly}%
(n)$\ computation steps.

To prove the claim, let $M$\ be any of the $2^{\operatorname*{poly}(n)}$
monomials\ in the polynomial $p_{j}$ from Theorem \ref{impthm}, and let
$\alpha_{M}$\ be the coefficient of $M$. \ Then notice that $\alpha_{M}$\ can
be computed to $\operatorname*{poly}(n)$\ bits of precision in $\mathsf{P}%
^{\mathsf{\#P}}$, by the same techniques used to show $\mathsf{BQP}%
\subseteq\mathsf{P}^{\mathsf{\#P}}$ \cite{bv}. \ Therefore the expectation%
\[
\operatorname*{E}_{Y\in\left\{  0,1\right\}  ^{N-j}}\left[  p_{j}(Y)\right]
=\sum_{M}\frac{\alpha_{M}}{2^{\left\vert M\right\vert }}%
\]
can be computed in $\mathsf{P}^{\mathsf{\#P}}$\ as well. \ The other two
quantities that arise in the algorithm---$\operatorname*{Var}\left[
p_{j}\right]  $\ and $\operatorname*{Inf}\nolimits_{i}\left[  p_{j}\right]
$---can also be computed in $\mathsf{P}^{\mathsf{\#P}}$, since they are simply
sums of squares of differences of $p_{j}(X)$'s. \ This means that finding an
$i$ such that $\operatorname*{Inf}\nolimits_{i}\left[  p_{j}\right]
>w(\varepsilon^{2}\delta/T)$\ is in $\mathsf{NP}^{\mathsf{\#P}}$. \ But under
the assumption that $\mathsf{P}=\mathsf{P}^{\mathsf{\#P}}$, we have
$\mathsf{P}=\mathsf{NP}^{\mathsf{\#P}}$ as well. \ Therefore all of the
computations needed to implement $C$\ take polynomial time.

Now\ let $\delta_{n}(A)$\ be the fraction of inputs $x\in\left\{  0,1\right\}
^{n}$\ such that $\left\vert \widetilde{p}_{x}(A)-p_{x}(A)\right\vert
>\frac{1}{10}$. \ Then by (\ref{apx}) together with Markov's inequality,%
\[
\Pr_{A}\left[  \delta_{n}(A)>\frac{1}{n}\right]  <\frac{1}{n^{2}}.
\]
Since $\sum_{n=1}^{\infty}\frac{1}{n^{2}}$\ converges, it follows that
$\delta_{n}(A)\leq\frac{1}{n}$\ for all but finitely many values of $n$, with
probability $1$ over $A$. \ Assuming this occurs, we can simply hardwire the
behavior of $Q$\ on the remaining $n$'s into our classical simulation
procedure $C$. \ Hence $L\in\mathsf{AvgP}^{A}$.

Since the number of $\mathsf{BQP}^{A}$ languages is countable, the above
implies that $L\in\mathsf{AvgP}^{A}$ for every $L\in\mathsf{BQP}^{A}%
$\ \textit{simultaneously} (that is, $\mathsf{BQP}^{A}\subset\mathsf{AvgP}%
^{A}$) with probability $1$ over $A$.
\end{proof}

As a side note, suppose we had an extremely strong variant of Conjecture
\ref{infconj}, one that implied something like%
\[
\Pr_{A}\left[  \left\vert \widetilde{p}_{x}(A)-p_{x}(A)\right\vert >\frac
{1}{10}\right]  <\frac{1}{\exp(n)}.
\]
in place of (\ref{apx}). \ Then we could eliminate the need for $\mathsf{AvgP}%
$ in Theorem \ref{pbqpthm}, and show that $\mathsf{P}=\mathsf{P}%
^{\mathsf{\#P}}$\ implies $\mathsf{P}^{A}=\mathsf{BQP}^{A}$\ \ with
probability $1$ for a random oracle $A$.

\subsection{Unconditional Results\label{UNCOND}}

We conclude this section with some unconditional results. \ These results will
use Theorem \ref{dfkothm0} of Dinur et al.\ \cite{dfko}: that for every
degree-$d$ polynomial $p:\mathbb{R}^{N}\rightarrow\mathbb{R}$\ such that
$0\leq p(X)\leq1$\ for all $X\in\left\{  0,1\right\}  ^{N}$, there exists a
polynomial $\widetilde{p}$ depending on at most $2^{O\left(  d\right)
}/\varepsilon^{2}$\ variables\ such that $\left\Vert \widetilde{p}%
-p\right\Vert _{2}^{2}\leq\varepsilon$, where%
\[
\left\Vert p\right\Vert _{2}^{2}:=\operatorname*{E}_{X\in\left\{  0,1\right\}
^{N}}\left[  p(X)^{2}\right]  .
\]
Theorem \ref{dfkothm0}\ has the following simple corollary.

\begin{corollary}
\label{nonadap}Suppose a quantum algorithm $Q$ makes $T$ queries to a Boolean
input $X\in\left\{  0,1\right\}  ^{N}$. \ Then for all $\alpha,\delta>0$, we
can approximate $Q$'s acceptance probability to within an additive constant
$\alpha$, on a $1-\delta$\ fraction of inputs, by making $\frac{2^{O\left(
T\right)  }}{\alpha^{4}\delta^{4}}$\ deterministic classical queries to $X$.
\ (Indeed, the classical queries are nonadaptive.)
\end{corollary}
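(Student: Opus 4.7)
The plan is to apply the Dinur et al.\ junta theorem (Theorem \ref{dfkothm0}) directly to $Q$'s acceptance probability and then convert the resulting $L_2$ approximation into a pointwise approximation on most inputs via Markov's inequality.

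First I would invoke Lemma \ref{bbcmwlem} to express $p(X):=\Pr[Q\text{ accepts }X]$ as a real multilinear polynomial of degree $d\le 2T$ with $0\le p(X)\le 1$ everywhere on $\{0,1\}^N$. Then I would apply Theorem \ref{dfkothm0} to $p$ with an approximation parameter $\varepsilon$ to be optimized at the end, obtaining a junta $\widetilde{p}:\mathbb{R}^N\to\mathbb{R}$ depending on a fixed set $S$ of at most $k=2^{O(d)}/\varepsilon^2=2^{O(T)}/\varepsilon^2$ coordinates, with
\[
\operatorname*{E}_{X\in\{0,1\}^N}\!\left[(\widetilde{p}(X)-p(X))^2\right]\le\varepsilon.
\]

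The classical simulation algorithm is then transparent: since $Q$ is fixed, both $S$ and the truth table of $\widetilde{p}$ restricted to $S$ can be computed offline. The algorithm queries the $|S|\le k$ bits of $X$ indexed by $S$ (these queries are nonadaptive because $S$ is fixed in advance), looks up the corresponding value of $\widetilde{p}(X)$, and outputs it (truncated to $[0,1]$ if desired, which only decreases the pointwise error against $p\in[0,1]$). To convert the mean-squared bound into a pointwise bound, apply Markov's inequality:
\[
\Pr_X\!\left[\,|\widetilde{p}(X)-p(X)|>\alpha\,\right]=\Pr_X\!\left[(\widetilde{p}(X)-p(X))^2>\alpha^2\right]\le\frac{\varepsilon}{\alpha^2}.
\]
Choosing $\varepsilon$ so that $\varepsilon/\alpha^2\le\delta$ (i.e.\ $\varepsilon$ of order $\alpha^2\delta$, with some additional slack to absorb the claimed $\delta^4$ in the statement) gives the desired $1-\delta$ success fraction while keeping $k=2^{O(T)}/(\alpha^{O(1)}\delta^{O(1)})$.

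There is no serious obstacle: the argument is just two results plugged together. The only minor thing to note is that Theorem \ref{dfkothm0} is non-constructive in $\widetilde{p}$, but this is irrelevant because (a) the oracle $Q$ is fixed and the junta can be found by brute force in preprocessing, and (b) the corollary concerns query complexity rather than time complexity. The one place to be careful is matching the exponents of $\alpha$ and $\delta$ in the advertised bound $2^{O(T)}/(\alpha^4\delta^4)$; the natural choice $\varepsilon=\alpha^2\delta$ yields $k=2^{O(T)}/(\alpha^4\delta^2)$, which is even a bit stronger, so any reasonable optimization fits comfortably inside the stated bound.
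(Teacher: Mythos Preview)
Your proposal is correct and follows essentially the same route as the paper: apply Lemma~\ref{bbcmwlem}, then Theorem~\ref{dfkothm0}, then Markov's inequality. The only difference is that the paper first passes from $L_2$ to $L_1$ via Cauchy--Schwarz before applying Markov, whereas you apply Markov directly to the squared error; your route is slightly tighter and indeed yields $2^{O(T)}/(\alpha^4\delta^2)$ rather than the paper's $2^{O(T)}/(\alpha^4\delta^4)$.
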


\begin{proof}
Let $p(X):=\Pr\left[  Q\text{ accepts }X\right]  $. \ Then $p$\ is a
degree-$2T$ real polynomial by Lemma \ref{bbcmwlem}. \ Hence, by Theorem
\ref{dfkothm0}, there exists a polynomial $\widetilde{p}$, depending on
$K=\frac{2^{O\left(  T\right)  }}{\alpha^{4}\delta^{4}}$\ variables $x_{i_{1}%
},\ldots,x_{i_{K}}$, such that%
\[
\operatorname*{E}_{X\in\left\{  0,1\right\}  ^{N}}\left[  (\widetilde{p}%
(X)-p(X))^{2}\right]  \leq\alpha^{2}\delta^{2}.
\]
By the Cauchy-Schwarz inequality, then,%
\[
\operatorname*{E}_{X\in\left\{  0,1\right\}  ^{N}}\left[  \left\vert
\widetilde{p}(X)-p(X)\right\vert \right]  \leq\alpha\delta,
\]
so by Markov's inequality%
\[
\Pr_{X\in\left\{  0,1\right\}  ^{N}}\left[  \left\vert \widetilde{p}%
(X)-p(X)\right\vert >\alpha\right]  <\delta.
\]
Thus, our algorithm is simply to query $x_{i_{1}},\ldots,x_{i_{K}}$, and then
output $\widetilde{p}(X)$\ as our estimate for $p(X)$.
\end{proof}

Likewise:

\begin{corollary}
\label{dqcor2}$\operatorname*{D}_{\varepsilon+\delta}(f)\leq
2^{O(\operatorname*{Q}_{\varepsilon}(f))}/\delta^{4}$ for all Boolean
functions $f$\ and all $\varepsilon,\delta>0$.
\end{corollary}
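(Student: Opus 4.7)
The plan is to mirror the proof of Corollary \ref{dqcor}, but substituting the unconditional Corollary \ref{nonadap} for the conditional Theorem \ref{impthm}. Let $T := \operatorname{Q}_{\varepsilon}(f)$ and let $Q$ be a $T$-query quantum algorithm that, with bounded error, evaluates $f(X)$ on at least a $1-\varepsilon$ fraction of inputs $X \in \{0,1\}^{N}$. Define $p(X) := \Pr[Q\text{ accepts }X]$, so that by Lemma \ref{bbcmwlem}, $p$ is a real polynomial of degree at most $2T$ taking values in $[0,1]$ on the Boolean cube.

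Next, I would invoke Corollary \ref{nonadap} with error parameters $\alpha := 1/10$ and the given $\delta$. This yields a deterministic (indeed nonadaptive) classical algorithm making $2^{O(T)}/\delta^{4}$ queries that outputs an estimate $\widetilde{p}(X)$ with $|\widetilde{p}(X) - p(X)| \leq 1/10$ on at least a $1-\delta$ fraction of inputs $X$. The final classical simulator then outputs $1$ if $\widetilde{p}(X) \geq 1/2$ and $0$ otherwise.

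To verify correctness, note that on any input $X$ where $Q$ computes $f(X)$ correctly with bounded error, we have $p(X) \geq 2/3$ when $f(X)=1$ and $p(X) \leq 1/3$ when $f(X)=0$; in either case, an additive slack of $1/10$ is easily enough for the threshold rule at $1/2$ to recover $f(X)$ from $\widetilde{p}(X)$. The classical algorithm thus succeeds on every $X$ at which both $Q$ is correct and the approximation $\widetilde{p}$ is within $1/10$ of $p$. By the union bound, the set of inputs where at least one of these conditions fails has measure at most $\varepsilon + \delta$, giving $\operatorname{D}_{\varepsilon+\delta}(f) \leq 2^{O(T)}/\delta^{4}$ as desired.

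There is no real obstacle here; the argument is a direct translation of Corollary \ref{dqcor}, with the quantitative loss coming entirely from the exponential-in-$d$ dependence in Theorem \ref{dfkothm0} (which is exactly why we get $2^{O(T)}$ rather than $\operatorname{poly}(T)$). The only mild bookkeeping is choosing the approximation parameter in Corollary \ref{nonadap} to be a constant strictly less than $1/6$, so that thresholding at $1/2$ is robust against the bounded error of $Q$.
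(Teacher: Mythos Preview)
Your proposal is correct and follows essentially the same approach as the paper: apply Corollary \ref{nonadap} with a constant $\alpha<1/6$ to the $\varepsilon$-approximate quantum algorithm, threshold $\widetilde{p}(X)$ at $1/2$, and use the union bound. The paper's proof is just a two-sentence sketch of exactly this argument.
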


\begin{proof}
Set $\alpha$\ to any constant less than $1/6$, then use the algorithm of
Corollary \ref{nonadap}\ to simulate the $\varepsilon$-approximate\ quantum
algorithm for $f$. \ Output $f(X)=1$\ if\ $\widetilde{p}(X)\geq\frac{1}{2}%
$\ and $f(X)=0$\ otherwise.
\end{proof}

Given an oracle $A$, let $\mathsf{BQP}^{A\left[  \log\right]  }$\ be the class
of languages decidable by a $\mathsf{BQP}$\ machine able to make $O\left(
\log n\right)  $\ queries to $A$. \ Also, let $\mathsf{AvgP}_{||}^{A}$\ be the
class of languages decidable, with probability $1-o\left(  1\right)  $\ over
$x\in\left\{  0,1\right\}  ^{n}$, by a $\mathsf{P}$\ machine able to make
$\operatorname*{poly}(n)$\ parallel (nonadaptive) queries to $A$. \ Then we
get the following unconditional variant of Theorem \ref{pbqpthm}.

\begin{theorem}
\label{bqplogthm}Suppose $\mathsf{P}=\mathsf{P}^{\mathsf{\#P}}$. \ Then
$\mathsf{BQP}^{A\left[  \log\right]  }\subset\mathsf{AvgP}_{||}^{A}$\ with
probability $1$ for a random oracle $A$.
\end{theorem}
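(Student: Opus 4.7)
The plan is to \emph{parallelize} the proof of Theorem \ref{pbqpthm}, substituting the adaptive simulation of Theorem \ref{impthm} with the nonadaptive, unconditional simulation of Corollary \ref{nonadap}. Fix $L \in \mathsf{BQP}^{A[\log]}$, decided by a polynomial-time quantum machine $Q$ making $T = O(\log n)$ oracle queries on inputs of length $n$. As in Theorem \ref{pbqpthm}, for each $x \in \{0,1\}^{n}$ the acceptance probability $p_{x}(B) := \Pr[Q^{B}(x)~\text{accepts}]$ depends only on a prefix $B$ of $A$ of size $N = 2^{\operatorname*{poly}(n)}$, and by Lemma \ref{bbcmwlem} is a multilinear polynomial of degree at most $2T = O(\log n)$ in the bits of $B$. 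Applying Theorem \ref{dfkothm0} with $\varepsilon := \alpha^{2}\delta^{2}$ for $\alpha := 1/6$ and $\delta := 1/n^{3}$ then yields a $K$-junta approximation $\widetilde{p}_{x}$, with $K = 2^{O(T)}/\varepsilon^{2} = \operatorname*{poly}(n)$, satisfying $\Pr_{B}[|\widetilde{p}_{x}(B) - p_{x}(B)| > 1/6] < 1/n^{3}$, exactly as in the proof of Corollary \ref{nonadap}.

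Next I would argue that under $\mathsf{P} = \mathsf{P}^{\mathsf{\#P}}$ this simulation runs in deterministic polynomial time with $\operatorname*{poly}(n)$-many parallel queries to $A$. On input $x$ the classical algorithm (i) computes a junta set $S_{x} \subseteq [N]$ of size $K$, (ii) queries the bits $B|_{S_{x}}$ in parallel, and (iii) evaluates $\widetilde{p}_{x}(B|_{S_{x}}) = \operatorname*{E}_{B': B'|_{S_{x}} = B|_{S_{x}}}[p_{x}(B')]$ and outputs $1$ iff this quantity is at least $1/2$. All of the quantities that enter the Dinur et al.\ construction---individual monomial coefficients of $p_{x}$, $L_{2}$-norms, influences, and the partial expectation in (iii)---can be computed to $\operatorname*{poly}(n)$ bits of precision at some fixed level of the counting hierarchy $\mathsf{CH}$, exactly as in the proof of Theorem \ref{pbqpthm}. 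Conducting the junta search via a $\mathsf{CH}$-bounded existential quantifier over $\operatorname*{poly}(n)$-bit descriptions of $S_{x}$ keeps step (i) in $\mathsf{CH}$ as well, so under the hypothesis $\mathsf{P} = \mathsf{P}^{\mathsf{\#P}}$ (and hence $\mathsf{P} = \mathsf{CH}$) every step collapses to $\mathsf{P}$.

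Finally I would convert the per-input oracle-failure bound into a per-oracle input-failure bound by the same Borel--Cantelli argument used in Theorem \ref{pbqpthm}. Let $\delta_{n}(A)$ denote the fraction of $x \in \{0,1\}^{n}$ on which the classical simulation disagrees with $Q^{A}(x)$. Since $\Pr_{A}[\text{fail at}~x] < 1/n^{3}$ for every fixed $x$, linearity of expectation gives $\operatorname*{E}_{A}[\delta_{n}(A)] < 1/n^{3}$, and Markov then yields $\Pr_{A}[\delta_{n}(A) > 1/n] < 1/n^{2}$, which is summable. Borel--Cantelli combined with hardwiring the behavior on finitely many exceptional input lengths then places $L$ in $\mathsf{AvgP}_{||}^{A}$ with probability $1$, and a countable union over $L \in \mathsf{BQP}^{A[\log]}$ finishes the proof. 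The main obstacle is step (i): one has to verify that the constructive Dinur et al.\ junta procedure can be phrased as a fixed-level $\mathsf{CH}$ predicate over $\operatorname*{poly}(n)$-bit descriptions of $S_{x}$, rather than requiring explicit enumeration of the $N = 2^{\operatorname*{poly}(n)}$ candidate variables, since otherwise even the collapse $\mathsf{P} = \mathsf{CH}$ would not suffice to identify the junta in polynomial time.
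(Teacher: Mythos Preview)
Your proposal is correct and follows essentially the same approach as the paper: replace the adaptive simulation of Theorem~\ref{impthm} by the nonadaptive junta-based simulation of Corollary~\ref{nonadap}, observe that $T=O(\log n)$ makes $K=\operatorname*{poly}(n)$, output the conditional expectation $p_{\mu}$ over the junta coordinates, argue that locating the junta set and evaluating $p_{\mu}$ lie in a fixed level of $\mathsf{CH}$, and finish with the same Markov/Borel--Cantelli step as in Theorem~\ref{pbqpthm}. The obstacle you flag---that the junta search must be expressible as a $\mathsf{CH}$ predicate over $\operatorname*{poly}(n)$-bit descriptions of $S_{x}$ rather than by brute enumeration of the $2^{\operatorname*{poly}(n)}$ candidate coordinates---is exactly the point the paper is implicitly invoking when it asserts that ``this problem is solvable in the counting hierarchy,'' and your formulation makes the requirement more explicit than the paper does.
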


\begin{proof}
The proof is essentially the same as that of Theorem \ref{pbqpthm}, except
that we use Corollary \ref{nonadap} in place of Conjecture \ref{infconj}. \ In
the proof of Corollary \ref{nonadap}, observe that the condition%
\[
\operatorname*{E}_{X\in\left\{  0,1\right\}  ^{N}}\left[  \left\vert
\widetilde{p}(X)-p(X)\right\vert \right]  \leq\alpha\delta
\]
implies%
\begin{equation}
\operatorname*{E}_{X\in\left\{  0,1\right\}  ^{N}}\left[  \left\vert p_{\mu
}(X)-p(X)\right\vert \right]  \leq\alpha\delta\label{eps2}%
\end{equation}
as well, where $p_{\mu}(X)$\ equals the mean of $p(Y)$\ over all inputs
$Y$\ that agree with $X$\ on $x_{i_{1}},\ldots,x_{i_{K}}$. \ Thus, given a
quantum algorithm that makes $T$ queries to an oracle string, the
computational problem that we need to solve boils down to finding a subset of
the oracle bits $x_{i_{1}},\ldots,x_{i_{K}}$\ such that\ $K=\frac{2^{O\left(
T\right)  }}{\alpha^{4}\delta^{4}}$\ and (\ref{eps2}) holds. \ Just like in
Theorem \ref{pbqpthm}, this problem is solvable in the counting hierarchy
$\mathsf{CH}=\mathsf{P}^{\mathsf{\#P}}\cup\mathsf{P}^{\mathsf{\#P}%
^{\mathsf{\#P}}}\cup\cdots$. \ So if we assume $\mathsf{P}=\mathsf{P}%
^{\mathsf{\#P}}$, then it is also solvable in $\mathsf{P}$.

In Theorem \ref{pbqpthm}, the conclusion we got was $\mathsf{BQP}^{A}%
\subset\mathsf{AvgP}^{A}$\ \ with probability $1$ for a random oracle $A$.
\ In our case, the number of classical queries $K$\ is exponential (rather
than polynomial) in the number of quantum queries $T$, so we only get
$\mathsf{BQP}^{A\left[  \log\right]  }\subset\mathsf{AvgP}^{A}$. \ On the
other hand, since the classical queries are nonadaptive, we can strengthen the
conclusion to $\mathsf{BQP}^{A\left[  \log\right]  }\subset\mathsf{AvgP}%
_{||}^{A}$.
\end{proof}

\section{Open Problems\label{OPEN}}

It would be nice to improve the $\operatorname*{R}(f)=O(\operatorname*{Q}%
(f)^{7}\operatorname*{polylog}\operatorname*{Q}(f))$\ bound for all symmetric
problems. \ As mentioned earlier, we conjecture that the right answer is
$\operatorname*{R}(f)=O(\operatorname*{Q}(f)^{2})$. \ In trying to improve our
lower bound, it seems best to avoid the use of \textsc{SetEquality}. \ After
all, it is a curious feature of our proof that, to get a lower bound for
symmetric problems, we need to reduce from the \textit{non}-symmetric
\textsc{SetEquality}\ problem!

Another problem is to remove the assumption $M\geq N$\ in our lower bound for
symmetric problems. \ Experience with related problems strongly suggests that
this can be done, but one might need to replace our chopping procedure by
something different.

We also conjecture that $\operatorname*{R}(f)\leq\operatorname*{Q}(f)^{O(1)}%
$\ for all partial functions $f$ that are symmetric \textit{only} under
permuting the inputs (and not necessarily the outputs). \ Proving this seems
to require a new approach. \ Another problem, in a similar spirit, is whether
$\operatorname*{R}(f)\leq\operatorname*{Q}(f)^{O(1)}$\ for all partial
functions $f:S\rightarrow\left\{  0,1\right\}  $ such that $S$ (i.e., the
promise on inputs) is symmetric, but $f$ itself need not be symmetric.

It would be interesting to reprove the $\operatorname*{R}(f)\leq
\operatorname*{Q}(f)^{O(1)}$\ bound using only the polynomial method, and not
the adversary method. \ Or, to rephrase this as a purely classical question:
for all $X=(x_{1},\ldots,x_{N})$ in $\left[  M\right]  ^{N}$, let $B_{X}$\ be
the $N\times M$ matrix whose $\left(  i,j\right)  ^{th}$\ entry is $1$\ if
$x_{i}=j$ and $0$ otherwise. \ Then given a set $S\subseteq\left[  M\right]
^{N}$\ and a function $f:S\rightarrow\left\{  0,1\right\}  $, let
$\widetilde{\deg}(f)$\ be the minimum degree of a real polynomial
$p:\mathbb{R}^{MN}\rightarrow\mathbb{R}$\ such that

\begin{enumerate}
\item[(i)] $0\leq p(B_{X})\leq1$ for all $X\in\left[  M\right]  ^{N}$, and

\item[(ii)] $\left\vert p(B_{X})-f(X)\right\vert \leq\frac{1}{3}$ for all
$X\in S$.
\end{enumerate}

\noindent Then is it the case that $\operatorname*{R}(f)\leq\widetilde{\deg
}(f)^{O\left(  1\right)  }$\ for all permutation-invariant functions $f$?

On the random oracle side, the obvious problem is to prove Conjecture
\ref{infconj}---thereby establishing that $\operatorname*{D}_{\varepsilon}(f)$
and $\operatorname*{Q}_{\delta}(f)$ are polynomially related, and all the
other consequences shown in Section \ref{RO}. \ Alternatively, one could look
for some technique that was tailored to polynomials $p$ that arise as the
acceptance probabilities of quantum algorithms. \ In this way, one could
conceivably solve $\operatorname*{D}_{\varepsilon}(f)$ versus
$\operatorname*{Q}_{\delta}(f)$\ and the other quantum problems, without
settling the general conjecture about bounded polynomials.

\section{Acknowledgments}

We thank Aleksandrs Belovs, Andy Drucker, Ryan O'Donnell, and Ronald de Wolf
for helpful discussions; Mark Zhandry for taking up our challenge to improve
the lower bound on $\operatorname*{Q}\left(  \text{\textsc{SetEquality}%
}\right)  $\ to the optimal $\Omega(N^{1/3})$; and Dana Moshkovitz for
suggesting a proof of Lemma \ref{bepsilon}. \ We especially thank Art\={u}rs
Ba\v{c}kurs, J\={a}nis Iraids, the attendees of the quantum computing reading
group at the University of Latvia, and the anonymous reviewers for their
feedback, and for catching some errors in earlier versions of this paper.

\bibliographystyle{plain}
\bibliography{thesis}

\section{Appendix: The Boolean Case\label{BOOLEAN}}

Given a partial Boolean function $f:\left\{  0,1\right\}  ^{N}\rightarrow
\left\{  0,1,\ast\right\}  $, call $f$ \textit{symmetric} if $f(X)$\ depends
only on the Hamming weight $\left\vert X\right\vert :=x_{1}+\cdots+x_{N}$.
\ For completeness, in this appendix we prove the following basic fact:

\begin{theorem}
\label{booleanthm}$\operatorname*{R}(f)=O(\operatorname*{Q}(f)^{2})$ for every
partial symmetric Boolean function $f$.
\end{theorem}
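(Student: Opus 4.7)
The plan is to derive the bound $R(f) = O(Q(f)^2)$ via the polynomial method together with a Bernstein-inequality argument in the Chebyshev angle, coupled with a simple sampling algorithm that estimates the Hamming weight $|X|$. The strategy is entirely standard; the appendix just assembles the pieces.

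First I would invoke Lemma~\ref{bbcmwlem}: the acceptance probability of an optimal quantum algorithm is a multilinear polynomial in $x_1, \ldots, x_N$ of degree at most $d := 2Q(f)$. Since $f$ depends only on $|X|$, Minsky--Papert symmetrization produces a univariate polynomial $q$ of degree $\leq d$ satisfying $0 \leq q(k) \leq 1$ for all $k \in \{0, \ldots, N\}$, with $q(k) \geq 2/3$ on $A := f^{-1}(1)$ and $q(k) \leq 1/3$ on $B := f^{-1}(0)$. We may assume $Q(f) \leq \sqrt{N}$ (else $R(f) \leq N = O(Q(f)^2)$ trivially), so by Ehlich--Zeller the bound $|q(x)| = O(1)$ extends from the integer lattice to all of $[0,N]$.

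Second, I would pass to the Chebyshev reparametrization $x = N(1 - \cos\theta)/2$, so that $\tilde{q}(\theta) := q(x)$ is an even trigonometric polynomial of degree $d$, bounded by $O(1)$ on $[0,\pi]$. Bernstein's inequality on trig polynomials gives $|\tilde{q}'(\theta)| = O(d)$, so integrating and using $|q(a) - q(b)| \geq 1/3$ for $a \in A$, $b \in B$ yields the key Chebyshev-angle gap
$$|\theta_a - \theta_b| = \Omega(1/d), \qquad \theta_k := \arccos(1 - 2k/N).$$

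Third, I would define the randomized algorithm: query $T = C d^2$ positions of $X$ uniformly at random with replacement, compute $\hat{k} = N \cdot (\text{fraction of ones})$, and output $f(k^{\ast})$, where $k^{\ast}$ is the integer in $A \cup B$ whose Chebyshev angle $\theta_{k^{\ast}}$ is closest to $\hat{\theta} := \arccos(1 - 2\hat{k}/N)$. Correctness reduces to showing $|\hat{\theta} - \theta_{|X|}| < \Omega(1/d)/2$ with constant probability; since $d\theta/dk = 1/\sqrt{k(N-k)}$, this corresponds to $|\hat{k} - |X|| \lesssim \sqrt{|X|(N-|X|)}/d$, which is exactly the guarantee of a multiplicative Chernoff bound when $T = \Theta(d^2)$.

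The main obstacle I foresee is the boundary analysis: for $|X|$ close to $0$ or $N$, the map $k \mapsto \theta_k$ is nonlinear and the naive additive Chernoff bound is too weak. I would handle this via the multiplicative Chernoff form (equivalently, Poisson-tail estimates), noting that when $|X| \leq N/d^2$, the problem reduces to distinguishing ``no ones'' from ``few ones,'' where classical detection takes $\Theta(N/|X|)$ queries and the corresponding Chebyshev-angle gap forces $|X| = \Omega(N/d^2)$ for any $b \in B$ close to an $a \in A \cap \{0\}$, recovering the desired $T = O(d^2)$ bound in this regime as well.
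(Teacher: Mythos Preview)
Your proposal is correct and follows a well-known alternative route to the paper's argument. Both proofs share the same two-part structure---a quantum lower bound showing that every pair $a\in f^{-1}(0)$, $b\in f^{-1}(1)$ must be ``far apart'' in an appropriate metric, and a classical sampling upper bound exploiting that separation---but they establish the lower bound differently.

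The paper proves the separation via Ambainis's adversary method (Theorem~\ref{ambthm}), obtaining $\operatorname*{Q}(f)=\Omega(\sqrt{bN}/(b-a))$ directly for any such pair with $a<b$ and $a\le N/2$; it then sets $\gamma := \max_{a,b} \sqrt{bN}/(b-a)$ and runs a sampler with $T=O(\gamma^2)$ queries, accepting if some $b\in f^{-1}(1)$ lies within $\sqrt{bN}/(3\gamma)$ of the empirical estimate. You instead go through the polynomial method: symmetrize, reparametrize by the Chebyshev angle, and invoke Bernstein's inequality to get $|\theta_a-\theta_b|=\Omega(1/d)$. This is precisely Paturi's argument, which the paper in fact mentions as an alternative to the adversary step but does not write out. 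Your Chebyshev-angle tolerance $\Omega(1/d)$ and the paper's tolerance $\sqrt{bN}/(3\gamma)$ are the same quantity, since $|\theta_a-\theta_b|\asymp |a-b|/\sqrt{b(N-b)}$ away from the endpoints.

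What each approach buys: the paper's adversary argument is shorter and sidesteps the Ehlich--Zeller and Bernstein machinery entirely; your route stays wholly within the polynomial method and makes the ``right'' metric (Chebyshev angle) explicit, which arguably clarifies why the sampling works uniformly across the range. Your acknowledged boundary concern is real but is handled by the multiplicative Chernoff bound exactly as you sketch; the paper glosses over the analogous issue with a one-line appeal to its Lemma~\ref{booleanrub}.
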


For \textit{total} symmetric Boolean functions, Theorem \ref{booleanthm}\ was
already shown by Beals et al.\ \cite{bbcmw}, using an approximation theory
result of Paturi \cite{paturi}. \ Indeed, in the total case one even has
$\operatorname*{D}(f)=O(\operatorname*{Q}(f)^{2})$. \ So the new twist is just
that $f$ can be partial.

Abusing notation, let $f\left(  k\right)  \in\left\{  0,1,\ast\right\}  $\ be
the value of $f$ on all inputs of Hamming weight $k$ (where as usual, $\ast
$\ means `undefined'). \ Then we have the following quantum lower bound:

\begin{lemma}
\label{booleanqlb}Suppose that $f\left(  a\right)  =0$\ and $f\left(
b\right)  =1$\ or vice versa, where $a<b$\ and $a\leq N/2$. \ Then
$\operatorname*{Q}(f)=\Omega\left(  \frac{\sqrt{bN}}{b-a}\right)  $.
\end{lemma}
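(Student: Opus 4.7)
The approach I would take is the polynomial method of Beals et al., combined with Minsky--Papert symmetrization, reducing the quantum lower bound to a classical approximation-theoretic inequality in the spirit of Paturi.

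First I would symmetrize. By Lemma~\ref{bbcmwlem}, the acceptance probability $p(X) = \Pr[Q\text{ accepts }X]$ of any $T$-query quantum algorithm computing $f$ is a real multilinear polynomial in $x_1, \ldots, x_N$ of degree at most $2T$. Since $f$ depends only on the Hamming weight $|X|$, the Minsky--Papert averaging trick gives a univariate polynomial $q$ of degree $d \le 2T$ with $0 \le q(k) \le 1$ for every integer $k \in \{0, 1, \ldots, N\}$ and $|q(a) - q(b)| \ge 1/3$. So the task reduces to lower-bounding $d$ for such a $q$.

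Next I would apply Paturi's discretized Bernstein-type inequality: for any degree-$d$ polynomial $q$ with $|q(k)| \le 1$ on $\{0, \ldots, N\}$, the consecutive differences satisfy
\[
|q(k+1) - q(k)| \le O\!\left(\frac{d}{\sqrt{(k+1)(N-k)}}\right).
\]
Telescoping gives $\tfrac{1}{3} \le |q(a) - q(b)| \le O\bigl(d \sum_{k=a}^{b-1} 1/\sqrt{(k+1)(N-k)}\bigr)$. I would bound the sum by its continuous analogue $\int_a^b dx/\sqrt{x(N-x)}$ (which via $x = (N/2)(1 - \cos\theta)$ equals $\arccos(1 - 2b/N) - \arccos(1 - 2a/N)$) and split on the location of $b$. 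When $b \le N/2$, one has $N - x \ge N/2$ throughout $[a, b]$, so the integral is $O((\sqrt{b} - \sqrt{a})/\sqrt{N}) = O((b-a)/\sqrt{bN})$ via $\sqrt{b} - \sqrt{a} = (b-a)/(\sqrt{b}+\sqrt{a}) \le (b-a)/\sqrt{b}$. When $b > N/2$, split the integral at $\lfloor N/2 \rfloor$ and apply the symmetric estimate on each half to obtain total $O((b-a)/N)$, which is at most $O((b-a)/\sqrt{bN})$ since $b \le N$. Either way, $1/3 \le O(d(b-a)/\sqrt{bN})$, giving $T \ge d/2 = \Omega(\sqrt{bN}/(b-a))$ as required.

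The main technical obstacle is using the correctly discretized form of Bernstein's inequality rather than its naive continuous version. The continuous bound $|q'(x)| \le Cd/\sqrt{x(N-x)}$ requires $|q|$ to be bounded on all of $[0, N]$, which one would naturally try to obtain from the integer bound via Coppersmith--Rivlin; but that passage costs a factor of $\exp(O(d^2/N))$, which becomes fatal precisely when $d \gg \sqrt{N}$ --- exactly the regime where the claimed bound $\sqrt{bN}/(b-a)$ can exceed $\sqrt{N}$ (e.g., when $a$ is just below $N/2$ and $b - a$ is small, recovering the $\Omega(N)$ lower bound for exact counting near the middle). Paturi's directly discretized inequality sidesteps this altogether and is what makes the argument go through uniformly.
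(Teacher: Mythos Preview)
Your argument via the polynomial method and Paturi's discretized Bernstein inequality is correct, and the paper itself explicitly notes this route as an alternative.  However, the paper's primary proof goes through Ambainis's adversary theorem (Theorem~\ref{ambthm}) rather than approximation theory: it takes $A$ and $B$ to be the strings of Hamming weight $a$ and $b$ respectively, sets $(X,Y)\in R$ iff $X\preceq Y$, reads off $q_{X,i}\le\frac{b-a}{N-a}$ and $q_{Y,i}\le\frac{b-a}{b}$ for each $i$ with $x_i\neq y_i$, and concludes $\operatorname{Q}(f)=\Omega\bigl(\sqrt{\tfrac{N-a}{b-a}\cdot\tfrac{b}{b-a}}\bigr)=\Omega\bigl(\tfrac{\sqrt{bN}}{b-a}\bigr)$ in two lines.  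The adversary proof is shorter and requires no case split on the position of $b$, whereas your polynomial-method proof has the advantage of lower-bounding the approximate degree $\widetilde{\deg}(f)$ directly (not just $\operatorname{Q}(f)$), and your closing remark about why the discretized rather than continuous Bernstein inequality is essential in the regime $d\gg\sqrt{N}$ is a genuine point that the adversary argument sidesteps entirely.
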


\begin{proof}
This follows from a straightforward application of Ambainis's adversary
theorem (Theorem \ref{ambthm}). \ Specifically, let $A,B\subseteq\left\{
0,1\right\}  ^{N}$\ be the sets of all strings of Hamming weights $a$\ and $b$
respectively, and for all $X\in A$\ and $Y\in B$, put $\left(  X,Y\right)  \in
R$\ if and only if $X\preceq Y$ (that is, $x_{i}\leq y_{i}$\ for all
$i\in\left[  N\right]  $). \ Then%
\[
\operatorname*{Q}\left(  f\right)  =\Omega\left(  \sqrt{\frac{N-a}{b-a}%
\cdot\frac{b}{b-a}}\right)  =\Omega\left(  \frac{\sqrt{bN}}{b-a}\right)  .
\]
Alternatively, this lemma can be proved using the approximation theory result
of Paturi \cite{paturi}, following Beals et al.\ \cite{bbcmw}.
\end{proof}

In particular, if we set $\beta:=\frac{b}{N}$\ and $\varepsilon:=\frac{b-a}%
{N}$,\ then $\operatorname*{Q}(f)=\Omega(\sqrt{\beta}/\varepsilon)$. \ On the
other hand, we also have the following randomized \textit{upper} bound, which
follows from a Chernoff bound (similar to Lemma \ref{sampling}):

\begin{lemma}
\label{booleanrub}Assume $\beta>\varepsilon>0$. \ By making $O(\beta
/\varepsilon^{2})$\ queries to an $N$-bit string $X$, a classical sampling
algorithm can estimate the fraction $\beta:=\left\vert X\right\vert /N$\ of
$1$ bits to within an additive error $\pm\varepsilon/3$, with success
probability at least $2/3$.
\end{lemma}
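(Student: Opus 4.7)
The plan is to use a simple two-stage adaptive sampling algorithm. In the first (crude) stage, we draw $U_{0} := \lceil C_{0}/\varepsilon\rceil$ indices $i_{1},\ldots,i_{U_{0}} \in [N]$ uniformly and independently, query their bits, and let $\tilde\beta$ be the empirical fraction of $1$'s. By the multiplicative Chernoff bound, for a large enough constant $C_{0}$ and using the hypothesis $\beta > \varepsilon$ (so that $U_{0}\beta \geq C_{0}$), we obtain $\tilde\beta \in [\beta/2,\, 2\beta]$ with probability at least, say, $0.9$, at a cost of only $O(1/\varepsilon)$ queries.

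In the second (refined) stage, we use $\tilde\beta$ to pick the sample size $U_{1} := \lceil C_{1}\tilde\beta/\varepsilon^{2}\rceil$ for a suitably large constant $C_{1}$, draw $U_{1}$ more i.i.d.\ uniform indices, and let $\hat\beta$ be the new empirical fraction of $1$'s. Conditioned on the stage-$1$ success event, $U_{1} \geq (C_{1}/2)\beta/\varepsilon^{2}$. Applying the multiplicative Chernoff bound with deviation $\delta\beta = \varepsilon/3$ (so that $\delta = \varepsilon/(3\beta) \leq 1/3$, which lies in the valid range thanks to $\beta > \varepsilon$), we get
\[
\Pr\!\left[\,|\hat\beta - \beta| > \varepsilon/3\,\right] \;\leq\; 2\exp\!\left(-\tfrac{U_{1}\varepsilon^{2}}{27\beta}\right) \;\leq\; 2\exp(-C_{1}/54),
\]
which is at most $0.1$ for $C_{1}$ large enough. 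A union bound over the two failure events gives the overall success probability $\geq 2/3$, and the total number of queries is $U_{0} + U_{1} = O(1/\varepsilon) + O(\beta/\varepsilon^{2}) = O(\beta/\varepsilon^{2})$, where in the last step we used the hypothesis $\beta > \varepsilon$ to absorb the first term into the second.

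There is no real obstacle: the argument is essentially a textbook Chernoff calculation, just as the authors indicate. The only conceptual point worth flagging is that one must use the \emph{multiplicative} form of the Chernoff bound, which exploits the small Bernoulli variance $\beta(1-\beta)$; a naive Hoeffding-style bound would only give the weaker sample complexity $O(1/\varepsilon^{2})$. Since the optimal non-adaptive sample size $O(\beta/\varepsilon^{2})$ depends on the unknown $\beta$, adaptivity is required, and the two-stage scheme above is the simplest way to achieve it.
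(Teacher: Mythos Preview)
Your proof is correct. The paper itself does not spell out a proof of this lemma at all; it merely says the statement ``follows from a Chernoff bound (similar to Lemma~\ref{sampling}),'' so your argument is in fact more detailed than anything in the paper.

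The one difference worth noting is interpretive rather than mathematical. You read the lemma as asserting the existence of a self-contained algorithm whose query complexity is $O(\beta/\varepsilon^{2})$ for the \emph{unknown} $\beta$, and you correctly observe that this forces some adaptivity, which your two-stage scheme supplies. The paper, by contrast, uses the lemma only in a setting where the caller already has an externally chosen sample size $T=O(\gamma^{2})$ that happens to dominate $\beta/\varepsilon^{2}$ for every relevant pair $(\beta,\varepsilon)$; under that reading a single multiplicative Chernoff bound on $T$ i.i.d.\ samples suffices and no second stage is needed. Both readings lead to the same conclusion, and your version is the more robust one if the lemma is taken in isolation.
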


Thus,\ assume the function $f$ is non-constant, and let%
\begin{equation}
\gamma:=\max_{f\left(  a\right)  =0,f\left(  b\right)  =1}\frac{\sqrt{bN}%
}{b-a}. \label{gamma}%
\end{equation}
Assume without loss of generality that the maximum of (\ref{gamma}) is
achieved when $a<b$\ and $a\leq N/2$, if necessary by applying the
transformations $f(X)\rightarrow1-f(X)$\ and $f(X)\rightarrow f(N-X)$.\ \ Now
consider the following randomized algorithm to evaluate $f$, which makes
$T:=O(\gamma^{2})$\ queries:\bigskip

\texttt{\qquad Choose indices }$i_{1},\ldots,i_{T}\in\left[  N\right]
$\texttt{\ uniformly at random with replacement}

\texttt{\qquad Query }$x_{i_{1}},\ldots,x_{i_{T}}$

\texttt{\qquad Set }$k:=\frac{N}{T}(x_{i_{1}}+\cdots+x_{i_{T}})$

\texttt{\qquad If there exists a }$b\in\left\{  0,\ldots,N\right\}
$\texttt{\ such that }$f\left(  b\right)  =1$\texttt{\ and }$\left\vert
k-b\right\vert \leq\frac{\sqrt{bN}}{3\gamma}$

\texttt{\qquad\qquad output }$f(X)=1$

\texttt{\qquad Otherwise output }$f(X)=0\bigskip$

By Lemma \ref{booleanrub}, the above algorithm succeeds with probability at
least $2/3$, provided we choose $T$\ suitably large. \ Hence
$\operatorname*{R}(f)=O(\gamma^{2})$. \ On the other hand, Lemma
\ref{booleanqlb} implies that $\operatorname*{Q}(f)=\Omega(\gamma)$. \ Hence
$\operatorname*{R}(f)=O(\operatorname*{Q}(f)^{2})$, completing the proof of
Theorem \ref{booleanthm}.

\section{Appendix: $1$-Norm versus $2$-Norm\label{NORMS}}

As mentioned in Section \ref{SUBSEQ}, in the original version of this paper we
stated Conjecture \ref{infconj}, and all our results assuming it, in terms of
$L_{1}$-influences rather than $L_{2}$-influences. \ Subsequently, Arturs
Ba\v{c}kurs discovered a gap in our $L_{1}$-based\ argument. \ In recent work,
Ba\v{c}kurs and Bavarian \cite{bacbav} managed to fill the gap, allowing our
$L_{1}$-based\ argument to proceed. \ Still, the \textit{simplest} fix for the
problem Ba\v{c}kurs\ uncovered is just to switch from $L_{1}$-influences to
$L_{2}$-influences, so that is what we did in Section \ref{RO}\ (and in our
current statement of Conjecture \ref{infconj}).

Fortunately, it turns out that the $L_{1}$\ and $L_{2}$\ versions of
Conjecture \ref{infconj} are \textit{equivalent}, so making this change does
not even involve changing our conjecture. \ For completeness, in this appendix
we prove the equivalence of the $L_{1}$\ and $L_{2}$\ versions of Conjecture
\ref{infconj}.

As usual, let $p:\left\{  0,1\right\}  ^{N}\rightarrow\left[  0,1\right]
$\ be a real polynomial,\ let $X\in\left\{  0,1\right\}  ^{N}$, and let
$X^{i}$\ denote $X$\ with the $i^{th}$\ bit flipped. \ Then the $L_{1}%
$-\textit{variance} $\operatorname*{Vr}\left[  p\right]  $ of $p$ and the
$L_{1}$-\textit{influence} $\operatorname*{Inf}\nolimits_{i}\left[  p\right]
$\ of the $i^{th}$\ variable $x_{i}$ are defined as follows:%
\begin{align*}
\operatorname*{Vr}\left[  p\right]   &  :=\operatorname*{E}_{X\in\left\{
0,1\right\}  ^{N}}\left[  \left\vert p(X)-\operatorname*{E}\left[  p\right]
\right\vert \right]  ,\\
\operatorname*{Inf}\nolimits_{i}^{1}\left[  p\right]   &  :=\operatorname*{E}%
_{X\in\left\{  0,1\right\}  ^{N}}\left[  \left\vert p(X)-p(X^{i})\right\vert
\right]  .
\end{align*}
The $L_{1}$\ analogue of Conjecture \ref{infconj} simply replaces
$\operatorname*{Var}\left[  p\right]  $\ by $\operatorname*{Vr}\left[
p\right]  $\ and $\operatorname*{Inf}\nolimits_{i}\left[  p\right]  $\ by
$\operatorname*{Inf}\nolimits_{i}^{1}\left[  p\right]  $:

\begin{conjecture}
[Bounded Polynomials Have Influential Variables, $L_{1}$ Version]%
\label{infconj1}Let $p:\mathbb{R}^{N}\rightarrow\mathbb{R}$\ be a degree-$d$
real polynomial such that $0\leq p(X)\leq1$\ for all $X\in\left\{
0,1\right\}  ^{N}$. \ Then there exists an $i\in\left[  N\right]  $ such that
$\operatorname*{Inf}\nolimits_{i}^{1}\left[  p\right]  \geq(\operatorname*{Vr}%
\left[  p\right]  /d)^{O\left(  1\right)  }$.
\end{conjecture}

We now prove the equivalence:

\begin{proposition}
\label{equivprop}Conjectures \ref{infconj} and \ref{infconj1}\ are equivalent.
\end{proposition}

\begin{proof}
First assume Conjecture \ref{infconj}. \ By the Cauchy-Schwarz inequality,%
\[
\operatorname*{Inf}\nolimits_{i}\left[  p\right]  =\operatorname*{E}%
_{X\in\left\{  0,1\right\}  ^{N}}\left[  (p(X)-p(X^{i}))^{2}\right]
\geq\left(  \operatorname*{E}_{X\in\left\{  0,1\right\}  ^{N}}\left[
\left\vert p(X)-p(X^{i})\right\vert \right]  \right)  ^{2}=\operatorname*{Inf}%
\nolimits_{i}^{1}\left[  p\right]  ^{2}.
\]
Also, since $p(X)\in\left[  0,1\right]  $,%
\[
\operatorname*{Vr}\left[  p\right]  =\operatorname*{E}_{X\in\left\{
0,1\right\}  ^{N}}\left[  \left\vert p(X)-\operatorname*{E}\left[  p\right]
\right\vert \right]  \geq\operatorname*{E}_{X\in\left\{  0,1\right\}  ^{N}%
}\left[  (p(X)-\operatorname*{E}\left[  p\right]  )^{2}\right]
=\operatorname*{Var}\left[  p\right]  .
\]
Hence there exists an $i\in\left[  N\right]  $\ such that%
\[
\operatorname*{Inf}\nolimits_{i}\left[  p\right]  \geq\operatorname*{Inf}%
\nolimits_{i}^{1}\left[  p\right]  ^{2}\geq\left(  \frac{\operatorname*{Vr}%
\left[  p\right]  }{d}\right)  ^{O\left(  1\right)  }\geq\left(
\frac{\operatorname*{Var}\left[  p\right]  }{d}\right)  ^{O\left(  1\right)  }%
\]
and Conjecture \ref{infconj1}\ holds.

Likewise, assume Conjecture \ref{infconj1}. \ Then we have
$\operatorname*{Inf}\nolimits_{i}^{1}\left[  p\right]  \geq\operatorname*{Inf}%
\nolimits_{i}\left[  p\right]  $\ since $p(X)\in\left[  0,1\right]  $, and
$\operatorname*{Var}\left[  p\right]  \geq\operatorname*{Vr}\left[  p\right]
^{2}$ by the Cauchy-Schwarz inequality. \ Hence there exists an $i\in\left[
N\right]  $\ such that%
\[
\operatorname*{Inf}\nolimits_{i}^{1}\left[  p\right]  \geq\operatorname*{Inf}%
\nolimits_{i}\left[  p\right]  \geq\left(  \frac{\operatorname*{Var}\left[
p\right]  }{d}\right)  ^{O\left(  1\right)  }\geq\left(  \frac
{\operatorname*{Vr}\left[  p\right]  }{d}\right)  ^{O\left(  1\right)  }%
\]
and Conjecture \ref{infconj}\ holds.
\end{proof}

\section{\label{EQUIV}Appendix: Equivalent Form of Conjecture
\ref{folkloreconj}}

Recall Conjecture \ref{folkloreconj}, which said (informally) that any quantum
algorithm that makes $T$ queries to $X\in\left\{  0,1\right\}  ^{N}$ can be
simulated to within $\pm\varepsilon$\ additive error on a $1-\delta$\ fraction
of $X$'s by a classical algorithm that makes $\operatorname*{poly}%
(T,1/\varepsilon,1/\delta)$\ queries. \ In Section \ref{FORMAL}, we claimed
that Conjecture \ref{folkloreconj}\ was equivalent to an alternative
conjecture, which we now state more formally:

\begin{conjecture}
\label{folkloreconj2}Let $S\subseteq\left\{  0,1\right\}  ^{N}$\ with
$\left\vert S\right\vert \geq c2^{N}$, and let $f:S\rightarrow\left\{
0,1\right\}  $. \ Then there exists a deterministic classical algorithm that
makes $\operatorname*{poly}(\operatorname*{Q}(f),1/\alpha,1/c)$\ queries, and
that computes $f(X)$\ on at least a $1-\alpha$\ fraction of $X\in S$.
\end{conjecture}

In this appendix, we justify the equivalence claim. \ We first need a simple
combinatorial lemma.

\begin{lemma}
\label{bepsilon}Suppose we are trying to learn an unknown real $p\in\left[
0,1\right]  $. \ There are $k$ \textquotedblleft hint bits\textquotedblright%
\ $h_{1},\ldots,h_{k}$, where each $h_{i}$\ is $0$ if $\left(  i-1\right)
/k\leq p$ or $1$\ if $i/k\geq p$ (and can otherwise be arbitrary). \ However,
at most $b<k/2$ of the $h_{i}$'s are then corrupted by an adversary, producing
the new string $h_{1}^{\prime},\ldots,h_{k}^{\prime}$. \ Using $h_{1}^{\prime
},\ldots,h_{k}^{\prime}$, one can still determine $p$\ to within additive
error $\pm\left(  b+1\right)  \varepsilon$.
\end{lemma}

\begin{proof}
Given the string $h^{\prime}=\left(  h_{1}^{\prime},\ldots,h_{k}^{\prime
}\right)  $, we apply the following correction procedure: we repeatedly search
for pairs $i<j$\ such that $h_{i}^{\prime}=1$\ and $h_{j}^{\prime}=0$, and
\textquotedblleft delete\textquotedblright\ those pairs (that is, we set
$h_{i}^{\prime}=h_{j}^{\prime}=\ast$, where $\ast$\ means \textquotedblleft
unknown\textquotedblright). \ We continue for $t$ steps, until no more such
pairs exist. \ Next, we delete the rightmost $b-t$\ zeroes in $h^{\prime}%
$\ (replacing them with $\ast$'s), and likewise delete the leftmost
$b-t$\ ones.\ \ Finally, as our estimate for $p$, we output%
\[
q:=\frac{i^{\ast}+j^{\ast}-1}{2k},
\]
where $i^{\ast}$\ is the index of the rightmost $0$ remaining in $h^{\prime}%
$\ (or $i^{\ast}=0$ if no $0$'s remain), and $j^{\ast}$ is the index of the
leftmost $1$ remaining (or $j^{\ast}=k+1$\ if no $1$'s\ remain).

To show correctness: every time we find an $i<j$ pair such that $h_{i}%
^{\prime}=1$\ and $h_{j}^{\prime}=0$, at least one of $h_{i}^{\prime}$\ and
$h_{j}^{\prime}$\ must have been corrupted by the adversary. \ It follows that
$t\leq b$, where $t$ is the number of deleted pairs. \ Furthermore, after the
first stage finishes, every $1$ is to the right of every $0$, at most
$b-t$\ of the remaining bits are corrupted, and the bits that \textit{are}
corrupted must be among the rightmost zeroes of the leftmost ones (or both).
\ Hence, after the second stage finishes, every $h_{i}^{\prime}=0$\ reliably
indicates that $p\geq\left(  i-1\right)  /k$, and every $h_{j}^{\prime}%
=1$\ reliably indicates that $p\leq j/k$. \ Moreover, since only $2b$\ bits
are deleted in total, we must have $j^{\ast}-i^{\ast}\leq2b+1$, where
$i^{\ast}$ and $j^{\ast}$ are as defined above. \ It follows that $\left\vert
p-q\right\vert \leq\left(  b+1\right)  \varepsilon$.
\end{proof}

\begin{theorem}
\label{equivthm2}Conjectures \ref{folkloreconj}\ and \ref{folkloreconj2}\ are equivalent.
\end{theorem}

\begin{proof}
We start with the easy direction, that Conjecture \ref{folkloreconj}\ implies
Conjecture \ref{folkloreconj2}. \ Given $f:S\rightarrow\left\{  0,1\right\}  $
with $\left\vert S\right\vert \geq c2^{N}$, let $Q$\ be a quantum algorithm
that evaluates $f$ with error probability at most $1/3$\ using $T$ queries.
\ Let $p(X)$\ be $Q$'s acceptance probability on a given input $X\in\left\{
0,1\right\}  ^{N}$\ (not necessarily in $S$). \ Then by Conjecture
\ref{folkloreconj}, there exists a deterministic classical algorithm that
approximates $p(X)$\ to within additive error $\pm\varepsilon$\ on a
$1-\delta$\ fraction of $X\in\left\{  0,1\right\}  ^{N}$\ using
$\operatorname*{poly}(T,1/\varepsilon,1/\delta)$\ queries. \ If we set (say)
$\varepsilon:=1/7$\ and $\delta:=\alpha c$, then such an approximation lets us
decide whether $f(X)=0$\ or $f(X)=1$\ for a $1-\alpha$\ fraction of $X\in S$,
using $\operatorname*{poly}(T,1/\alpha,1/c)$\ queries.

We now show the other direction, that Conjecture \ref{folkloreconj2} implies
Conjecture \ref{folkloreconj}. \ Let $Q$ be a $T$-query quantum algorithm, let
$p(X)$\ be $Q$'s acceptance probability on input $X$, and suppose we want to
approximate $p(X)$\ to within error\ $\pm\varepsilon$\ on at least a
$1-\delta$\ fraction of $X\in\left\{  0,1\right\}  ^{N}$. \ Let $\epsilon
:=\varepsilon/3$. \ Assume for simplicity that $\epsilon$\ has the form
$1/k$\ for some positive integer $k$; this will have no effect on the
asymptotics. \ For each $j\in\left[  k\right]  $, let%
\[
S_{j}:=\left\{  X:p(X)\leq\frac{j-1}{k}\text{ or }p(X)\geq\frac{j}{k}\right\}
,
\]
and define the function $f_{j}:S_{j}\rightarrow\left\{  0,1\right\}  $\ by%
\[
f_{j}(X):=\left\{
\begin{array}
[c]{cc}%
0 & \text{if }p(X)\leq\left(  j-1\right)  /k\\
1 & \text{if }p(X)\geq j/k.
\end{array}
\right.
\]
By Proposition \ref{amplify}, we have $\operatorname*{Q}(f_{j})=O(kT)$ for all
$j\in\left[  k\right]  $. \ Also, note that%
\[
\operatorname*{E}_{j}\left[  \left\vert S_{j}\right\vert \right]  \geq\left(
1-\frac{1}{k}\right)  2^{n}.
\]
By Markov's inequality, this implies that there can be at most one
$j\in\left[  k\right]  $\ (call it $j^{\ast}$) such that $\left\vert
S_{j}\right\vert <2^{n-2}$. \ Likewise, note that for every $X\in\left\{
0,1\right\}  ^{N}$, there is at most one $j\in\left[  k\right]  $\ such that
$X\notin S_{j}$.

Together with Conjecture \ref{folkloreconj2}, the above facts imply that, for
all $j\neq j^{\ast}$ and $\alpha>0$, there exists a deterministic classical
algorithm $A_{j,\alpha}$, making $\operatorname*{poly}(T,1/\alpha)$\ queries,
that computes $f_{j}(X)$\ on at least a $1-\alpha$\ fraction of all $X\in
S_{j}$. \ Suppose we run $A_{j,\alpha}$\ for all $j\neq j^{\ast}$. \ Then by
the union bound, for at least a $1-k\alpha$\ fraction of $X\in\left\{
0,1\right\}  ^{N}$, there can be at most two $j\in\left[  k\right]  $\ such
that $A_{j,\alpha}$\ fails to compute $f_{j}(X)$: namely, $j^{\ast}$, and the
unique $j$ (call it $j^{\prime}$) such that $X\notin S_{j^{\prime}}$. \ Thus,
suppose $A_{j,\alpha}$\ succeeds for all $j\notin\left\{  j^{\ast},j^{\prime
}\right\}  $. \ By Lemma \ref{bepsilon}, this implies that $p(X)$\ has been
determined up to an additive error of $\pm3\epsilon=\pm\varepsilon$. \ Hence,
we simply need to set $\alpha:=\delta/k$, in order to get a classical
algorithm that makes $k\cdot\operatorname*{poly}(T,k/\delta
)=\operatorname*{poly}(T,1/\varepsilon,1/\delta)$ queries, and that
approximates $p(X)$\ up to additive error $\pm\varepsilon$\ for at least
a\ $1-\delta$\ fraction of $X\in\left\{  0,1\right\}  ^{N}$.
\end{proof}

\end{document}